\newcommand{\fins}{\mathit{fins}}
\newcommand{\st}{\mathit{star}}
\newcommand{\shrink}{\mathit{shrink}}
\newcommand{\apex} {\mathit{apex}}
\newcommand{\tail} {\mathit{tail}}
\newcommand{\conv} {\mathrm{conv}}
\newcommand{\cancel}[1] {}
\newenvironment{emromani}
  {
   
   \begin{enumerate}}
  {\end{enumerate}}
\newenvironment{romani}
{
	
	\begin{enumerate}}
	{\end{enumerate}}
\newtheorem{definition}{Definition}[section]
\newtheorem{theorem}{Theorem}[section]
\newtheorem{lemma}{Lemma}[section]
\newbox\ProofSym
\begin{document}

\title{Adaptive Planar Point Location\footnote{A preliminary version appears in Proceedings of International Symposium on Computational Geometry, 2017~\cite{cheng2017}.  Research supported by Research Grants Council, Hong Kong, China (project no.~16201116).}}

\author{Siu-Wing Cheng\footnote{Department of Computer Science and Engineering,
HKUST, Clear Water Bay, Hong Kong. Email:{\tt \{scheng,lmkaa\}@cse.ust.hk}}
\and 
Man-Kit Lau\footnotemark[2]}

\date{}

\maketitle

\begin{abstract}
We present self-adjusting data structures for answering point location queries in convex and connected subdivisions.  Let $n$ be the number of vertices in a convex or connected subdivision.  Our structures use $O(n)$ space.  For any convex subdivision $S$, our method processes any online query sequence $\sigma$ in $O(\mathrm{OPT} + n)$ time, where {\rm OPT} is the minimum time required by any linear decision tree for answering point location queries in $S$ to process $\sigma$.  For connected subdivisions, the processing time is $O(\mathrm{OPT} + n + |\sigma|\log(\log^* n))$.  In both cases, the time bound includes the $O(n)$ preprocessing time.
\end{abstract}

\section{Introduction}

Planar point location is a fundamental problem in computational geometry that
has been studied extensively.  It calls for preprocessing a \emph{planar
subdivision} into a data structure so that for any query point, the region in
the subdivision that contains the query point can be reported.  In this paper, we are concerned with point location using point-line comparisons, which is a very popular model in the literature.  There are several common types of planar subdivisions.  A subdivision is \emph{convex} if every bounded region is a convex polygon and the outer boundary bounds a convex polygon.  A subdivision is \emph{connected} if every bounded region is a simple polygon and the outer boundary bounds a simple polygon.   A subdivision is \emph{general} if every region is a polygon possibly with holes.

Given a general subdivision with $n$ vertices, point location structures
with worst-case $O(\log n)$ query time, $O(n\log n)$ preprocessing time, and
$O(n)$ space are known~\cite{paper:Adamy1998,paper:Edelsbrunner1986,paper:Kirkpatrick1981,paper:Sarnak1986}.  For connected subdivisions, the preprocessing time can be reduced to $O(n)$~\cite{paper:Kirkpatrick1981} after triangulating every region in linear time~\cite{chazelle91}.

When processing a sequence of query points that fall into different regions with vastly different frequencies, one may consider objectives other than minimizing the worst-case query time to answer a single query.  One scenario is that the probability $p_r$ of the query point falling into each region $r \in S$ is given, so one may want to minimize the expected query time.  For subdivisions with regions of constant sizes, Arya, Malamatos, and Mount~\cite{paper:Arya2007a} and Iacono~\cite{paper:Iacono2004} proposed point location structures that use $O(n)$ space and  answer a query in $O(H)$ expected time, where $H$ denotes the entropy $\sum_{r \in S} p_r\log(1/p_r)$.  Later, Arya, Malamatos, Mount,
and Wong~\cite{paper:Arya2007b} improved the expected query time to $H +
O(\sqrt{H})$.  

Although the entropy is a lower bound for the expected query time according to Shannon's theory~\cite{shannon48}, it is a very weak lower bound when some regions have non-constant sizes.  Arya, Malamatos, Mount, and Wong~\cite{paper:Arya2007b} showed a convex polygon of $n$ sides and a query distribution such that a query point lies in the polygon with probability 1/2 and the expected number of point-line comparisons needed to decide whether a query point lies in the polygon is $\Omega(\log n)$.  Note that the entropy is only a constant.  To strengthen the lower bound, subsequent works consider linear decision trees for answering point location queries.  We call them \emph{point location linear decision trees} for convenience.  Each decision tree node contains a linear function which exactly models a point-line comparison.  Given a connected subdivision $S$ with $n$ vertices and the query distribution, Collete et al.~\cite{paper:Collette2012} designed a structure that uses $O(n)$ space and answers a query in $O(H^*)$ expected time, where $H^*$ is the minimum expected time needed by any point location linear decision tree for $S$.  When the query distribution is given, Afshani, Barbay, and Chan~\cite{paper:Afshani2015} and Bose et al.~\cite{paper:Bose2013} solved several geometric query problems, including planar point location in general subdivisions, with optimal expected performance with respect to linear decision trees.

A natural question arises: if no information about the query distribution is given beforehand, is there a way to process an online query sequence in time $o(\log n)$ times the sequence length?  (A query must be answered before the next query is processed.)  Although this is not always feasible by the example of Arya, Malamatos, Mount, and Wong that we mentioned previously, it may still be possible to achieve a running time that adapts to the inherent difficulty of a particular input.

A prime example in one dimension is the splay tree by Sleator and Tarjan~\cite{sleator85} for storing an ordered set of values.  Starting with an empty splay tree, any online sequence $\sigma$ of operations can be processed in $O(\sum_{v} p_v \log (|\sigma|/p_v))$ time, where the sum is over all values in the set and $p_v$ denotes the frequency of $v$ being accessed in $\sigma$.  The operations in $\sigma$ may include query, insert, and delete.  This result is known as the Static Optimality Theorem~\cite{sleator85}.  Note that $\sum_{v} p_v \log (|\sigma|/p_v)$ is the minimum time needed to process $\sigma$ by any static binary search tree.

For planar point location, Iacono and Mulzer~\cite{paper:Iacono2011} proposed a solution for triangulations.  Given a triangulation $T$ with $n$ vertices, their solution uses $O(n)$ space and processes any online query sequence $\sigma$ in $O(n + \sum_{t \in T} p_t \log (|\sigma|/p_t))$ time, where the sum is over all triangles in $T$ and $p_t$ denotes the frequency of a triangle $t$ being hit by a query point in $\sigma$.  The time bound already includes the time for preprocessing before handling the first query in $\sigma$.
Note that $\sum_t p_t \log (|\sigma|/p_t)$ is an information-theoretic lower bound for processing $\sigma$.  Recently, we designed a self-adjusting point location structure for convex subdivisions~\cite{cheng2015adaptive}: given a convex subdivision $S$, our structure uses $O(n)$ space and processes any online query sequence $\sigma$ in
$O(\mathrm{OPT} + n + |\sigma|\log\log n)$ time, where {\rm OPT} is the minimum
time needed by any point location linear decision tree for $S$ to process $\sigma$.

In this paper, we 
develop a self-adjusting point location structure that can process any online query sequence in $O(\mathrm{OPT} + n)$ time, where $n$ is the number of vertices in the input convex subdivision.  The time bound includes the $O(n)$ preprocessing time.  The space usage is $O(n)$.  Our result can be used in nearest neighbor searching as Voronoi diagrams are convex subdivisions.  Our technique extends to connected subdivisions, giving a running time of $O(\mathrm{OPT} + n + |\sigma|\log(\log^* n))$, where $\log^* n$ denotes the number of times the logarithm function is applied iteratively before the ouptut is less than or equal to one.  The function $\log(\log^* n)$ grows extremely slowly and it is no more than a small constant for practical values of $n$.  Again, the $O(n)$ preprocessing time is included and the space usage is $O(n)$.

We first present in Section~\ref{sec:hierarchy} a generic hierarchy of point location structures for processing an online query sequence adaptively.  The construction of the hierarchy depends on some parameters and algorithms to be supplied as input.  We analyze the performance of the hierarchy in terms of these parameters.  Then, we discuss how to design these parameters and algorithms for convex and connected subdivisions in order to obtain the claimed running times.

\section{Hierarchy of point location structures}
\label{sec:hierarchy}

If the query distribution is fixed and given, the best point location linear decision tree for a subdivision has the same asymptotic performance as the best point location linear decision tree for an ``optimal'' triangulation of that subdivision~\cite{paper:Collette2012}.  Therefore, an intuitive solution approach is to estimate the query distribution using the current access frequencies, optimally triangulate the subdivision based on this estimation as in~\cite{paper:Collette2012}, and then apply the result of Iacono and Mulzer~\cite{paper:Iacono2011}.  We state the result in~\cite{paper:Iacono2011} below for future reference. 

\begin{theorem}[\cite{paper:Iacono2011}] \label{thm:iacono}
	For any planar triangulation $T$ with $n$ vertices, there is a point-line comparison based data structure that uses $O(n)$ space and processes any online query sequence $\sigma$ in $O\left( \sum_{t \in T} p_t \log (|\sigma|/p_t) + n \right)$ time, where $p_t$ is the number of query points that fall into the triangle~$t$.  The time bound includes the $O(n)$ preprocessing time, and it matches the information-theoretic lower bound for processing $\sigma$.
\end{theorem}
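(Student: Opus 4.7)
The plan is to recover the Iacono--Mulzer structure, whose two ingredients are a Kirkpatrick-style hierarchy over $T$ and a biased-descent mechanism driven by observed access counts. First I would build a sequence $T_0=T, T_1, \ldots, T_k$ with $k=O(\log n)$, where $T_{i+1}$ is produced from $T_i$ by removing a maximal independent set of low-degree vertices and re-triangulating the resulting holes. This hierarchy has total size $O(n)$, is constructible in $O(n)$ time, and has the crucial locality property that each triangle of $T_{i+1}$ overlaps only $O(1)$ triangles of $T_i$. Plain Kirkpatrick point location on this hierarchy answers any query in $O(\log n)$ worst-case time.

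To achieve the entropy-type bound, I would maintain for every triangle $t\in T$ an access counter $w_t$ that is updated after each query, and partition the hierarchy into $O(\log n)$ levels corresponding to geometric weight classes, so that a triangle with current weight $w_t$ sits at level roughly $\lceil \log(L/w_t)\rceil$, where $L$ is the length of the current online prefix. Descending from the top level down to the level of the triangle containing $q$ then costs $O(\log(L/w_t))$ per query; within each level the $O(1)$-overlap property keeps the per-level work constant. The subtle point is that $L$ and the $w_t$ grow online, so the levels must migrate. I would manage this with an \emph{epoch} schedule of geometrically doubling length: at the end of each epoch the hierarchy is rebuilt from scratch using the observed frequencies as weights via the optimal static biased construction, which costs $O(n)$ per rebuild. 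Because epoch $j$ has length at least $2^j$ and (after the first few) contains $\Omega(n)$ queries, the rebuild cost amortises to $O(1)$ per query plus a single $O(n)$ charge absorbed into the additive $n$ of the bound.

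The main obstacle is the amortised analysis across epochs: one must show that the frequencies observed in the previous epoch are a good enough surrogate for the true frequencies in the current epoch, so that the per-query descent cost $\log(L/w_t)$ is within a constant factor of the true $\log(|\sigma|/p_t)$. I would handle this with a potential function of the form $\Phi=\sum_t w_t \log(L/w_t)$ and a standard argument comparing consecutive epochs: any triangle whose frequency changes significantly pays for its own relocation, while the aggregate cost is dominated by $\sum_t p_t\log(|\sigma|/p_t)$ by convexity (Kraft/Shannon inequality). The matching information-theoretic lower bound is immediate from Shannon's source coding theorem, since any point-line comparison tree solving point location must in particular identify the region containing~$q$, and the expected code length to transmit that identity is at least $\sum_t p_t\log(|\sigma|/p_t)$ up to lower-order terms.
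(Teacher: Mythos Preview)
This theorem is not proved in the paper at all: it is quoted verbatim from Iacono and Mulzer~\cite{paper:Iacono2011} as a black box (``We state the result in~\cite{paper:Iacono2011} below for future reference''), and the paper supplies no argument for it. There is therefore no ``paper's own proof'' to compare your proposal against.

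As a side remark on your sketch of the underlying construction: the high-level ingredients you name (a linear-size hierarchy with $O(1)$ overlap between levels, access counters, and geometrically doubling epochs with full rebuilds) are indeed the ones Iacono and Mulzer use. Your description of the amortisation is somewhat loose---the actual argument does not need a potential function of the form $\sum_t w_t\log(L/w_t)$ or a convexity comparison between epochs; rather, within each epoch the structure is a fixed biased tree built from the \emph{previous} epoch's counts, and a direct entropy calculation (plus the doubling of epoch lengths) already gives the bound. But since the present paper treats the theorem as a citation, none of this is required here.
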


There are two difficulties with this approach.  First, as the access frequencies evolve, the subdivision has to be retriangulated efficiently.  Second, the access frequencies give imprecise information about the query distribution, so error will be induced by the ``optimal'' triangulation based on the access frequencies.  It is unclear whether the error is small enough for our purposes. 

On the other hand, we showed in~\cite{cheng2015adaptive} that some canonical triangulation method (independent of the access frequencies) can lower the average extra cost per query to $O(\log\log n)$.  Our insight is to \emph{recursively} extract frequently accessed triangles and generate a separate point location structure for them using this canonical triangulation method.  This results in a multi-level structure.  The structure at the highest level is queried first and if that fails, we move down the levels.  We devise an analysis that handles both successful and unsuccessful queries at each level.  Intuitively, the performance improves as the number of levels increases, and thus we circumvent the difficulty of computing an ``optimal'' triangulation.

We describe how the hierarchy is constructed and queried in Section~\ref{sec:alg}.  The construction depends on some algorithm and parameters to be supplied as input.  We analyze the performance of the hierarchy in terms of these algorithm and parameters in Section~\ref{sec:analysis}.  

\subsection{Construction and querying}
\label{sec:alg}

We assume that an algorithm ${\cal A}(f,g)$ is given for constructing the hierarchy according to certain specifications that we will explain later.  The parameter $f$ is a function from $\mathbb{N}$ to $\mathbb{N}$ and the parameter $g$ is a function from $\mathbb{N} \times \mathbb{N}$ to $\mathbb{N}$.  They also need to satisfy some conditions to be explained later.

\paragraph{Framework.}  Let $B_S$ be some triangle that ${\cal A}(f,g)$ computes to enclose $S$.  So $S$ induces a planar subdivision inside $B_S$, and we denote this subdivision by $S_{1,1}$.  Let $R_{1,1}$ denote the set of bounded regions in $S_{1,1}$.  
For convenience, we abuse the notation slightly and use $n$ to denote the number of vertices in $S_{1,1}$, which is three more than the number of vertices in $S$.  

Initially, ${\cal A}(f,g)$ constructs a triangulation $\Delta_{1,1}$ from $S_{1,1}$ in $O(n)$ time, and a point location structure $D_{1,1}$ for $\Delta_{1,1}$.  The structures $R_{1,1}$, $S_{1,1}$, $\Delta_{1,1}$, and $D_{1,1}$ form the bottommost layer of the hierarchy.  The second subscript is unnecessary at this layer because $R_{1,1}$, $S_{1,1}$, $\Delta_{1,1}$ and $D_{1,1}$ will not be replaced, but this is not the case for higher layers.  At the $i$-th layer of the hierarchy for any $i > 1$, we may construct new structures from time to time to replace old ones.  Nonetheless, in the analysis, we need to refer to different versions of these structures, so the second subscript $j$ serves as the version index.  A larger $j$ refers to a more recent version.  We use $*$ in place of $j$ to refer to the most recent version.

The first layer consists of $R_{1,1}$, $S_{1,1}$, $\Delta_{1,1}$ and $D_{1,1}$ as mentioned before.  For $i > 1$, the $j$-th version of structures at the $i$-th layer consist of the following:
\begin{itemize}
	\item $R_{i,j}$: A set of regions, each being a subset of a region in $R_{i-1,l}$ for some $l$.  So each region in $R_{i,j}$ is a subset of some region in $S_{1,1}$.  The boundaries of two adjacent regions are compatible, i.e., a vertex of a region does not lie in the interior of an edge of an adjacent region.
	\item $S_{i,j}$: A planar subdivision that consists of regions in $R_{i,j}$ and additional triangles introduced to connect these regions together.  The vertices of these additional triangles are also vertices in $R_{i,j}$.  The outer boundary of $S_{i,j}$ is $B_S$, and there is no hole in $S_{i,j}$.	
	\item $\Delta_{i,j}$: A triangulation of $S_{i,j}$ that has the same asymptotic complexity as $S_{i,j}$.  For every triangle $t \in \Delta_{i,j}$, if  $t$ lies in some region $r \in R_{i,j}$, then $t$ stores the id of the region in $S_{1,1}$ that contains $r$.
	\item $D_{i,j}$: A point location structure for $\Delta_{i,j}$.
\end{itemize}

Let $P_i$ denote the total time to construct a version of structures at the $i$-th layer.  The algorithm ${\cal A}(f,g)$ is required to satisfy the conditions C1--C5 in Table~\ref{tb:C}.  Since $g(n,1) \geq 1$, C1 and C5 imply that $c_1 \geq c_2+2$.  The gap between $c_1$ and $c_2$ is actually larger when we appy the framework to convex and connected subdivisions.

\begin{table}
	\centering
	\renewcommand{\arraystretch}{1.3}
	\begin{tabular}{|c|p{5in}|}
	\hline 
	C1 & $f(k) = (\log_2 k)^{c_1}$ for some constant $c_1 \geq 5$. \\ 
	\hline
	C2 & $R_{i,j}$ has $O(n_i)$ size, where $n_1 = n$ and $n_i = f(n_{i-1})/\log n_{i-1}$ for $i \geq 2$. \\ 
	\hline
	C3 & $P_1 = O(n_1)$ and for all $i \geq 2$, $P_i = O(n_i\log n_{i-1})$. \\
	\hline
    C4 & There is a function $g(n,i)$ such that for any triangle $t$ that lies inside a region of $S$ (bounded or exterior), $t$ intersects $O(g(n,i))$ triangles in $\Delta_{i,j}$ that are subsets of regions in $R_{i,j}$.  Note $t$ may not lie completely inside any region in $R_{i,j}$. \\
    \hline
    C5 & There is a constant $c_2 \geq 1$ such that $f(n) \geq g(n,1) (\log_2 n)^{c_2+2}$ when $n$ is larger than some constant. \\
    \hline
\end{tabular}
\caption{Conditions C1--C5 that ${\cal A}(f,g)$ has to satisfy.}
\label{tb:C}
\end{table}

\paragraph{Querying.} $D_{i,j}$ consists of two separate point location structures.  The first one, denoted by $D_{i,j}'$, is obtained by invoking Theorem~\ref{thm:iacono} on $\Delta_{i,j}$.   The second one, denoted by $D_{i,j}''$, is a worst-case optimal planar point location structure (e.g.~\cite{paper:Kirkpatrick1981}).    When querying $D_{i,j}$ with a point $q$, we alternate between the search steps in $D_{i,j}'$ and $D''_{i,j}$, and stop as soon as a triangle $t$ in $\Delta_{i,j}$ containing $q$ is found. 

The overall querying procedure works as follows.  Let $m$ denote the index of the highest layer in the hierarchy currently.  When given a query point $q$, we first decide if $q$ lies outside $B_S$.  If so, we declare that $q$ is outside $S$ and stop.  Otherwise, we query $D_{i,*}$ with $q$ for $i = m, m-1, \ldots$ until $D_{i,*}$ reports a triangle in $\Delta_{i,*}$ that stores the id of a region in $S_{1,1}$.  That is, if $D_{m,*}$ returns a triangle that stores no such region id, we query $D_{m-1,*}$ with $q$, and so on.  We say that $q$ is \emph{successfully} located by $D_{i,*}$ if we query $D_{i,*}$ with $q$ and $D_{i,*}$ returns a triangle that stores the id of a region in $S_{1,1}$.  

After $q$ is located successfully by some $D_{i,*}$, we increment the access frequency of the triangle in $\Delta_{i,*}$ that contains $q$.  The access frequencies of triangles in $D_{i',*}$ for all $i' \not= i$ are not affected.  This update is important because the frequencies govern how the method in~\cite{paper:Iacono2011} will adjust $D'_{i,*}$ in order to adapt to incoming queries.

\paragraph{Layer construction.}  The construction of new structures at the $(i+1)$-th layer is triggered as soon as $D_{i,*}$ locates $f(n_i)$ query points successfully, provided that the conditions P1 and P2 in Table~\ref{tb:P} are satisfied.  In other words, if we reach the $i$-th layer such that $n_i < (c_1^2+c_1)^{c_1-1}$ or $f(n_i) < g(n,i) (\log_2 n_i)^{c_2+2}$, the $i$-th layer will remain the highest layer in the hierarchy for the rest of the execution of the algorithm no matter how many query points will be located successfully by $D_{i,*}$ in the future.  By C5 in Table~\ref{tb:C} and P2, the first layer is the highest layer in the hierarchy only if $n = O(1)$, in which case the problem is trivial.

\begin{table}
\centering
\renewcommand{\arraystretch}{1.3}
\begin{tabular}{|c|l|}
\hline
P1 & $f(n_i) \geq g(n,i) (\log_2 n_i)^{c_2+2}$\\
\hline
P2 & $n_i \geq (c_1^2+c_1)^{c_1-1}$\\
\hline
\end{tabular}
\caption{The values $c_1$ and $c_2$ are the constants in C1 and C5 in Table~\ref{tb:C}.  For the $(i+1)$-th layer to be built, both P1 and P2 have to be satisfied at the $i$-th layer.}
\label{tb:P}
\end{table}

Suppose that we are to create new structures at the $(i+1)$-th layer.  Let $j$ be the version index of the most recent structures at the $(i+1)$-th layer.  (If the $(i+1)$-th layer does not exist, we create a new layer and the version index starts from 1.)   The algorithm ${\cal A}(f,g)$ performs the following tasks.
\begin{romani}
	
	\item Select the $f(n_i)/(\log_2 n_i)^{c_2}$ most frequently accessed triangles in $\Delta_{i,*}$, where $c_2 \geq 1$ is the constant in C5 in Table~\ref{tb:C}.
	
	\item Build a set $R_{i+1,j+1}$ of regions such that each region has one or more of the $f(n_i)/(\log_2 n_i)^{c_2}$ selected triangles.  Then, construct $S_{i+1,j+1}$ and $\Delta_{i+1,j+1}$.  The complexities of  $R_{i+1,j+1}$, $S_{i+1,j+1}$ and $\Delta_{i+1,j+1}$ are required to be $O(n_{i+1}) = O(f(n_i)/(\log n_i))$.  This requirement will influence the choices of $c_1$ and $c_2$ in Table~\ref{tb:C}.
	
	\item Build a point location structure $D_{i+1,j+1}$ for answering queries in $\Delta_{i+1,j+1}$.  Recall that $D_{i+1,j+1}$ consists of two parts $D'_{i+1,j+1}$ and $D''_{i+1,j+1}$, where $D'_{i+1,j+1}$ is obtained by applying Theorem~\ref{thm:iacono} to $\Delta_{i+1,j+1}$, and $D''_{i+1,j+1}$ is a worst-case optimal point location structure for $\Delta_{i+1,j+1}$.
	
	\end{romani}

\subsection{Analysis}
\label{sec:analysis}

Let $D$ be a point location structure or a point location linear decision tree.  For any online query sequence $\alpha$, we use $D(\alpha)$ to denote the time taken by $D$ to process $\alpha$, which does not include the preprocessing time to construct $D$.

The next result follows from the fact that, according to the query procedure, access frequencies of triangles in $\Delta_{i,j}$ are only updated when $D_{i,j}$ successfully locates some query points.

\begin{lemma}\label{lem:generic-0}
	Let $\alpha$ be an online sequence of points that $D_{i,j}$ is queried with.  Let $\alpha_{i,j}$ be the maximal subsequence of $\alpha$ that are successfully located by $D_{i,j}$.  Then 
	\[
	D_{i,j}(\alpha) = O\left(D_{i,j}(\alpha_{i,j}) + |\alpha \!\setminus\! \alpha_{i,j}|\log n_i \right).
	\]
\end{lemma}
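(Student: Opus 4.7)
The plan is to split $D_{i,j}(\alpha)$ into the cost incurred on queries in $\alpha_{i,j}$ (the ones located successfully) and the cost on queries in $\alpha \setminus \alpha_{i,j}$ (the unsuccessful ones), and bound each piece separately. The two ingredients I will use are the alternation between $D'_{i,j}$ and $D''_{i,j}$ described just above the lemma, and the rule — stated in the querying paragraph — that an access-frequency update in $\Delta_{i,j}$ happens only after a successful location.

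For the unsuccessful queries the bound is essentially free. The worst-case optimal structure $D''_{i,j}$ answers a single point location query on $\Delta_{i,j}$ in $O(\log n_i)$ time. Because the querying procedure alternates one step in $D'_{i,j}$ with one step in $D''_{i,j}$ and stops as soon as either finds the containing triangle, the cost of any single query — successful or not — is within a factor of two of $\min(\text{time of }D'_{i,j}\text{ on }q,\,\text{time of }D''_{i,j}\text{ on }q)$, hence at most $O(\log n_i)$. Summing over $\alpha \setminus \alpha_{i,j}$ gives the additive $O(|\alpha \setminus \alpha_{i,j}| \log n_i)$ term.

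For the successful queries, the key observation is that the self-adjusting structure $D'_{i,j}$ is reshaped by Iacono--Mulzer only in response to the frequency updates, and by design those updates occur only after a successful location; the structure $D''_{i,j}$ is static. Therefore, if I label the states of $D_{i,j}$ just before each successful query is processed inside the run on $\alpha$, and do the same for an independent run of $D_{i,j}$ on $\alpha_{i,j}$ alone, the two state sequences coincide, because unsuccessful queries in $\alpha$ do not alter $D_{i,j}$ at all. Hence each successful query $q \in \alpha_{i,j}$ costs exactly the same — up to the constant factor from alternation — in both runs, and summing yields that the total cost of successful queries during the run on $\alpha$ is $O(D_{i,j}(\alpha_{i,j}))$.

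Adding the two bounds gives the stated estimate. The only step that I expect to need careful wording is the middle one: I must make clear that between successive successful queries the state of $D_{i,j}$ is genuinely frozen, so that the natural coupling between the two runs on $\alpha$ and on $\alpha_{i,j}$ is exact and not merely approximate. Everything else reduces to the alternation argument and an $O(\log n_i)$ worst-case bound per query in $D''_{i,j}$.
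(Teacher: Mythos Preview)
Your proposal is correct and follows essentially the same approach as the paper's proof: bound the unsuccessful queries by the $O(\log n_i)$ worst-case cost of $D''_{i,j}$, and use the fact that only successful queries update the access frequencies (hence the state of $D'_{i,j}$) to argue that the successful queries cost the same as in a run on $\alpha_{i,j}$ alone. Your coupling argument between the two runs is more explicit than the paper's terse version, but the underlying idea is identical.
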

\begin{proof}
Only query points in $\alpha_{i,j}$ will cause the access frequencies in $\Delta_{i,j}$ to be updated.  Therefore, only $\alpha_{i,j}$ can affect the behavior of $D_{i,j}'$.  The time to process $\alpha \setminus \alpha_{i,j}$ is bounded from above by time taken by $D_{i,j}''$ on $\alpha \setminus \alpha_{i,j}$.  Therefore, $D_{i,j}(\alpha) = O\left(D_{i,j}(\alpha_{i,j}) + |\alpha \!\setminus\! \alpha_{i,j}|\log n_i\right)$.
\end{proof}

The next lemma covers two results.  First, an optimal linear decision tree for $S$ can be expanded into another linear decision tree that has the same asymptotic performance and induces a tiling of $S_{i,j}$ by triangles.  The second result relates the performance of $D_{i,j}$ to that of the optimal linear decision tree for $S$.

\begin{lemma}\label{lem:generic-1}
	Let $\sigma$ be an online query sequence.  Let $D^\sigma$ denote the point location linear decision tree for $S$ that takes the least processing time on $\sigma$.  
	\begin{emromani}
		
		\item There exists a point location linear decision tree $\widehat{D}^\sigma$ such that the leaf nodes of $\widehat{D}^\sigma$ induce a refinement of $S$ into a tiling of triangles, and every query point in $\sigma$ is processed by $\widehat{D}^\sigma$ in the same asymptotic running time as $D^\sigma$.
		
		\item For any online sequence $\alpha$ of points that $D_{i,j}$ is queried with, if $\alpha_{i,j}$ is the maximal subsequence of $\alpha$ that are successfully located by $D_{i,j}$, then 
		\[
		D_{i,j}(\alpha) = O\left(D^\sigma(\alpha_{i,j}) + n_i + |\alpha\!\setminus\!\alpha_{i,j}|\log n_i + |\alpha_{i,j}|\log g(n,i)\right) 
		\]
		
	\end{emromani}
\end{lemma}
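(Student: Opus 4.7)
The plan is to handle the two parts in sequence, using part (i) to rewrite the optimal tree into a form that plays well with the triangulation $\Delta_{i,j}$ used by $D_{i,j}$.

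For part (i), I would argue as follows. Consider any leaf $\ell$ of $D^\sigma$ at depth $d_\ell$. The $d_\ell$ point-line comparisons along the root-to-leaf path define a convex polygonal cell $C_\ell$ with at most $d_\ell$ edges, and because $D^\sigma$ correctly solves point location in $S$, $C_\ell$ lies inside a single region of $S$ or entirely outside $B_S$. For each $\ell$ whose cell is inside $S$, fan-triangulate $C_\ell$ into $O(d_\ell)$ triangles and attach beneath $\ell$ a balanced point location tree for $C_\ell$ of depth $O(\log d_\ell)$, labelling every new leaf with the region of $S$ that $\ell$ was labelled with. The augmented tree $\widehat{D}^\sigma$ processes a query that ended at $\ell$ in time $d_\ell + O(\log d_\ell) = O(d_\ell)$, so every query in $\sigma$ is served in the same asymptotic time as in $D^\sigma$. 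The new leaves inside $S$ induce a triangulation of $S$ as required.

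For part (ii), I would combine three ingredients. First, the alternation strategy gives $D_{i,j}(\alpha_{i,j}) = O(D'_{i,j}(\alpha_{i,j}))$, and Theorem~\ref{thm:iacono} bounds the latter by $O\bigl(\sum_{t} p_t \log(|\alpha_{i,j}|/p_t) + n_i\bigr)$, where $p_t$ is the number of queries in $\alpha_{i,j}$ landing in $t \in \Delta_{i,j}$; since queries in $\alpha_{i,j}$ are successfully located, only triangles $t$ that are subsets of regions in $R_{i,j}$ contribute. Second, for each triangular leaf cell $\hat{t}$ of $\widehat{D}^\sigma$, $\hat{t}$ lies inside a single region of $S$, so by C4 it intersects at most $k = O(g(n,i))$ triangles of $\Delta_{i,j}$ that are subsets of regions in $R_{i,j}$. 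Writing $\hat{p}_{\hat{t}}$ for the number of queries of $\alpha_{i,j}$ in $\hat{t}$ and $p_{\hat{t},t}$ for the number in $\hat{t}\cap t$, the inequality $p_t \geq p_{\hat{t},t}$ yields
\[
\sum_t p_t \log(|\alpha_{i,j}|/p_t) \;\leq\; \sum_{\hat{t}} \sum_t p_{\hat{t},t} \log(|\alpha_{i,j}|/p_{\hat{t},t}).
\]
For each fixed $\hat{t}$, concavity of $x \mapsto x\log(|\alpha_{i,j}|/x)$ and Jensen's inequality on the at most $k$ terms summing to $\hat{p}_{\hat{t}}$ give
\[
\sum_t p_{\hat{t},t} \log(|\alpha_{i,j}|/p_{\hat{t},t}) \;\leq\; \hat{p}_{\hat{t}} \log(|\alpha_{i,j}|/\hat{p}_{\hat{t}}) + \hat{p}_{\hat{t}} \log k,
\]
so summing over $\hat{t}$ produces an overhead of $O(|\alpha_{i,j}|\log g(n,i))$. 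Third, Shannon's source coding lower bound applied to $\widehat{D}^\sigma$ viewed as a binary code on the triangles $\hat{t}$ gives $\sum_{\hat{t}} \hat{p}_{\hat{t}} \log(|\alpha_{i,j}|/\hat{p}_{\hat{t}}) \leq \sum_{\hat{t}} \hat{p}_{\hat{t}} d_{\hat{t}} = \widehat{D}^\sigma(\alpha_{i,j}) = O(D^\sigma(\alpha_{i,j}))$, where $d_{\hat{t}}$ is the depth of $\hat{t}$ in $\widehat{D}^\sigma$ and the last equality uses part (i). Chaining these estimates and finally invoking Lemma~\ref{lem:generic-0} to absorb the unsuccessful queries delivers the claimed bound.

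The main obstacle is the entropy manipulation in part (ii): the incidence bound from C4 is one-sided (each $\hat{t}$ meets few triangles of $\Delta_{i,j}$, but not vice versa), and the row-by-row concavity argument on the joint distribution $p_{\hat{t},t}$ is precisely what converts this one-sided bound into the $\log g(n,i)$ term without incurring a factor on the wrong side. A secondary care point is verifying that every leaf cell of $\widehat{D}^\sigma$ is a triangle lying inside a single region of $S$, so that C4 is applicable in the form used above.
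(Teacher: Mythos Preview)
Your argument is correct. Part (i) is identical to the paper's. For part (ii) you take a slightly different route: the paper constructs an explicit intermediate linear decision tree $D$ for $\Delta_{i,j}$ by hanging, below each triangular leaf $\hat t$ of $\widehat D^\sigma$, a worst-case point-location subtree of depth $O(\log g(n,i))$ for the overlay of $\hat t$ with $\Delta_{i,j}$; it then invokes Theorem~\ref{thm:iacono} as a black box to get $D'_{i,j}(\alpha_{i,j}) = O(D(\alpha_{i,j}) + n_i)$, and reads off $D(\alpha_{i,j}) = O(\widehat D^\sigma(\alpha_{i,j}) + |\alpha_{i,j}|\log g(n,i))$ directly from the added depth. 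You instead unpack Theorem~\ref{thm:iacono} into its entropy expression and bound that expression by subadditivity, Jensen, and Kraft/Shannon. The two arguments are really the same idea viewed from two angles: the paper's ``attach a depth-$O(\log g(n,i))$ subtree'' is exactly the operational counterpart of your Jensen gap of $\hat p_{\hat t}\log k$. The paper's version is a little slicker because it never opens the entropy formula, while yours makes the information-theoretic source of the $\log g(n,i)$ term explicit.
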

\begin{proof}
Each leaf node $v$ of $D^\sigma$ corresponds to a convex polygon $\rho$ in a region of $S$.  If $\rho$ has $k$ sides, then $v$ has depth at least $k$ as each node of $D^\sigma$ applies a cut along a line.  Therefore, we can expand $v$ into a linear decision subtree so that the leaf nodes of this subtree correspond to a triangulation of $\rho$ and the height of this subtree is at most $k-2$.  The linear decision tree obtained by expanding $D^\sigma$ as described above is $\widehat{D}^\sigma$ stated in~(i).  The asymptotic query complexities of $\widehat{D}^\sigma$ and $D^\sigma$ on any point are the same.

Consider $D_{ij}$ for some $i$ and $j$.  Let $t$ be the triangle at a leaf node of $\widehat{D}^\sigma$.  We discuss how to expand this leaf node to a linear decision subtree for reporting a triangle in $\Delta_{i,j}$ that contains the query point.

Note that $t$ lies inside some region of $S$ (bounded or exterior) in order that $\widehat{D}^\sigma$ answers queries in $t$ correctly.  By C4 in Table~\ref{tb:C}, $t$ intersects $O(g(n,i))$ triangles in $\Delta_{i,j}$ that are subsets of regions in $R_{i,j}$.  These triangles refine $t$ into a planar subdivision $S_t$ of size $O(g(n,i))$.  We expand the leaf node of $\widehat{D}^\sigma$ storing $t$ to a linear decision subtree $L_t$ that performs point location in $S_t$ in $O(\log g(n,i))$ worst-case query time.  Some leaf nodes of $L_t$ may correspond to regions in $S_t$ that are disjoint from the regions in $R_{i,j}$.  We expand such leaf nodes further in order to find triangles in $\Delta_{i,j}$ that contain query points in $t \setminus \bigcup_{r \in R_{i,j}} r$.  However, we will not be interested in the query times for such query points, so we can expand such leaf nodes of $L_t$ arbitrarily.

Let $D$ denote the linear decision tree obtained by expanding $\widehat{D}^\sigma$ as described above.  Let $\alpha$ be an online query sequence that $D_{i,j}$ is queried with.  Let $\alpha_{i,j} \subseteq \alpha$ be the maximal subsequence of points that are successfully located by $D_{i,j}$.  For each query point $q \in \alpha_{i,j}$, the search in $D$ traverses a root-to-leaf path in $\widehat{D}^\sigma$ and then
another path of length $O(\log g(n,i))$ to a leaf of $D$.  Therefore, \begin{equation}
D(\alpha_{i,j}) = O(\widehat{D}^\sigma(\alpha_{i,j}) + |\alpha_{i,j}|\log g(n,i)) = O(D^\sigma(\alpha_{i,j}) + |\alpha_{i,j}|\log g(n,i)).  \label{eq:generic-1} 
\end{equation}
Then,
\[
\begin{array}{clcl}
& D_{i,j}(\alpha) \\
= & O\bigl(D_{i,j}(\alpha_{i,j}) + |\alpha\!\setminus\!\alpha_{i,j}|\log n_i \bigr)  && (\because \text{Lemma~\ref{lem:generic-0}}) \\
= & O\bigl(D(\alpha_{i,j}) + n_i + |\alpha\!\setminus\!\alpha_{i,j}|\log n_i\bigr)  && (\because \text{Theorem~\ref{thm:iacono}}) \\
= & O\bigl(D^\sigma(\alpha_{i,j}) + n_i + |\alpha_{i,j}|\log g(n_i) + |\alpha\!\setminus\!\alpha_{i,j}|\log n_i\bigr). && (\because \eqref{eq:generic-1})
\end{array}
\]
\end{proof}

For any online query sequence $\sigma$ processed by our hierarchy of point location structures, define the following subsequences of query points:
\begin{eqnarray*}
\sigma_{i,j} & = & \{q \in \sigma: \mbox{$q$ is located successfully by $D_{i,j}$ }\}, \\
\sigma_{i} & = &
\{ q \in \sigma: \mbox{$q$ is located successfully at the $i$-th layer of the hierarchy}\}, \\
\sigma_{<i} & = &
\{ q \in \sigma: \mbox{$q$ is located successfully at some layer below the $i$-th layer}\}.
\end{eqnarray*}
By definition, $\sigma = \bigcup_{i,j} \sigma_{i,j}$, the $\sigma_{i,j}$'s
are mutually disjoint, $\sigma_i = \bigcup_j \sigma_{i,j}$,
and $\sigma_{<i} = \bigcup_{k=1}^{i-1} \sigma_k$.  Consider the linear decision tree $\widehat{D}^\sigma$ in Lemma~\ref{lem:generic-1}(i).  Define the following subsequences:
\begin{eqnarray*}
\hat{\sigma}_{i,j} & = & \{q \in \sigma_{i,j} : \mbox{$q$ lies in some leaf node of $\widehat{D}^\sigma$ at depth $\log_2\log_2 n_i$ or less} \}, \\
\hat{\sigma}_i & = & \bigcup_j \hat{\sigma}_{i,j}.
\end{eqnarray*}

\begin{lemma}\label{lem:generic-2}
	If $n_i \geq (c_1^2+c_1)^{c_1-1}$ and $f(n_i) \geq g(n,i)(\log n_i)^{c_2+2}$, then 
	\[
	|\hat{\sigma}_{i,j}| = O\left(f(n_i)^2/(\log n_i)^{c_2+1} + |\sigma_{i,j}|/\log n_i\right),
	\] 
	where $c_1$ and $c_2$ are the constants in Tables~\ref{tb:C} and~\ref{tb:P}.
\end{lemma}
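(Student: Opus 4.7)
The plan is to bound $|\hat{\sigma}_{i,j}|$ by first identifying which triangles of $\Delta_{i,j}$ can catch any point of $\hat{\sigma}_{i,j}$, and then bounding the access frequency of each such triangle via a case analysis based on whether it is among the most frequently accessed triangles of $\Delta_{i,j}$.

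First, since $\widehat{D}^\sigma$ is a binary decision tree, at most $2^{\log_2\log_2 n_i}=\log_2 n_i$ of its leaves lie at depth at most $\log_2\log_2 n_i$. Each such ``shallow'' leaf corresponds to a triangle $t$ lying inside a region of $S$, and by condition~C4, $t$ intersects $O(g(n,i))$ triangles of $\Delta_{i,j}$ that are subsets of regions in $R_{i,j}$. Let $T$ denote the union, over all shallow leaves of $\widehat{D}^\sigma$, of these intersecting triangles; then $|T|=O(g(n,i)\log n_i)$, and the hypothesis $f(n_i)\geq g(n,i)(\log n_i)^{c_2+2}$ rearranges to $|T|=O\bigl(f(n_i)/(\log n_i)^{c_2+1}\bigr)$. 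Any $q\in\hat{\sigma}_{i,j}$ is successfully located by $D_{i,j}$ in some triangle $t'\in\Delta_{i,j}$ lying in a region of $R_{i,j}$, and $q$ also lies in a shallow leaf; hence $t'\in T$, so $|\hat{\sigma}_{i,j}|\leq\sum_{t'\in T} p_{t'}$, where $p_{t'}$ denotes the number of $\sigma_{i,j}$-queries landing in $t'$.

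I would then split $T$ according to whether a triangle is among the top $k=f(n_i)/(\log n_i)^{c_2}$ most frequently accessed triangles of $\Delta_{i,j}$. For $t'$ not among the top $k$, a pigeonhole argument gives $p_{t'}\leq |\sigma_{i,j}|/k$; the combined contribution from these triangles is at most $|T|\cdot|\sigma_{i,j}|/k$, which reduces to $O\bigl(|\sigma_{i,j}|/\log n_i\bigr)$ after substituting the bounds on $|T|$, $k$, and using condition~C5. For $t'$ among the top $k$, I would use $p_{t'}\leq O(f(n_i))$, which comes from the fact that the $(i+1)$-th layer is rebuilt as soon as $D_{i,*}$ locates $f(n_i)$ points successfully: at that moment the top-$k$ triangles are promoted to layer $i+1$, and any subsequent query inside them is answered at the higher layer and does not touch the layer-$i$ counter. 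The combined contribution from the top-$k$ portion is then $|T|\cdot O(f(n_i))=O\bigl(f(n_i)^2/(\log n_i)^{c_2+1}\bigr)$. Summing the two contributions yields the claimed bound.

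The main obstacle is cleanly justifying $p_{t'}\leq O(f(n_i))$ for an extracted triangle $t'$. This requires carefully exploiting the triggering mechanism for layer $i+1$ and tracking the lifetime of $D_{i,j}$ across possibly several rebuilds of layer $i+1$: one must verify that the layer-$i$ count of a triangle cannot grow past $O(f(n_i))$ without causing that triangle to be promoted and to stop collecting new successful queries at layer $i$. The hypothesis $n_i\geq(c_1^2+c_1)^{c_1-1}$ is needed to ensure that $\log_2\log_2 n_i$ is well-defined and large enough that the resulting polylogarithmic inequalities (in particular the rearrangements using C5) hold with the intended asymptotic constants.
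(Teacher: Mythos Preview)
Your overall structure---define the set $T$ of layer-$i$ triangles meeting shallow leaves of $\widehat{D}^\sigma$, bound $|T|=O(g(n,i)\log n_i)$, then bound each triangle's access count---matches the paper. The gap is in your second case. The claim $p_{t'}=O(f(n_i))$ for a triangle $t'$ among the top $k$ is false in general: a triangle can be promoted at one rebuild of layer $i+1$, have its layer-$i$ count freeze, and then drop out of the top $k$ at a later rebuild (because other triangles' counts catch up while $t'$ is frozen), after which queries in $t'$ again fall through to layer $i$ and its count resumes growing. Iterating such cycles, the count of a final top-$k$ triangle can reach $\Theta(|\sigma_{i,j}|/k)$, not $O(f(n_i))$. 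So being ``in the top $k$'' at some (or even the final) moment does not imply permanent promotion.

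The paper circumvents this by not splitting on ``top $k$'' at all. It fixes the threshold $\theta=|\sigma_{i,j}|(\log_2 n_i)^{c_2}/f(n_i)=|\sigma_{i,j}|/k$ using the \emph{final} value of $|\sigma_{i,j}|$. Once a triangle's count reaches $\theta$, then at every subsequent rebuild its count is $\geq\theta\geq(\text{current total})/k$, so by pigeonhole it is among the current top $k$ and is promoted; hence it is permanently promoted from that rebuild on. Since at most $f(n_i)$ layer-$i$ successes occur between consecutive rebuilds, the count overshoots $\theta$ by at most $f(n_i)$ before freezing, giving the uniform bound $p_{t'}\leq f(n_i)+\theta$ for every $t'\in T$. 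Multiplying by $|T|$ yields both terms of the lemma at once. Note also that the two hypotheses of the lemma are exactly conditions P1 and P2; their role is to guarantee that layer $i+1$ is actually (re)built so that the freezing argument applies, not merely to make $\log_2\log_2 n_i$ well-defined.
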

\begin{proof}
	We will make use of the following facts:
	\begin{quote}
		\noindent Fact~1:~At most $2^{1 + \log_2\log_2 n_i} - 1 = 2\log_2 n_i - 1$ nodes in $\widehat{D}^\sigma$ have depth at most $\log_2\log_2 n_i$ because $\widehat{D}^\sigma$ is a binary tree.  (The root has depth 0.)
		
		\vspace{8pt}
		
		\noindent Fact~2:~Suppose that $\Delta_{i,j}$ is the most recent triangulation at the $i$-th layer.  For each triangle $t \in \Delta_{i,j}$, if the current access frequency of $t$ is at least $|\sigma_{i,j}|(\log_2 n_i)^{c_2}/f(n_i)$, then $t$ must be selected for the next construction of stuctures at the $(i+1)$-th layer.  The reason is that the sum of frequencies in $\Delta_{i,j}$ is at most $|\sigma_{i,j}|$, so the $(f(n_i)/(\log_2 n_i)^{c_2} + 1)$-th highest frequency is less than $|\sigma_{i,j}|(\log_2 n_i)^{c_2}/f(n_i)$, implying that the frequency of $t$ must be among the top $f(n_i)/(\log_2 n_i)^{c_2}$ frequencies in $\Delta_{i,j}$.
		
	\end{quote}
	
	Let $\Delta'_{i,j}$ be the subset of triangles in $\Delta_{i,j}$ that store the ids of some regions of $S_{1,1}$.  Note that all query points in $\sigma_{i,j}$ lie in triangles in $\Delta'_{i,j}$.  Let $Z$ be the subset of triangles in $\Delta'_{i,j}$ that overlap with the regions associated with leaf nodes in $\widehat{D}^\sigma$ at depth $\log_2\log_2 n_i$ or less.
		
	Let $\tau$ be a triangle at a leaf node of $\widehat{D}^\sigma$ at depth $\log_2\log_2 n_i$ or less.  The triangle $\tau$ must lie inside a region (bounded or exterior) of $S$ in order that $\widehat{D}^\sigma$ answers queries correctly. Therefore, $\tau$ intersects $O(g(n,i))$ triangles in $\Delta'_{i,j}$ by condition C4 in Table~\ref{tb:C}.   By Fact~1 and the assumption that $f(n_i) \geq g(n,i)(\log_2 n_i)^{c_2+2}$, we obtain 
	\begin{equation}
	|Z| = O(g(n,i) \log n_i) = O\left(f(n_i)/(\log n_i)^{c_2+1}\right). \label{eq:a-0}
	\end{equation}
	
	Consider a triangle $t \in \Delta'_{i,j}$ that contains a query point in
	$\hat{\sigma}_{i,j}$.  Thus, $t \in Z$ because $t$ must overlap with some
	triangle (leaf node) in $\widehat{D}_\sigma$ at depth $\log_2\log_2 n_i$ or less.  If the frequency of $t$ in $\Delta_{i,j}$ is smaller than $|\sigma_{i,j}|(\log_2 n_i)^{c_2}/f(n_i)$, then fewer than $|\sigma_{i,j}|(\log_2 n_i)^{c_2}/f(n_i)$ query points in $t$ are from $\hat{\sigma}_{i,j}$.  Suppose that the frequency of $t$ in $\Delta_{i,j}$ reaches $|\sigma_{i,j}|(\log_2 n_i)^{c_2}/f(n_i)$, say after the construction of $\Delta_{i+1,l}$ and before the construction of $\Delta_{i+1,l+1}$.  At most $f(n_i)$ queries can be answered by $D_{i,j}$ during this period.\footnote{The assumptions of $n_i \geq (c_1^2+c_1)^{c_1-1}$ and $f(n_i) \geq g(n,i)(\log_2 n_i)^{c_2+2}$ are implicitly used here in order that the construction of $\Delta_{i+1,l+1}$ will be triggered.}  It means that the frequency of $t$ in $\Delta_{i,j}$ is at most $f(n_i) + |\sigma_{i,j}|(\log_2 n_i)^{c_2} /f(n_i)$ before the construction of $\Delta_{i+1,l+1}$.  By Fact~2, $t$ will be included in forming $\Delta_{i+1,l'}$ for all $l' > l$ until $\Delta_{i,j}$ is replaced by $\Delta_{i,j+1}$, at which point $\sigma_{i,j}$ has been exhausted.  While $\Delta_{i,j}$ is the most recent structure at the $i$-th layer, every query point that falls in $t$ after the construction of $\Delta_{i+1,l+1}$ will be located successfully in $S$ at level $i+1$ or higher.  Thus, the frequency of $t$ in $\Delta_{i,j}$ will not be increased further and at most $f(n_i) + |\sigma_{i,j}|(\log_2 n_i)^{c_2}/f(n_i)$ query points in $t$ are from $\hat{\sigma}_{i,j}$.  Hence, $|\hat{\sigma}_{i,j}| \leq (f(n_i) + |\sigma_{i,j}|(\log_2 n_i)^{c_2}/f(n_i)) \cdot |Z|$.  Since $|Z| = O\left(f(n_i)/(\log n_i)^{c_2+1}\right)$ by \eqref{eq:a-0}, we obtain $|\hat{\sigma}_{i,j}| = O\left(f(n_i)^2/(\log n_i)^{c_2+1} + |\sigma_{i,j}|/\log n_i\right)$.  
\end{proof}	

We need one more technical result before analyzing the performance of the hierarchy of point location structures.

\begin{lemma}\label{lem:claim}
	Let $m \geq 2$ be the index of some layer in the hierarchy in processing some online query sequence.  
	\begin{emromani}
		\item For all $i \in [1,m-1]$, $m-i \leq \log_2^* n_i$, where $\log_2^* n_i$ is the number of times that the logarithm function is applied to $n_i$ so that the output is less than or equal to 1.
		\item For all $i \in [1,m-1]$, $\sum_{l=i+1}^m \log_2 n_l < (3c_1^2+2c_1)\log_2\log_2 n_i$, where $c_1$ is the constant in Table~\ref{tb:C}.
	\end{emromani}
\end{lemma}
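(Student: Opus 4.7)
\proof

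The plan is to handle the two parts separately. Both rely on the recurrence $n_{l+1}=(\log_2 n_l)^{c_1-1}$ (from C1 and C2) together with the lower bound $n_l\geq K:=(c_1^2+c_1)^{c_1-1}$ for every $l\in[1,m-1]$, which P2 guarantees at each layer below the top (otherwise layer $l+1$ would not have been built).

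For part~(ii), let $a_l=\log_2 n_l$ so that $a_{l+1}=(c_1-1)\log_2 a_l$, and consider the ratio $\rho(a):=(c_1-1)(\log_2 a)/a$, which is strictly decreasing for $a>e$. Over $l\in[i+1,m-1]$ the smallest permissible value of $a_l$ is $\log_2 K$, so the ratio $a_{l+1}/a_l=\rho(a_l)$ attains its maximum at $a_l=\log_2 K$. A direct calculation shows this maximum does not exceed $r:=(2c_1-1)/(2c_1)$ whenever $c_1\geq 5$; the inequality reduces to $(2c_1-1)\log_2(c_1^2+c_1)\geq 2c_1\log_2\bigl((c_1-1)\log_2(c_1^2+c_1)\bigr)$, which I would verify holds with slack for every integer $c_1\geq 5$. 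With the contraction $a_{l+1}\leq r\cdot a_l$ in hand, summing the geometric tail yields
\[
\sum_{l=i+1}^{m}a_l\leq\frac{a_{i+1}}{1-r}=2c_1\cdot a_{i+1}=2c_1(c_1-1)\log_2\log_2 n_i,
\]
and since $2c_1(c_1-1)<3c_1^2+2c_1$, the claim follows.

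For part~(i), I would prove the equivalent statement $\log_2^{(m-i-1)}n_i>1$ by induction on $k=m-i\in[1,m-1]$ applied to $b_k:=\log_2^{(k-1)}n_{m-k}$. The base cases $k=1,2,3$ follow from $n_{m-1}\geq K$ together with one or two applications of the rearranged recurrence $\log_2 n_{l-1}=n_l^{1/(c_1-1)}$: they give $b_1\geq K$, $b_2\geq K^{1/(c_1-1)}=c_1^2+c_1$, and $b_3\geq(c_1^2+c_1)/(c_1-1)\geq c_1+2$. For the inductive step the same rearrangement yields
\[
b_{k+1}=\log_2^{(k-2)}\!\Bigl(\frac{\log_2 n_{m-k}}{c_1-1}\Bigr),
\]
which differs from $b_k=\log_2^{(k-2)}(\log_2 n_{m-k})$ only by the factor $1/(c_1-1)$ appearing inside the outermost $k-2$ iterated logarithms. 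The base cases imply that $n_{m-k}$ dominates a tower of exponentials whose height grows with $k$, so this factor contributes an additive correction of exactly $\log_2(c_1-1)$ from $b_3$ to $b_4$ and a vanishingly small correction thereafter. Hence $b_k\geq(c_1^2+c_1)/(c_1-1)-\log_2(c_1-1)$ for every $k\geq 4$, a constant strictly greater than~$1$ once $c_1\geq 5$.

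The main obstacle is the bookkeeping in part~(i): showing that each multiplicative $1/(c_1-1)$ factor introduced by a backward step of the recurrence yields only a bounded additive perturbation inside the iterated logarithms, uniformly in $k$. The calibrations $c_1\geq 5$ (from C1) and $n_l\geq(c_1^2+c_1)^{c_1-1}$ (from P2) are precisely what ensures that these perturbations stay within the slack of $b_k$ no matter how many layers the hierarchy contains.

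\endproof
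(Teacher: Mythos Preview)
Your treatment of part~(ii) is correct and takes a genuinely different route from the paper. The paper expands the recurrence forward to obtain $\log_2 n_l < c_1^2 + c_1\log_2^{(l-i+1)} n_i$ and then sums these bounds over $l$; you instead observe that $a_{l+1}/a_l=\rho(a_l)$ is bounded by a fixed constant $r<1$ whenever $n_l\geq K$, which turns the sum into a geometric series bounded by $a_{i+1}/(1-r)$. Your argument is more direct and even yields the slightly better constant $2c_1(c_1-1)$. The one step you defer---checking $\rho(\log_2 K)\leq(2c_1-1)/(2c_1)$ for all integers $c_1\geq5$---is a routine calculation, so this part stands.

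Part~(i), however, has a real gap. The backward induction on $b_k$ is a reasonable idea, and your base cases are correct (in particular $b_3\geq(c_1^2+c_1)/(c_1-1)$ does follow from applying the rearranged recurrence twice). But the inductive step is not established. You assert that the factor $1/(c_1-1)$ ``contributes an additive correction of exactly $\log_2(c_1-1)$ from $b_3$ to $b_4$ and a vanishingly small correction thereafter,'' yet you never prove this. Concretely: from $b_k=\log_2^{(k-2)}(\log_2 n_{m-k})$ and $b_{k+1}=\log_2^{(k-2)}\bigl(\tfrac{1}{c_1-1}\log_2 n_{m-k}\bigr)$, the hypothesis $b_k>1$ only gives $\log_2 n_{m-k}>T(k-2)$ (a tower of $2$'s of height $k-2$), whereas $b_{k+1}>1$ requires the stronger $\log_2 n_{m-k}>(c_1-1)\,T(k-2)$. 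So a plain induction on ``$b_k>1$'' does not close; you would need a strengthened hypothesis whose propagation you have not carried out.

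The paper sidesteps this entirely. Its forward expansion $\log_2 n_l < c_1^2 + c_1\log_2^{(l-i+1)} n_i$ makes (i) immediate: once $l-i+1$ reaches $\log_2^* n_i$, the right side drops below $c_1^2+c_1$, forcing $n_{l+1}<K$ and hence $m\leq l+1$. The same inequality then feeds directly into the sum for (ii), so both parts fall out of one estimate. If you want to keep your geometric approach to (ii), the cleanest fix for (i) is to adopt the paper's forward bound just for that part.
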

\begin{proof}
	Consider any index $l \in [i+1,m]$.  Since $f(n_{l-1}) = (\log_2 n_{l-1})^{c_1}$ and $n_l = f(n_{l-1})/(\log_2 n_{l-1})$ by C1 and C2 in Table~\ref{tb:C}, we get
	\begin{equation*}
	\log_2 n_l = \log_2 \frac{f(n_{l-1})}{\log_2 n_{l-1}} <  c_1\log_2\log_2 n_{l-1}.
	\end{equation*}
	If $l \geq i+2$, then we can similarly expand the equation and obtain 
	\begin{equation*}
	\log_2 n_l < c_1\log_2(c_1\log_2\log_2 n_{l-2}) = c_1\log_2 c_1 + c_1\log_2\log_2\log_2 n_{l-2}.  
	\end{equation*}
	If $l \geq i+3$, then we can expand further and get 
	\begin{equation*}
	\log_2 n_l < c_1\log_2 c_1 + c_1\log_2\log_2 c_1 + c_1\log_2\log_2\log_2\log_2 n_{l-3}.
	\end{equation*}
	
	Use $\log_2^{(k)} a$ to denote $k$ successive applications of the logarithm function to $a$.  Then, we can inductively obtain
	\[
	\log_2 n_l < \left(\sum_{k=1}^{l-i-1} c_1\log_2^{(k)} c_1\right) + c_1\log_2^{(l-i+1)} n_i.
	\]
	It is clear that $\log_2^{(k)} a \leq a/2^k$ for $a \geq 4$.  As $c_1 \geq 5$ in Table~\ref{tb:C}, we obtain $\sum_{k=1}^{l-i-1} c_1\log_2^{(k)} c_1 < \sum_{k=1}^\infty c_1^2/2^k = c_1^2$, which implies that
	\[
	\log_2 n_l < c_1^2 + c_1\log_2^{(l-i+1)} n_i.
	\]
	When $l-i+1 = \log_2^* n_i$, we have $\log_2 n_l < c_1^2+c_1$ and hence $n_{l+1} = (\log_2 n_l)^{c_1-1} < (c_1^2+c_1)^{c_1-1}$.  Since we would not have built the $m$-th level if $n_{m-1} < (c_1^2+c_1)^{c_1-1}$, we conclude that $m-i \leq \log_2^* n_i$.  Note that this establishes the correctness of (i).  Therefore,
	\begin{equation*}
	\sum_{l=i+1}^m \log_2 n_l < c_1^2(m-i) + \sum_{l=i+1}^m c_1\log_2^{(l-i+1)} n_i \\
	\leq c_1^2\log^* n_i + \sum_{l=i+1}^m c_1\log_2^{(l-i+1)} n_i. 
	\end{equation*}
	Since $i \leq m-1$, P2 in Table~\ref{tb:P} must hold for the $(i+1)$-th layer to be built.  This implies that $n_i \geq (c_1^2 + c_1)^{c_1-1} \geq 30^4$ as $c_1 \geq 5$.  Thus, $\log_2\log_2 n_i > 4$ which makes $\log_2^{(k)} (\log_2\log_2 n_i) \leq (\log_2\log_2 n_i)/2^k$.  Then, 
	\[
	\log_2^{(l-i+1)} n_i = \log_2^{(l-i-1)} (\log_2\log_2 n_i) \leq (\log_2\log_2 n_i)/2^{l-i-1}. 
	\]
	Therefore, 
	\[
	\sum_{l=i+1}^m c_1\log_2^{(l-i+1)} n_i \leq \sum_{l=i+1}^m c_1(\log_2\log_2 n_i)/2^{l-i-1} < 2c_1\log_2 \log_2 n_i.
	\]
	Consequently,
	\[
	\sum_{l=i+1}^m \log_2 n_l < c_1^2\log_2^* n_i + 2c_1\log_2\log_2 n_i \leq (3c_1^2+2c_1)\log_2\log_2 n_i,
	\]
	establishing the correctness of (ii).
\end{proof}
	
Finally, we are ready to analyze the performance of the hierarchy of point location structures.

\begin{theorem}\label{thm:generic-3}
	Given a subdivision $S$ with $n$ vertices and an algorithm ${\cal A}(f,g)$, we can process any online query sequence $\sigma$ with a hierarchy of point-line comparison based point location structures.  Let $m$ be the index of the highest layer in the hierarchy in processing $\sigma$.  The hierarchy uses $O(n)$ space and processes $\sigma$ in $T(n)$ time, where
	\[
	T(n) = \left\{\begin{array}{lcl}
	O(\mathrm{OPT} + n) & & \mbox{if $n_m \geq (c_1^2+c_1)^{c_1-1}$ and} \\
	                    & & \;\;\;\mbox{$f(n_m) \geq g(n,m)(\log_2 n_m)^{c_2+2}$}, \\
	                    \\
	O(\mathrm{OPT} + n + |\sigma_m|\log g(n,m)) & & \mbox{otherwise}.
	\end{array}\right.
	\]
\end{theorem}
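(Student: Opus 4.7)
The plan is to bound the space, the construction time, and the total query time separately, then combine. For space and construction, I would exploit that conditions C1 and C2 give $n_{i+1} = (\log_2 n_i)^{c_1-1}$, so $n_i$ shrinks double-exponentially and $\sum_i n_i = O(n)$. Since only the latest version at each layer is retained, the space is $O(n)$. For construction, each version at layer $i$ costs $O(n_i \log n_{i-1})$ by C3, and a new version at layer $i$ is only triggered by $f(n_{i-1})$ successful queries at layer $i-1$. Condition C2 gives $n_i \log n_{i-1}/f(n_{i-1}) = 1$, so the amortized construction cost per such query is $O(1)$, making the total construction time $O(n + |\sigma|) = O(n + \mathrm{OPT})$.

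Next I would sum Lemma~\ref{lem:generic-1}(ii) over every version $D_{i,j}$ queried during the execution. The $D^\sigma(\alpha_{i,j})$ terms aggregate to $\mathrm{OPT}$. The ``number of versions times $n_i$'' contributions are absorbed into $O(n+\mathrm{OPT})$ by the same amortization. The failure terms $|\alpha\!\setminus\!\alpha_{i,j}|\log n_i$, after reindexing by the layer $k(q)$ at which each query $q$ is ultimately successfully located, become $\sum_q \sum_{i>k(q)}\log n_i$, which Lemma~\ref{lem:claim}(ii) collapses to $O\bigl(\sum_q \log\log n_{k(q)}\bigr)$. The success terms contribute $O\bigl(\sum_i |\sigma_i|\log g(n,i)\bigr)$; at every layer where P1 holds, P1 combined with $f(n_i)=(\log n_i)^{c_1}$ forces $g(n,i) \leq (\log n_i)^{c_1-c_2-2}$, so $\log g(n,i) = O(\log\log n_i)$. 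For those layers, everything funnels into $O\bigl(\mathrm{OPT}+n+\sum_k |\sigma_k|\log\log n_k\bigr)$.

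The main technical step is then absorbing $\sum_k |\sigma_k|\log\log n_k$ into $O(\mathrm{OPT}+n)$. Following the split from Lemma~\ref{lem:generic-2}, I would partition each $\sigma_k$ into $\hat\sigma_k$ and $\sigma_k\setminus\hat\sigma_k$ using the depth-$\log_2\log_2 n_k$ cutoff in $\widehat{D}^\sigma$. Queries outside $\hat\sigma_k$ sit at depth $>\log_2\log_2 n_k$ in $\widehat{D}^\sigma$ and so individually cost $\Omega(\log\log n_k)$ in $D^\sigma$; their contribution is absorbed into $\mathrm{OPT}$. For the shallow queries, Lemma~\ref{lem:generic-2} gives $|\hat\sigma_{k,j}| = O\bigl(f(n_k)^2/(\log n_k)^{c_2+1} + |\sigma_{k,j}|/\log n_k\bigr)$. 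Summing over versions and multiplying by $\log\log n_k$, the second piece contributes $O(|\sigma|)$ in total. The first piece, with the version count $V_k \leq 1 + |\sigma_{k-1}|/f(n_{k-1})$ substituted in, splits into a version-free part and a cross term. Because $f(n_k)^2/(\log n_k)^{c_2+1}$ is polylogarithmic in $n_k$ and $n_k$ decays double-exponentially, the version-free part sums over $k$ to $O(n)$. Because $f(n_k)^2/f(n_{k-1}) = o(1)$ (using $f(n) = (\log n)^{c_1}$ and $n_k = (\log n_{k-1})^{c_1-1}$), the cross term is at most $|\sigma_{k-1}|$ per layer and sums to $O(\mathrm{OPT})$.

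Finally I would treat the top layer $m$. If both P1 and P2 hold at $m$, the preceding argument applies uniformly and yields $T(n)=O(\mathrm{OPT}+n)$. Otherwise P1 may fail at $m$, so the key bound $\log g(n,m)=O(\log\log n_m)$ is unavailable, and the contribution $|\sigma_m|\log g(n,m)$ survives as an explicit additive term, giving $T(n)=O(\mathrm{OPT}+n+|\sigma_m|\log g(n,m))$. I expect the main obstacle to be the bookkeeping in the third paragraph: carefully tracking the version count $V_k$ against the double-exponential decay of $n_k$ to show that the shallow-query tail never overwhelms $O(\mathrm{OPT}+n)$.
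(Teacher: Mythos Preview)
Your proposal is correct and follows essentially the same route as the paper: amortize construction against triggering queries, sum Lemma~\ref{lem:generic-1}(ii) over all versions, collapse the failure terms via Lemma~\ref{lem:claim}(ii), and absorb $\sum_k|\sigma_k|\log\log n_k$ using the shallow/deep split from Lemma~\ref{lem:generic-2}. The only cosmetic difference is that the paper handles the shallow contribution via a case split on whether $|\hat\sigma_i|\le|\sigma_i\setminus\hat\sigma_i|$ and the one-line bound $f(n_i)^2/(\log n_i)^{c_2}=O(n_i)$, whereas you bound the two pieces of $|\hat\sigma_{k,j}|$ directly and unpack the version count $V_k$; both arrive at the same estimate.
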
	
\begin{proof}
	Let $D^\sigma$ be the point location linear decision tree for $S$ that takes the least time to process $\sigma$.  Let $\Gamma_i$ denote the total processing time required by $D_{i,j}$ over all $j$, including the preprocessing time $P_i$ for each $D_{i,j}$.  Recall that $P_i = O(n_i\log n_{i-1})$ for $i > 1$ and $P_1 = O(n_1)$.  For $i > 1$, $\Gamma_i$ includes the time spent on unsuccessfully locating some query points in $\sigma_{<i}$.  By Lemma~\ref{lem:generic-1}(ii), since all queries in $\sigma_1$ are successfully located by $D_{1,1}$, we get
	\[
	\Gamma_1 = D_{1,1}(\sigma_1) + P_1 = O({D}^\sigma(\sigma_1) + n + |\sigma_1|\log g(n,1))，
	\]  	
	and for $i \in [2,m]$,
	\begin{eqnarray*}
		\Gamma_i & = & O\left(\sum_j {D}^\sigma(\sigma_{i,j}) + \sum_j n_i + |\sigma_{<i}|\log n_i + \sum_j |\sigma_{i,j}|\log g(n,i) \right) + \sum_j P_i \\
		& = & O\left({D}^\sigma(\sigma_i) + \sum_j n_i\log n_{i-1}
		+ |\sigma_{<i}|\log n_i + |\sigma_i|\log g(n,i) \right).
		\end{eqnarray*}
		Therefore,
		\[
		\sum_{i=1}^m \Gamma_i =
		O\left(\sum_{i=1}^m {D}^\sigma(\sigma_i) + n + \sum_{i=2}^m \sum_j n_i\log n_i + \sum_{i=1}^m |\sigma_i|\log g(n,i) + \sum_{i=2}^m |\sigma_{<i}|\log n_i\right).
		\]
		Since $\Delta_{i,j}$ is constructed after answering $f(n_{i-1})$ new queries using $D_{i-1,*}$, the preprocessing time of $P_i = O(n_i\log n_{i-1}) = O(f(n_{i-1}))$ can be charged to these new queries.  So $\sum_{i=2}^m \sum_j n_i\log n_{i-1}$ can be charged to the queries in
		$\sigma$, i.e., $\sum_{i=2}^m \sum_j n_i\log n_{i-1} = O(|\sigma|)$.  We rewrite the term $\sum_{i=2}^m |\sigma_{<i}|\log n_i = \sum_{i=1}^{m-1}
		(|\sigma_i|\sum_{l=i+1}^m \log n_l)$.  By Lemma~\ref{lem:claim}(ii), for $i \in [1,m-1]$, $\sum_{l=i+1}^m \log_2 n_l = O(\log\log n_i)$.  Consequently,
		\begin{equation}
		\sum_{i=1}^m \Gamma_i = 
		O\left(\sum_{i=1}^m {D}^\sigma(\sigma_i) + n + |\sigma| +
		\sum_{i=1}^m |\sigma_i|\log g(n,i) + \sum_{i=1}^{m-1} |\sigma_i|\log\log n_i\right).  \label{eq:3-1}
		\end{equation}
		We stop building any new layer once P1 in Table~\ref{tb:P} is violated, it follows that for $i \in [1,m-1]$, $f(n_i) \geq g(n,i)(\log_2 n_i)^{c_2+2}$ which gives $\log g(n,i) = O(\log f(n_i)) = O(\log\log n_i)$.  So we can write
		\begin{equation*}
		\sum_{i=1}^m |\sigma_i|\log g(n,i) + \sum_{i=1}^{m-1} |\sigma_i|\log\log n_i = O\left(|\sigma_m|\log g(n,m) + \sum_{i=1}^{m-1} |\sigma_i|\log \log n_i\right).
		\end{equation*}
		Moreover, if $f(n_m) \geq g(n,m)(\log_2 n_m)^{c_2+2}$, we can simplify further by the same reasoning:
		\begin{equation*}
		\sum_{i=1}^m |\sigma_i|\log g(n,i) + \sum_{i=1}^{m-1} |\sigma_i|\log\log n_i = O\left(\sum_{i=1}^m |\sigma_i|\log \log n_i\right).
		\end{equation*}
		Let $m' = m$ if $n_m \geq (c_1^2+c_1)^{c_1-1}$ and $f(n_m) \geq g(n,m)(\log_2 n_m)^{c_2+2}$, and let $m' = m - 1$ otherwise.  We can rewrite \eqref{eq:3-1} as
		\begin{equation}
		\sum_{i=1}^m \Gamma_i = O\left({D}^\sigma(\sigma) + n + |\sigma| + 
		\Lambda + \sum_{i=1}^{m'} |\sigma_i|\log\log n_i\right),  \label{eq:3}
		\end{equation}
		where
		\[
		\Lambda = \left\{\begin{array}{lcl}
		0 & & \mbox{if $n_m \geq (c_1^2+c_1)^{c_1-1}$ and } \\
		  & & \;\;\;\mbox{$f(n_m) \geq g(n,m)(\log_2 n_m)^{c_2+2}$}, \\
		  \\
		|\sigma_m|\log g(n,m) & & \mbox{otherwise}.
		\end{array}\right.
		\]
		Recall that $\widehat{D}^\sigma$ is the point location decision tree in Lemma~\ref{lem:generic-1}(i), $\hat{\sigma}_{i,j}$ consists of query points in $\sigma_{i,j}$ that are located successfully at leaves of $\widehat{D}^\sigma$ at depth $\log_2 \log_2 n_i$ or less, and $\hat{\sigma}_i = \bigcup_j \hat{\sigma}_{i,j}$.
		By Lemma~\ref{lem:generic-2}, for $i \in [1,m']$,
		\begin{equation}
		|\hat{\sigma}_i| = \sum_j |\hat{\sigma}_{i,j}| =
		O\left(\sum_j f(n_i)^2/(\log n_i)^{c_2+1} + |\sigma_i|/\log n_i \right).  \label{eq:3-0-0}
		\end{equation}
		If $|\sigma_i\! \setminus \! \hat{\sigma}_i| \geq |\hat{\sigma}_i|$, then
		\begin{eqnarray*}
		|\sigma_i|\log\log n_i & = & |\sigma_i \! \setminus \! \hat{\sigma}_i|\log\log n_i + |\hat{\sigma}_i| \log \log n_i \\
		& = & O(|\sigma_i \! \setminus \! \hat{\sigma}_i| \log\log n_i ) \\
		& = & O({D}^{\sigma}(\sigma_i)).
		\end{eqnarray*}
		In the last step above, we use two facts.  First, $\widehat{D}^{\sigma}(\sigma_i) =
		\Omega(|\sigma_i\! \setminus \! \hat{\sigma}_i|\log\log n_i)$, which is true because each query point in $\sigma_i \! \setminus \! \hat{\sigma}_i$ lies in a triangle (leaf node) in $\widehat{D}^\sigma$ at depth greater than $\log_2\log_2 n_i$.  Second, $\widehat{D}^\sigma(\sigma_i) = \Theta(D^\sigma(\sigma_i))$ by Lemma~\ref{lem:generic-1}(i).  If $|\sigma_i \! \setminus \! \hat{\sigma}_i| < |\hat{\sigma}_i|$, then
		\begin{eqnarray*}
		|\sigma_i|\log\log n_i & = & |\sigma_i \! \setminus \! \hat{\sigma}_i|\log\log n_i + |\hat{\sigma}_i|\log \log n_i \\
		& = & O(|\hat{\sigma}_i|\log \log n_i) \\
		& \stackrel{\eqref{eq:3-0-0}}{=} & O\left(\sum_j f(n_i)^2/(\log n_i)^{c_2} + |\sigma_i|\right) \\
		& = & O\left(\sum_j n_i + |\sigma_i| \right).
		\end{eqnarray*}
		In the last step above, we use the fact that $c_1 \geq c_2+2$ and $f(n_i) = (\log_2 n_i)^{c_1}$.  So $f(n_i)^2/(\log n_i)^{c_2} = O(n_i)$.  Combining the two cases above and the fact that ${D}^\sigma(\sigma_i) = \Omega(|\sigma_i|)$,
		we obtain $|\sigma_i|\log\log n_i = O\bigl({D}^\sigma(\sigma_i) + \sum_j n_i\bigr)$.
		Substituting this equation into \eqref{eq:3} gives
		\[
		\sum_{i=1}^m \Gamma_i = O\left({D}^\sigma(\sigma) + n + |\sigma| + \Lambda +
		\sum_{i=1}^{m'} {D}^\sigma(\sigma_i) + \sum_{i=1}^{m'} \sum_j n_i \right).
		\]
		We have shown previously that $\sum_{i=2}^m \sum_j n_i\log n_{i-1} = O(|\sigma|)$.  Note that $\sum_j n_1 = n_1 = n$ as the structures at the first layer are never replaced.  Also, $\sum_{i=1}^{m'} {D}^\sigma(\sigma_i) \leq {D}^\sigma(\sigma)$ and ${D}^\sigma(\sigma) = \Omega(|\sigma|)$.  Therefore,
		\[
		\sum_{i=1}^m \Gamma_i = O(D^\sigma(\sigma) + n + \Lambda) = O(\mathrm{OPT} + n + \Lambda).
		\]
		
		The total size of our data structure is $O(\sum_{i=1}^m n_i)$.  Recall that $n_i = O(f(n_{i-1})/\log n_{i-1}) = O((\log n_{i-1})^{c_1-1})$.  It is clear that $n_i = O(n/2^{i-1})$ by an inductive argument.  As a result, the total size is $O(\sum_{i=1}^m n_i) = O(\sum_{i=1}^m n/2^{i-1}) = O(n)$.
\end{proof}

\section{Convex subdivision}
\label{sec:convex}

We present an algorithm ${\cal A}(f,g)$ for convex subdivisions so that Theorem~\ref{thm:generic-3} can be invoked to give an optimal processing time of $O(\mathrm{OPT} + n)$.  We first review the canonical triangulation in~\cite{cheng2015adaptive} which will be used later.  

\subsection{Canonical triangulation}

Let $S$ denote a convex subdivision with $n$ vertices.  For each bounded region
$r$ in $S$, the procedure {\sf TriReg} is called to triangulate $r$.
Figure~\ref{fg:region} gives an example.  Let $\mathit{size}(r)$ denote the complexity of $r$.  {\sf TriReg} runs in $O(\mathit{size}(r))$ time
and produces a triangulation of $O(\mathit{size}(r))$ size.  This triangulation method was
first introduced by Dobkin and Kirkpatrick for convex polygon intersection
detection~\cite{Dobkin1990}.   Every line segment in $r$ intersects $O(\log
\mathit{size}(r))$ triangles.

\begin{quote}
	\noindent {\sf TriReg}$(r)$
	\begin{enumerate}
		\item If $r$ is a line segment or triangle, then return.
		\item Take any maximum subsequence $\alpha$ of vertices of $r$ such that no two
		vertices in $\alpha$ are adjacent along the boundary of $r$, except possibly
		the first and the last ones.
		\item Connect the vertices in $\alpha$ to form a convex polygon $r'$.
		\item Call {\sf TriReg}$(r')$.
	\end{enumerate}
\end{quote}

Second, we triangulate the exterior region of $S$.  We pick three boundary
edges of $S$ such that removing them gives three boundary chains of $S$ of
roughly equal sizes.  The support lines of these three edges bound a triangle,
denoted by $B_S$, that contains $S$.\footnote{It is possible that $B_S$ is unbounded, but we will assume that $B_S$ is bounded for simplicity.} We call the three interior-disjoint regions between the boundaries
of $B_S$ and $S$ \emph{fins}.\footnote{We call them fins instead of boomerangs as in~\cite{cheng2015adaptive}.}   

\begin{definition}
	The boundary of each fin has three distinguished parts: the head, which consists of two straight sides, the apex, which is the convex vertex incident to the two head sides, and the tail, which is a reflex chain attached to the non-apex vertices of the head.  Figure~\ref{fg:boomerang}(a) gives an example.  For
	each fin~$\alpha$, we denote its apex and tail by $\apex(\alpha)$ and $\tail(\alpha)$, respectively.
\end{definition}

For each fin $\alpha$ between the boundaries of $B_S$ and $S$, we call the procedure {\sf SplitFin} to partition $\alpha$ hierarchically into triangular regions as well as to construct a binary tree that represents this hierarchy.  

\begin{definition}
	Given a fin $\alpha$, we use $T_\alpha$ to denote the binary tree representing the hierarchy of triangular regions produced by {\sf SplitFin}.\footnote{The binary tree $T_\alpha$ is not constructed in~\cite{cheng2015adaptive}, but we will need it.} Figure~\ref{fg:split} gives an example.   Let $\mathit{size}(\alpha)$ denote the complexity of $\alpha$.  {\sf SplitFin} runs in $O(\mathit{size}(\alpha))$ time. $T_\alpha$ has $O(\mathit{size}(\alpha))$ size and $O(\log \mathit{size}(\alpha))$ height.
\end{definition}

\begin{quote}
	\noindent {\sf SplitFin}$(\alpha)$
	\begin{enumerate}
		\item If $\alpha$ is a triangle, then return.	
		\item Take the middle edge $e$ of $\tail(\alpha)$.
		\item Divide $\alpha$ with the support line of $e$ into a triangle $t$ and two smaller fins
		$\alpha_1$ and $\alpha_2$.
		\item Call {\sf SplitFin}$(\alpha_1)$ and {\sf SplitFin}$(\alpha_2)$ to obtain $T_{\alpha_1}$ and $T_{\alpha_2}$.
		\item Create the binary tree $T_\alpha$ with root $v$ containing $t$.  Make $T_{\alpha_1}$ and $T_{\alpha_2}$
		left and right subtrees of $v$.
		\item Return $T_\alpha$.
	\end{enumerate}
\end{quote}

\begin{figure}
	\centerline{\includegraphics[scale=0.6]{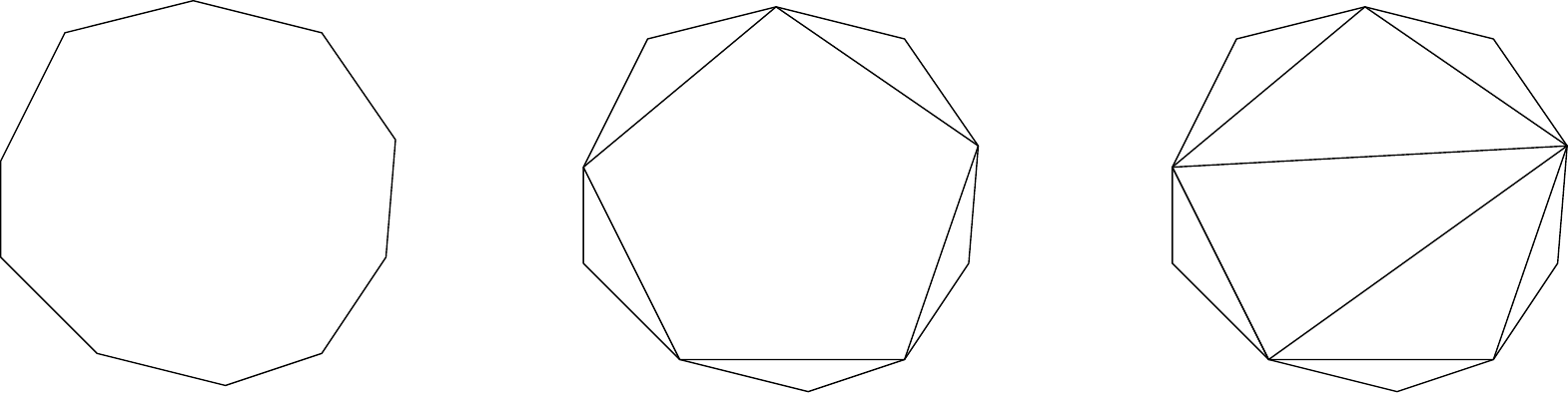}}
	\caption{Triangulation of a bounded region in $S$.}
	\label{fg:region}
\end{figure}

\begin{figure}
	\centering
	\begin{tabular}{ccc}
		\includegraphics[scale=0.4]{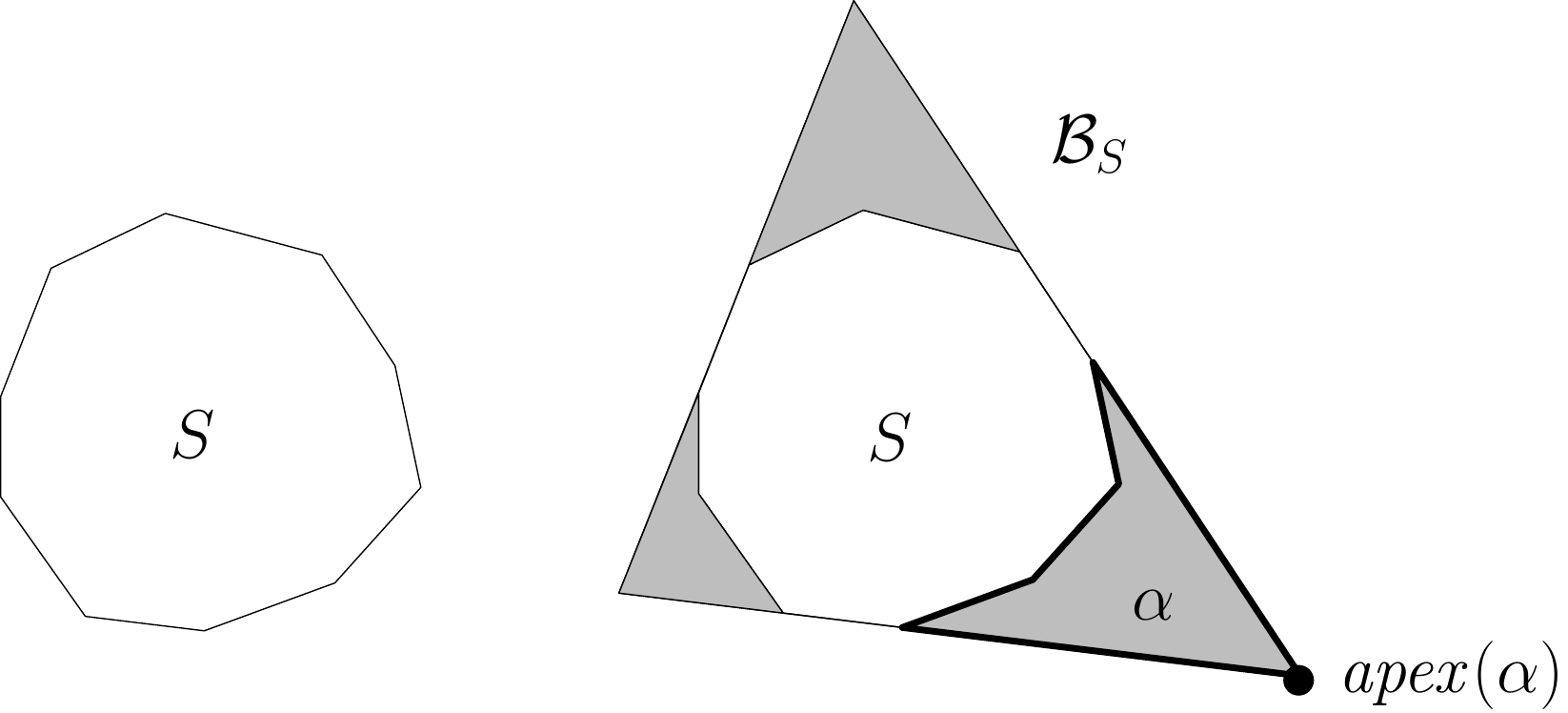} & &
		\includegraphics[scale=0.45]{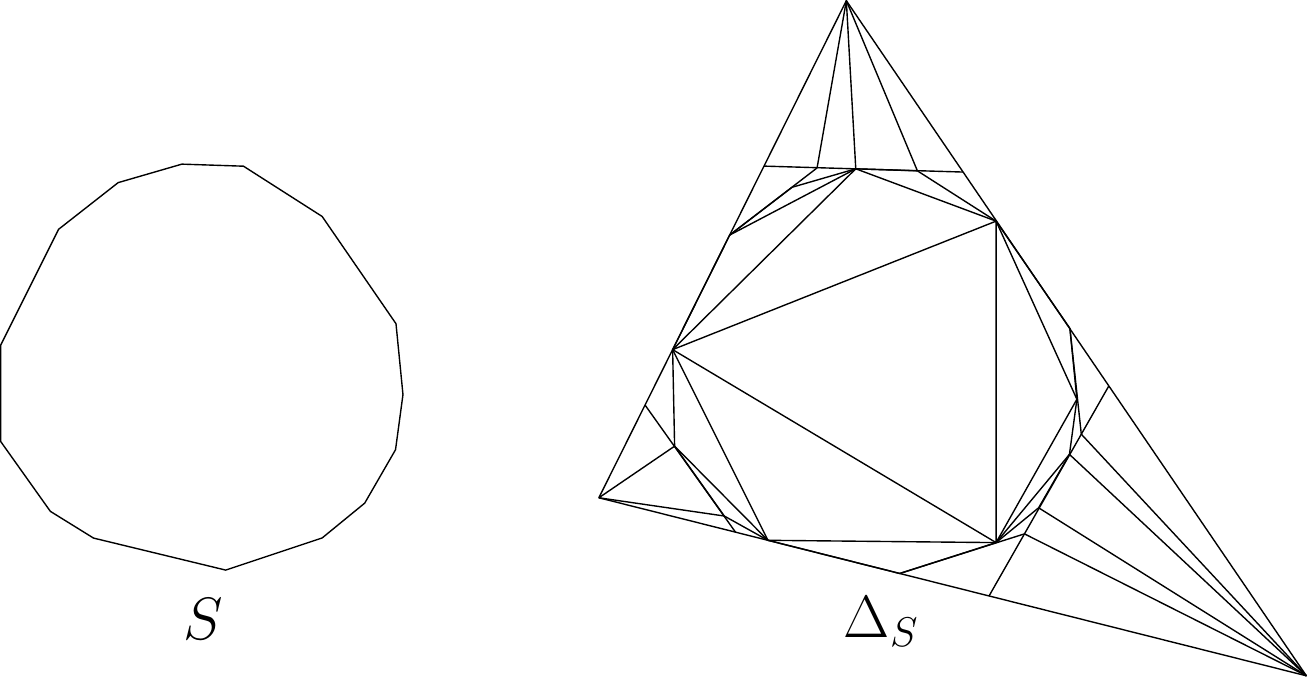} \\
		(a) & \hspace*{.2in} & (b)
	\end{tabular}
	\caption{(a)~The three fins are shown shaded.  For fin $\alpha$, $\apex(\alpha)$ is marked with a black dot, the head sides of $\alpha$ are the two segments incident to $\apex(\alpha)$, and $\tail(\alpha)$ is the reflex chain opposite $\apex(\alpha)$.  (b)~An example in which
		$S$ is just one convex polygon.}
	\label{fg:boomerang}
\end{figure}

\cancel{
\begin{figure}
	\centerline{\includegraphics[scale=.8]{figures/split}}
	\caption{The nodes of $T_\alpha$ are given the same colors as the corresponding regions in $\tilde{\alpha}$.}
	\label{fg:split}
\end{figure}
}

\begin{figure}
	\centering
	\begin{tabular}{ccccc}
		\includegraphics[scale=0.9]{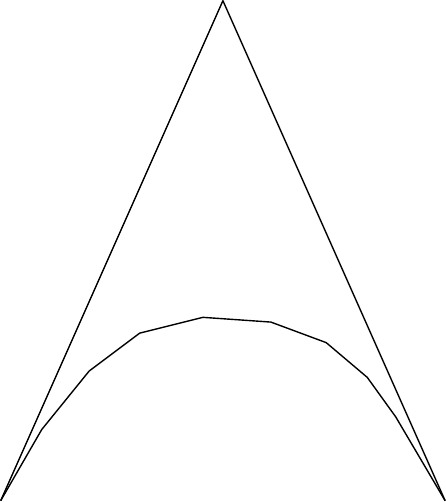} & &
		\includegraphics[scale=0.9]{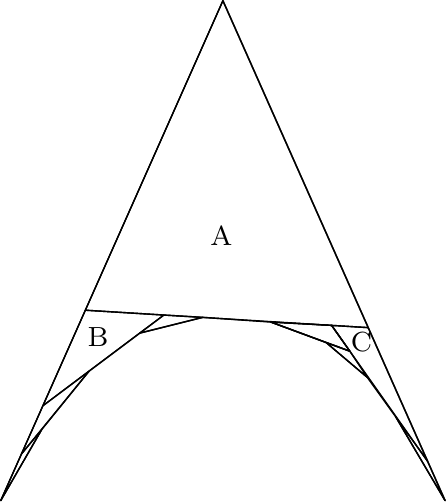} & &
		\includegraphics[scale=1.2]{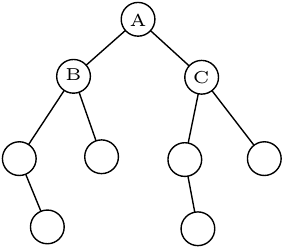} \\ \\
		(a) & \hspace*{.2in} & (b) & \hspace*{.2in} & (c)
	\end{tabular}
\caption{The fin $\alpha$ in (a) is partitioned into triangular regions in (b).  In (c), the rooted tree $T_\alpha$ of the triangular regions in (b) is shown.  The top three nodes in $T_\alpha$ are the triangular regions A, B and C in (b).  The other nodes of $T_\alpha$ correspond in a similar way to other triangular regions in (b).}
\label{fg:split}
\end{figure}

For every triangular region $r$ in $T_{\alpha}$, there is
exactly one side $e$ of $r$ that bounds $S$, and $e$ contains $O(\log
\mathit{size}(\alpha))$ vertices.  Finally, we call {\sf TriFin}$(\alpha)$ to obtain a
triangulation of $\alpha$.

\parbox[t]{\textwidth}{
	\begin{quote}
		\noindent {\sf TriFin}$(\alpha)$
		\begin{enumerate}
			\item For each triangular region $r$ in $T_{\alpha}$, do
			\begin{enumerate}
				\item take the side $e$ of $r$ that contains vertices in its interior,
				\item add edges to connect the vertices in $e$ to the vertex of $r$ opposite $e$.
			\end{enumerate}
		\end{enumerate}
\end{quote}}

The triangulations of the bounded regions and the three fins form the \emph{canonical triangulation} of $S$.

\begin{lemma}\label{lem:basic}
	For every convex subdivsion $S$ with $n$ vertices, a canonical triangulation of $S$ can be constructed in $O(n)$ time and it satsfies the following properties. 
	\begin{emromani}
		\item The canonical triangulation has $O(n)$ vertices, a triangular outer boundary, and no additional vertex in any bounded region of $S$.
		\item Any triangle that lies inside a bounded region of $S$ intersects $O(\log n)$ triangles in the canonical triangulation.
		\item Any triangle that lies inside a fin intersects $O(\log^2 n)$ triangles in the canonical triangulation.
	\end{emromani}
\end{lemma}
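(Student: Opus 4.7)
The plan is to dispatch (i)--(iii) in three stages. For (i), I would observe that \textsf{TriReg} adds no new vertex (it merely joins existing vertices of $r$), and every Steiner point introduced by \textsf{SplitFin} lies at the intersection of a line through some tail edge with a head side of the current sub-fin; since head sides lie on $\partial B_S$, no Steiner point ever falls into the interior of a bounded region of $S$. The runtime and complexity bounds then follow by summing $O(\mathit{size}(r))$ over bounded regions and $O(\mathit{size}(\alpha))$ over fins (for \textsf{SplitFin} and \textsf{TriFin} alike), which telescopes to $O(n)$, while $B_S$ is triangular by construction.

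For (ii), I would directly invoke the Dobkin--Kirkpatrick property asserted in the description of \textsf{TriReg}: every line segment inside $r$ crosses $O(\log\mathit{size}(r))$ triangles of \textsf{TriReg}$(r)$. Because all vertices of \textsf{TriReg}$(r)$ lie on $\partial r$, a triangle $T\subseteq r$ contains no triangulation vertex in its interior, so every triangle of \textsf{TriReg}$(r)$ meeting $T$ must have an edge crossing $\partial T$. Summing the per-segment bound over the three sides of $T$ gives the desired $O(\log n)$.

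For (iii) I would use a two-layer argument on the fin triangulation. The outer layer shows that any segment $L\subseteq \alpha$ visits only $O(\log n)$ triangular regions of the \textsf{SplitFin} partition: first I would argue $L\cap \alpha_v$ is a single connected sub-segment at every sub-fin $\alpha_v$ encountered, because although $\alpha_v$ may have up to two ancestral chords on its boundary, as soon as $L$ exits $\alpha_v$ through any ancestral chord it enters the corresponding apex triangle $t_w$ and can never escape (escaping would require re-crossing the line through $e_w$, and $L$ is straight). Inside $\alpha_v$, this single sub-segment is cut by the line through $e_v$ at most once, producing at most one new piece in $t_v$ plus one recursive piece in a single $\alpha_v^k$, yielding the recurrence $R(\alpha_v)\le 1+R(\alpha_v^k)$ whose solution is the $O(\log n)$ height of $T_\alpha$. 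The inner layer shows that each such region $t_w$ is fan-triangulated into only $O(\log n)$ triangles: the unique subdivided side of $t_w$ lies along the line through $e_w$, and a descendant split of \textsf{SplitFin} deposits a Steiner vertex into the interior of that side exactly when it follows the ``chord side'' of $\alpha_w^1$ (or $\alpha_w^2$) at every level, a chain whose length is bounded by the $O(\log n)$ height of $T_{\alpha_w^1}$. Multiplying the two layers gives $O(\log^2 n)$ triangles crossed per segment, and since $T\subseteq\alpha$ still contains no triangulation vertex in its interior, summing over its three sides yields (iii). The main obstacle will be the inner-layer chord-chain accounting: a naive count of Steiner vertices on that subdivided side would overshoot to $O(\mathit{size}(\alpha_w))$ unless one carefully isolates the chord-following descendants.
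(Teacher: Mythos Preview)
Your plan follows the paper's proof closely: part (i) from the construction, part (ii) via the Dobkin--Kirkpatrick segment property, and part (iii) by the two-layer argument (segments meet $O(\log n)$ triangular regions of \textsf{SplitFin}, each region is fan-triangulated into $O(\log n)$ triangles). The paper phrases the outer layer as ``the triangular regions intersected by any edge lie on an ancestor-to-descendant path in $T_\alpha$'', which is exactly what your recurrence $R(\alpha_v)\le 1+R(\alpha_v^k)$ establishes; and the paper simply asserts the inner-layer $O(\log n)$ fan size (stated as a property of the construction), whereas you sketch a proof via the chord-following chain, which is fine.

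There is, however, a genuine gap in your final step for (iii). You write that ``$T\subseteq\alpha$ still contains no triangulation vertex in its interior'', and you use this to pass from per-segment bounds to a bound for the whole triangle $T$. That claim is false. Your own description in (i) is slightly off: only the head sides of the \emph{top-level} fin $\alpha$ lie on $\partial B_S$; after one step of \textsf{SplitFin}, one head side of each child fin lies along the chord (the support line of the middle tail edge) and hence in the interior of $\alpha$. Steiner vertices created by subsequent \textsf{SplitFin} calls on that inner head side therefore lie strictly inside $\alpha$, and a triangle $T\subseteq\alpha$ may well contain such vertices---and even entire fan triangles---in its interior. So the ``no interior vertex $\Rightarrow$ every intersecting triangle meets $\partial T$'' reduction does not go through at the level of triangles.

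The paper sidesteps this by doing the reduction one level up, at the level of \emph{triangular regions}: every triangular region produced by \textsf{SplitFin} has a side that contains a tail edge of $\alpha$, hence a point of $\partial\alpha$, so no triangular region can sit inside $\mathrm{int}(T)$. It follows that every region meeting $T$ meets $\partial T$, giving $O(\log n)$ regions total; multiplying by the $O(\log n)$ fan triangles per region yields $O(\log^2 n)$. You should replace your final ``no interior vertex'' step by this region-level argument; once you do, your proof is essentially identical to the paper's.
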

\begin{proof}
	The correctness of (i) is clear from the construction.  Let $t$ be any triangle.  
	
	Suppose that $t$ lies inside a bounded region $r \in S$.  As there is no additonal vertex in any bounded region of $S$, a triangle $\tau$ in the canonical triangulation intersects $t$ if and only if an edge of $t$ intersects $\tau$.  $\mathsf{TriReg}(r)$ generates a sequence of convex polygons successively.  The difference between any two adjacent polygons in the sequence is a set of triangles.  The set of all such triangles form the triangulation of $r$.  Every edge of $t$ intersects at most two triangles in the difference of two adjacent polygons in the sequence.  The $O(\log n)$ bound thus follows.  
	
	Suppose that $t$ lies inside a fin $\alpha$.  Observe that a triangular region in $T_\alpha$ intersects $t$ if and only if it intersects an edge of $t$.  Also, the triangular regions in $T_{\alpha}$ intersected by any edge  lie on an ancestor-to-descendant path in $T_\alpha$.  Thus, $t$ intersects $O(\log \mathit{size}(\alpha))$ triangular regions in $T_\alpha$.  Every triangular region is partitioned into $O(\log \mathit{size}(\alpha))$ triangles, which implies that $t$ intersects $O(\log^2 \mathit{size}(\alpha))$ triangles.
\end{proof}

Fins and their triangulations were first described by Hershberger and Suri~\cite{paper:Hershberger1993} in the context of solving the ray-shooting problem in a simple polygon.\footnote{Fins are called boomerangs in~\cite{paper:Hershberger1993}.}  They give a tighter but more involved analysis that yields an $O(\log |\alpha|)$ bound on the triangles intersected by a segment.  For our purposes, a polylogarithmic bound suffices.

\subsection{Algorithm for hierarchy construction}
\label{sec:first}

We take the canonical triangulation in Lemma~\ref{lem:basic} as $\Delta_{1,1}$.  We define the parameters $c_1$, $c_2$, $f$, and $g$ required in Table~\ref{tb:C}:
\[
c_1 = 6, \quad
c_2 = 2, \quad
f(k) = (\log_2 k)^6, \quad
g(n,i) = \log^2_2 n_i.
\]

It remains to describe how to construct new structures at the $(i+1)$-th level, including $R_{i+1,j+1}$, $S_{i+1,j+1}$, and $\Delta_{i+1,j+1}$.  As described in Section~\ref{sec:hierarchy}, $D_{i+1,j+1}$ consists of one point location structure obtained by applying Theorem~\ref{thm:iacono} to $\Delta_{i+1,j+1}$ and another worst-case optimal point location structure for $\Delta_{i+1,j+1}$.

\paragraph{Second layer.}  To start, we discuss the construction of the $j$-th version of structures at the second level.  We extract the $f(n_1)/(\log_2 n_1)^{c_2} = f(n_1)/(\log_2 n_1)^{2}$ triangles in $\Delta_{1,1}$ that have the highest $f(n_1)/(\log_2 n_1)^{2}$ access frequencies in $\Delta_{1,1}$.  Let $X$ denote the set of triangle extracted.  To extract $X$ quickly, we maintain a doubly linked list $A$ such that the $i$-th entry of $A$ stores a doubly linked list of triangles with the $i$-th highest frequency.  Whenever we need to output $X$, we scan the lists in the entries of $A$ in order until we have collected $f(n_1)/(\log_2 n_1)^{2}$ triangles.  Whenever the frequency of a triangle $t \in \Delta_{1,1}$ is incremented, we need to relocate $t$ within $A$.  Suppose that $t$ is currently stored in the list at the $k$-th entry of $A$.  If the triangles in the list at the $(k-1)$-th entry of $A$ have the same frequency as $t$, then we move $t$ to the end of the list at the $(k-1)$-th entry.  Otherwise, the triangles in the list at the $(k-1)$-th entry have a higher frequency than $t$, and so we insert a new entry of $A$ between the $(k-1)$-th and the $k$-th entries and make $t$ a singleton list at this new entry of $A$.  If the list at the $k$-th entry of $A$ becomes empty after moving $t$, we delete this entry of $A$.  So each update of $A$ takes $O(1)$ time.

For each triangle $t$ in $\Delta_{1,1}$, if $t$ lies in a region $r$ in $S_{1,1}$, we store the region id $r$ at $t$.  Moreover, if $r$ is the exterior region of $S$, then $t$ lies inside a triangular region $\tau$ in $T_\alpha$ for some fin $\alpha$, and we store the id of $\tau$ at $t$.  By sorting the triangles in $X$ with respect to the ids stored at them, we can find the triangles in $r \cap X$ for every region $r$ in $S_{1,1}$.

For each bounded region $r$ in $S$, let $X_r$ denote the subset of triangles $\{ t \in X : t \subseteq r \}$, let $\conv(X_r)$ denote the convex
hull of the triangles in $X_r$, and we can compute $\conv(X_r)$ in
$O(|X_r|\log |X_r|)$ time.  The size of $\conv(X_r)$ is $O(|X_r|)$.  For every fin $\alpha$ and every triangle $t \in X$ inside $\alpha$, we mark the triangular region $\tau$ in $T_\alpha$ that contains $t$, as well as all ancestors of $\tau$ in $T_\alpha$.  We form the union of the marked triangular regions, which is a fin shape.  Denote the union by $\shrink(\alpha)$.  Every edge in the reflex chain of $\shrink(\alpha)$ supports an outer boundary edge of $S$.  Figure~\ref{fg:split2} gives an example.  The size of $\mathit{shrink}(\alpha)$ is $|\{ t \in X : t \subseteq \alpha \}| \cdot O(\log \mathit{size}(\alpha))$, and $\mathit{shrink}(\alpha)$ can be computed in time linear in its size.

The set of convex hulls and shrunk fins obtained in the above form the new set of regions, $R_{2,j}$, at the second level.  The number of vertices in $R_{2,j}$ is $O(\log n_1 \cdot f(n_1)/(\log n_1)^2) = O(f(n_1)/\log n_1) = O(n_2)$.

The triangle $B_S$ encloses $R_{2,j}$.  We apply a plane sweep to fill the space among the regions in $R_{2,j}$ inside $B_S$ with additional triangles.  No additional vertex is introduced.  The resulting subdivision is $S_{2,j}$.

We construct $\Delta_{2,j}$ as follows.  For each convex hull $\conv(X_r)$ in $R_{2,j}$, we call {\sf TriReg}$(\conv(X_r))$ to triangulate $\conv(X_r)$.  Each resulting triangle stores the region id $r$.  For every fin $\shrink(\alpha)$ in $R_{2,j}$,
we call {\sf SplitBR}$(\shrink(\alpha))$ and then {\sf TriBR}$(\shrink(\alpha))$ to obtain a triangulation of $\shrink(\alpha)$.
At each resulting triangle $t$, we store the id of the exterior region of $S$ and the id of the triangular region in $T_{\shrink(\alpha)}$ that contains $t$.  The triangulation $\Delta_{2,j}$ has $O(n_2)$ vertices.

Constructing $R_{2,j}$ takes $O(n_2)$ time, constructing $S_{2,j}$ takes $O(n_2\log n_2)$ time, and constructing $\Delta_{2,j}$ takes $O(n_2)$ time.  Hence, $P_2 = O(n_2\log n_2)$.

\cancel{
\begin{figure}
	\centerline{\includegraphics[scale=0.7]{figures/split2}}
	\caption{The two triangular regions in $\tilde{\alpha}$ with white dots
		contain some triangles in $X$.  Corresponding nodes in $T_\alpha$ are also marked with white dots.  Then, all ancestors of these nodes in $T_\alpha$ are marked, and the union of the corresponding triangular regions in $\tilde{\alpha}$ is a fin $\shrink(\alpha)$ (shown shaded).}
	\label{fg:split2}
\end{figure}
}

\begin{figure}
	\centering
	\begin{tabular}{ccc}
		\includegraphics[scale=0.9]{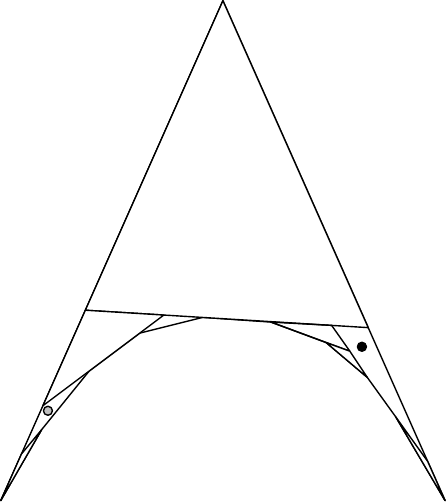} & & 
		\includegraphics[scale=0.9]{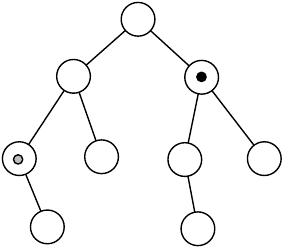} \\
		(a) & \hspace*{.4in} & (b) \\ \\ \\
		\includegraphics[scale=0.9]{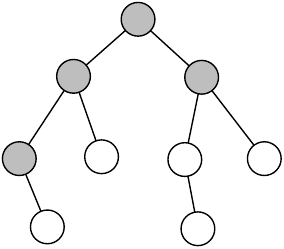} & & 
		\includegraphics[scale=0.9]{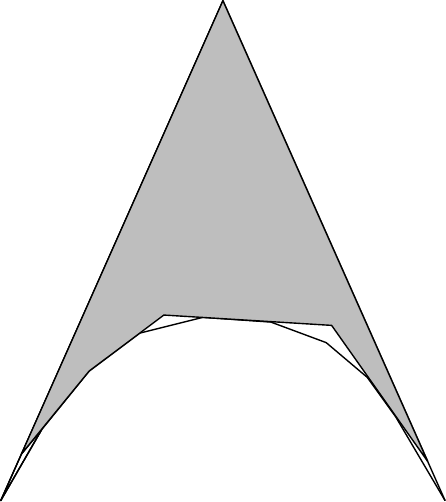} \\
		(c) & \hspace*{.4in} & (d)	
	\end{tabular}
\caption{In (a), we have a fin $\alpha$ and the two triangular regions labelled with grey and black dots contain some triangles in $X$.  The corresponding nodes in $T_\alpha$ in (b) are also labelled with grey and black dots.  These nodes and their ancestors are marked in (c).  The union of the corresponding triangular regions is $\mathit{shrink}(\alpha)$, which is shown shaded in (d).}
\label{fg:split2}
\end{figure}

\paragraph{Higher layers.}  To construct the $(j+1)$-th version of structures at the $(i+1)$-th layer for $i \geq 2$, we extract the $f(n_i)/(\log_2 n_i)^{2}$ triangles in $\Delta_{i,*}$ with the highest $f(n_i)/(\log_2 n_i)^{2}$ access frequencies.  Let $X$ denote the set of triangles extracted.  For each convex hull $C \in R_{i,*}$, let $X_C = \{t \in X : t \subseteq C\}$, and we compute $\conv(X_C)$.  For each fin $\alpha \in R_{i,*}$ and every triangle $t$ in $X$ inside $\alpha$, we mark the triangular region $\tau$ in $T_\alpha$ that contains $t$, as well as all ancestors of $\tau$ in $T_\alpha$.  We form the union, $\shrink(\alpha)$, of the marked triangular regions.  The collection of convex hulls and shrunk fins form the new set of regions $R_{i+1,j+1}$.  

The construction of $S_{i+1,j+1}$ and $\Delta_{i+1,j+1}$ is the same as described for the second layer.  At each triangle $t \in \Delta_{i+1,j+1}$ that lies inside a region in $R_{i+1,j+1}$, we store the id of the region in $S$ that contains $t$.  Moreover, if $t$ is contained in a shrunk fin $\mathit{shrink}(\alpha)$, we also store at $t$ the id of the triangular region in $T_{\mathit{shrink}(\alpha)}$ that contains $t$.

The number of vertices in $R_{i+1,j}$ is $O(f(n_i)/\log n_i) = O(n_{i+1})$. The total time need to construct these new structures is $P_{i+1} = O(n_{i+1}\log n_{i+1})$.

\paragraph{Meeting the conditions.}  Conditions C1--C3 are satisfied by the setting of $c_1$, $c_2$, and $f$ as well as the working of the construction as described above.  

Consider Condition C4.  Consider $R_{i,j}$ for some $i$ and $j$.  Let $t$ be any triangle.  If $t$ lies inside a bounded region $r$ of $S$, then $t$ can only intersect the convex hull $C$ in $R_{i,j}$ that lies inside $r$.  The intersection $t \cap C$ has $O(1)$ size, and so $t \cap C$ can be split into $O(1)$ triangles that lie completely inside or outside $C$.  Each such triangle intersects $O(\log n_i)$ triangles in $\{ t' \in \Delta_{i,j} : t' \subseteq C\}$.  Suppose that $t$ lies inside the exterior face of $S$.  Then, $t$ intersects at most one fin in $R_{i,j}$, say $\alpha$.  The triangle $t$ contains at most one vertex of $\tail(\alpha)$; otherwise, $t$ would cross the outer boundary of $S$, a contradiction to the assumption that $t$ lies inside the exterior region of $S$.  Therefore, we can split $t \cap \alpha$ into $O(1)$ triangles that lie completely inside or outside $\alpha$.  By the same reasoning as in the proof of Lemma~\ref{lem:basic}(iii), each such triangle intersects $O(\log^2 n_i)$ triangles in $\{ t' \in \Delta_{i,j} : t' \subseteq \alpha\}$.  This proves that condition C4 is satisfied by the setting $g(n,i) = \log^2_2 n_i$.

Condition C5 in Table~\ref{tb:C} holds by our setting of $c_2$, $f$ and $g$.  In fact, we have $f(n_i) \geq g(n,i)(\log_2 n_i)^{c_2+2}$ for all $n$ and $i$.

Let $m$ be the index of the highest layer in processing an online query sequence.  Consider the application of Theorem~\ref{thm:generic-3}.  Condition P1 in Table~\ref{tb:P} is satisfied as $f(n_i) \geq g(n,i)(\log_2 n_i)^{c_2+2}$ for all $n$ and $i$.  If P2 in Table~\ref{tb:P} holds, the total processing time is $O(\mathrm{OPT} + n)$.  If P2 does not hold, then $n_m < (c_1^2+c_1)^{c_1-1}$, which implies that $\log g(n,m) = O(1)$.  Hence, the total processing time is still $O(\mathrm{OPT} + n)$.

\begin{theorem}
\label{thm:convex}
Let $S$ be a planar convex subdivision with $n$ vertices. There is a point-line
comparison based structure that processes any online
sequence $\sigma$ of point location queries in $S$ in $O(\mathrm{OPT} + n)$
time, where {\rm OPT} is the minimum time required by any point location linear
decision tree for $S$ to process $\sigma$.  The time bound includes the
$O(n)$ preprocessing time.  The space usage is $O(n)$.
\end{theorem}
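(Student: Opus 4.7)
The plan is to invoke the generic machinery of Theorem~\ref{thm:generic-3} with a carefully chosen algorithm $\mathcal{A}(f,g)$. I would set $c_1 = 6$, $c_2 = 2$, $f(k) = (\log_2 k)^{c_1} = (\log_2 k)^6$, and $g(n,i) = (\log_2 n_i)^2$. The bottom layer $\Delta_{1,1}$ is taken to be the canonical triangulation of Lemma~\ref{lem:basic}. For each higher layer I would use the construction already sketched in Section~\ref{sec:first}: extract the $f(n_i)/(\log_2 n_i)^{c_2}$ most-frequently accessed triangles (maintained in an auxiliary doubly linked list indexed by frequency so that each update is $O(1)$), take the convex hull of the selected triangles inside every bounded region of $S$, and for each fin $\alpha$ form $\mathrm{shrink}(\alpha)$ by marking the triangular regions of $T_\alpha$ containing selected triangles together with all their ancestors. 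Fill the remaining space inside $B_S$ by plane sweep, then triangulate the convex hulls by \textsf{TriReg} and the shrunk fins by \textsf{SplitFin} followed by \textsf{TriFin}.

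The core of the proof is then to verify that $\mathcal{A}(f,g)$ satisfies C1--C5 of Table~\ref{tb:C}. C1 is immediate from the choice of $f$. For C2, I would observe that in each shrunk fin an extracted triangle contributes only an ancestor path of length $O(\log n_{i-1})$ in $T_\alpha$, so $|R_{i+1,\cdot}| = O(\log n_{i-1}) \cdot f(n_i)/(\log_2 n_i)^{c_2} = O(f(n_i)/\log_2 n_i) = O(n_{i+1})$; the convex hulls inside bounded regions are dominated by this. C3 follows because constructing $R_{i+1,\cdot}$ takes $O(n_{i+1})$ time, the plane sweep forming $S_{i+1,\cdot}$ costs $O(n_{i+1}\log n_{i+1})$, and both \textsf{TriReg} and \textsf{SplitFin}+\textsf{TriFin} run in linear time. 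C4 is where the canonical triangulation does the real work: a query triangle $t$ lying in a bounded region of $S$ meets at most one convex hull $C \in R_{i,j}$, and after clipping $t \cap C$ into $O(1)$ triangles completely inside or outside $C$, Lemma~\ref{lem:basic}(ii) applied to the triangulation of $C$ gives an $O(\log n_i)$ bound; a triangle in the exterior region of $S$ meets at most one shrunk fin, contains at most one tail vertex, and after clipping is bounded via the tree argument of Lemma~\ref{lem:basic}(iii) by $O(\log_2^2 n_i) = O(g(n,i))$. C5 is a direct numerical check: $(\log_2 n)^6 \geq (\log_2 n)^2 \cdot (\log_2 n)^4$.

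Once C1--C5 are verified, Theorem~\ref{thm:generic-3} applies and yields either $O(\mathrm{OPT}+n)$ or $O(\mathrm{OPT}+n+|\sigma_m|\log g(n,m))$. With our parameters P1 holds for every $i$, so the only way the first case can fail is if P2 is violated at layer $m$, i.e.\ $n_m < (c_1^2 + c_1)^{c_1-1}$. But then $n_m = O(1)$, hence $g(n,m) = O(1)$ and $\log g(n,m) = O(1)$, so $|\sigma_m|\log g(n,m) = O(|\sigma|) = O(\mathrm{OPT})$. The bound $O(\mathrm{OPT}+n)$ therefore holds in both cases. The $O(n)$ space bound is inherited directly from Theorem~\ref{thm:generic-3}.

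I expect the main technical obstacle to be the verification of C4 for queries landing in a fin: one needs that after clipping $t$ against $\mathrm{shrink}(\alpha)$ the resulting $O(1)$ sub-triangles really do lie inside or outside the shrunk fin without crossing its boundary in a bad way, and that the canonical triangulation inside $\mathrm{shrink}(\alpha)$ still satisfies the $O(\log^2 n_i)$ intersection property even though $\mathrm{shrink}(\alpha)$ is produced by unioning ancestor triangular regions rather than being a fresh fin of $S$. The key point I would use is that the \textsf{SplitFin}/\textsf{TriFin} procedures are applied directly to $\mathrm{shrink}(\alpha)$, which is itself a fin whose complexity is at most $n_i$, so Lemma~\ref{lem:basic}(iii) applies to it verbatim; the rest is just a careful case analysis of where the edges of a clipped triangle can enter and leave $\mathrm{shrink}(\alpha)$.
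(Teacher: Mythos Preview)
Your proposal is correct and follows essentially the same approach as the paper: identical parameter choices $c_1=6$, $c_2=2$, $f(k)=(\log_2 k)^6$, $g(n,i)=\log_2^2 n_i$, the same convex-hull / shrunk-fin construction for higher layers, the same verification of C1--C5 (including the clipping argument for C4), and the same endgame observing that P1 always holds so that when P2 fails $n_m=O(1)$ and the extra $|\sigma_m|\log g(n,m)$ term is absorbed. One small slip: in your C2 sketch the ancestor path in $T_\alpha$ has length $O(\log n_i)$, not $O(\log n_{i-1})$, since $\alpha$ is a fin at level $i$; your final bound $O(f(n_i)/\log n_i)$ is nonetheless correct.
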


\section{Connected subdivision}
\label{sec:connected}

We describe an algorithm ${\cal A}(f,g)$ for connected subdivisions so that a processing time of $O(\mathrm{OPT} + n + |\sigma|\log(\log^*n))$ can be obtained by applying Theorem~\ref{thm:generic-3}.  We describe how to compute the first layer in $O(n)$ time in Section~\ref{sec:connectedfirstsol}.  It is a simulation of the procedure in Section~\ref{sec:first}.  The construction of the other layers is much more involved.  We need several new ideas in order to obtain a small $g(n,i)$.  

\subsection{First layer}
\label{sec:connectedfirstsol}

We compute the triangulation $\Delta_{1,1}$ as in~\cite{paper:Hershberger1993} for solving the ray-shooting problem in a simple polygon.  Other auxiliary structures are constructed along the way. The details are given below.

\paragraph{Balanced geodesic triangulation.}  Let $r$ be a simple polygon.  Let $k$ be the number of vertices of $r$.  Pick three vertices $v_1, v_{k/3}, v_{2k/3}$ of $r$ (which divide the boundary of $r$ into chains of roughly equal sizes).  The three geodesic paths inside $r$ among $v_1, v_{k/3}, v_{2k/3}$ bound a \emph{kite}.  Refer to Figure~\ref{fg:geo}(a) for an example.  

The part of the kite with a non-empty interior is a \emph{geodesic triangle} $\tau$, whose boundary consists of three reflex chains.  
The kite splits the boundary of $r$ into chains delimited by $v_1$, $v_{k/3}$ and $v_{2k/3}$. Next, compute the geodesic paths from $v_1$ and $v_{k/3}$ to $v_{k/6}$, the middle vertex in the boundary chain between $v_1$ and $v_{k/3}$.  This creates another kite joining $v_1$, $v_{k/6}$ and $v_{k/3}$ and hence another geodesic triangle $\tau'$ inside this kite.  We call $\tau$ the \emph{parent} of $\tau'$ and $\tau'$ a \emph{child} of $\tau$.  The same process is repeated to other chains recursively.  

In the end, we obtain a \emph{balanced geodesic triangulation} that partitions $r$ into geodesic triangles.  The parent-child relations among these geodesic triangles are explicitly represented in a rooted tree that stores them.  We denote this rooted tree by $G_r$.  The construction of the balanced geodesic triangulation and $G_r$ takes $O(\mathit{size}(r))$ time~\cite{chazelle94}, where $\mathit{size}(r)$ denotes the complexity of $r$.

\begin{figure}
	\centering
	\begin{tabular}{ccccc}
		\includegraphics[scale=0.4]{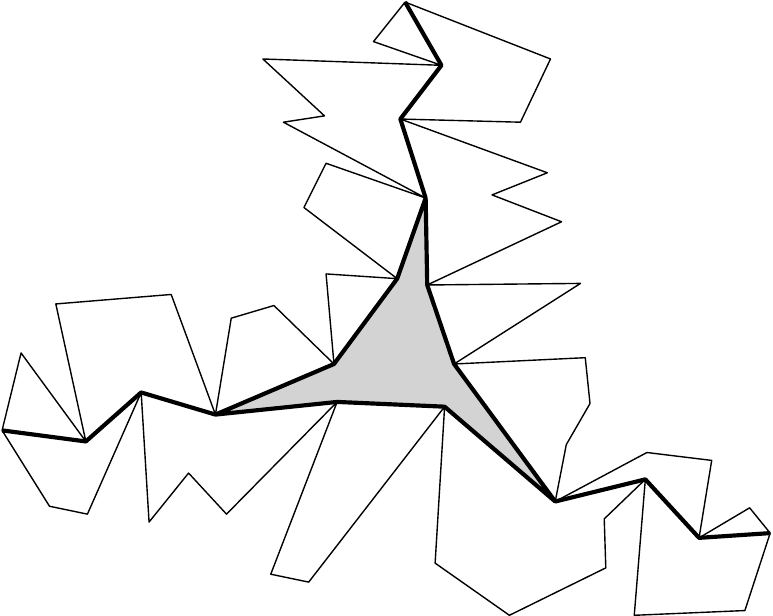} & \hspace*{.2in} &
		\includegraphics[scale=0.4]{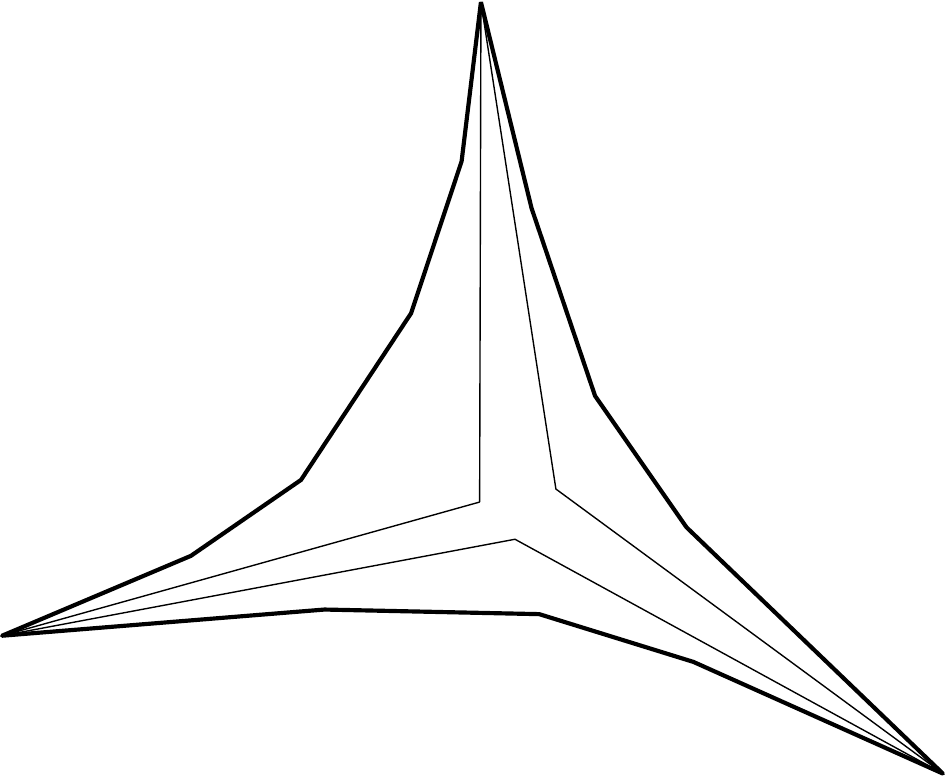} & \hspace*{.2in} &
		\includegraphics[scale=0.4]{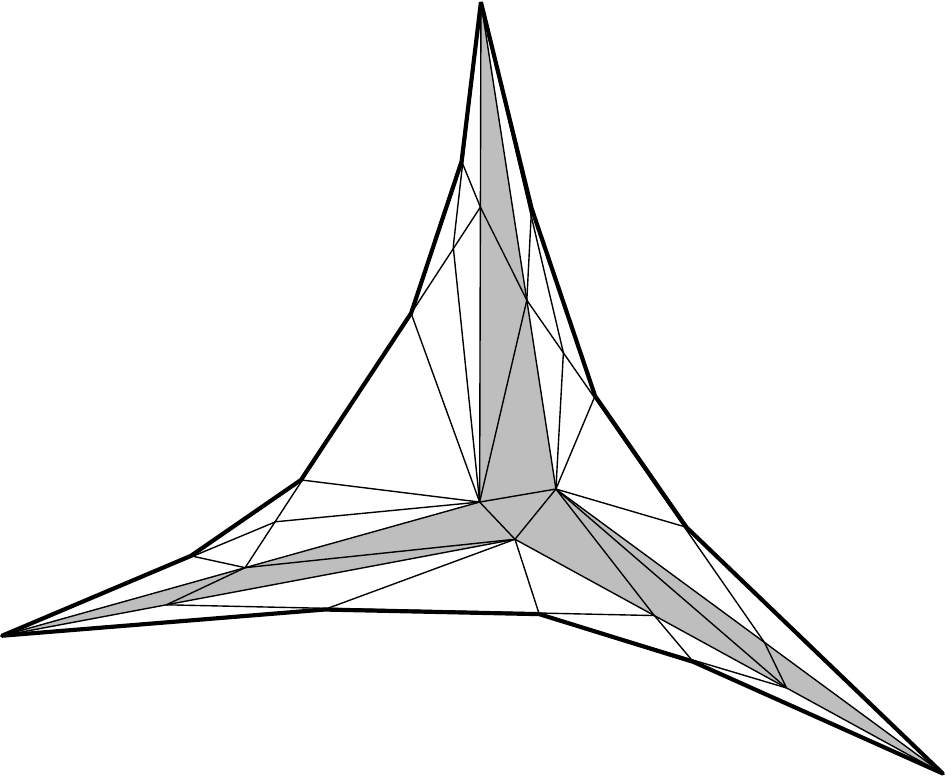} \\
		(a) & & (b) & & (c)
	\end{tabular}
	\caption{(a) Kite and the geodesic triangle inside (shown shaded).  (b) Divide a geodesic triangle into one star and three fins.  (c) Triangulation of the star and fins.}
	\label{fg:geo}
\end{figure}

\paragraph{Refinement.}  Let $\tau$ be a geodesic triangle in $G_r$ for some simple polygon $r$.  We shoot two rays inward from each vertex of $\tau$.  These six rays meet inside $\tau$ to form a star with three spikes.  Denote this star by $\st(\tau)$.    The difference $\tau \setminus \st(\tau)$ is a set of three fins, denoted by $\fins(\tau)$.  Figure~\ref{fg:geo}(b) shows an example.  Next, split $\st(\tau)$ into four triangles by connecting its three reflex vertices.  The fins are triangulated as described in Section~\ref{sec:first}.  For each fin $\alpha \in \mathit{fins}(\tau)$, we also build a binary tree $T_\alpha$ corresponding to the hierarchy of triangular regions in the decomposition of a fin as in Section~\ref{sec:first}.  Let $\mathit{size}(\tau)$ denote the complexity of $\tau$.  The triangulation of the fins places $O(\log \mathit{size}(\tau))$ vertices on the boundary of $\st(\tau)$.  We add segments arbitrarily to connect these vertices to triangulate $\st(\tau)$.  Figure~\ref{fg:geo}(c) shows an example.  This gives the triangulation of $\tau$ in $O(\mathit{size}(\tau))$ time.  The desired triangulation of $r$ is obtained by triangulating every geodesic triangle in $G_r$ as described above.  The refinement of the geodesic triangles in $r$ into stars and fins takes $O(\mathit{size}(r))$ time.  So does the subsequent triangulation of these stars and fins.

\paragraph{Structures at the first layer.} We compute the convex hull $\conv(S)$ of the outer boundary of $S$ in $O(|S|)$ time~\cite{melkman87}.   We compute a triangle $B_S$ to enclose $\conv(S)$ as described in Section~\ref{sec:first}.  Define:
\begin{quote}
\begin{tabular}{rp{5in}}
	$S_{1,1}$~~= & The overlay of $B_S$, $\conv(S)$ and $S$. \\
	$R_{1,1}$~~= & The set of bounded regions in $S_{1,1}$. \\
	$\tilde{\Delta}_{1,1}$~~= & The balanced geodesic triangulations of regions in $R_{1,1}$. \\
	$\hat{\Delta}_{1,1}$~~= & 
	$\displaystyle \bigl\{ \st(\tau) : \tau \in G_r, \text{region $r \in R_{1,1}$} \bigr\} \cup \bigcup_{r \in R_{1,1}, \tau \in G_r} \fins(\tau)$. \\
	$\Delta_{1,1}$~~= & The triangulation of all stars and fins in $\hat{\Delta}_{1,1}$ as described earlier. \\
	$D_{1,1}$~~= & A point location structure for $\Delta_{1,1}$.
\end{tabular}
\end{quote}

The structures $S_{1,1}$, $R_{1,1}$, $\Delta_{1,1}$, and $D_{1,1}$ are already introduced in Section~\ref{sec:alg}.  The additional structures $\tilde{\Delta}_{1,1}$ and $\hat{\Delta}_{1,1}$ are needed for handling connected subdivisions.

There are three fins between $B_S$ and $\conv(S)$, and they appear as ``degenerate'' geodesic triangles in $\tilde{\Delta}_{1,1}$.  In forming $\Delta_{1,1}$, it is unnecessary to decompose them further into fins and stars.  It suffices to split the three fins between $B_S$ and $\conv(S)$ as described earlier into triangular regions and subsequently into triangles. Later on, for the sake of simplicity,  we will not distinguish degenerate geodesic triangles when we discuss the construction of structures at higher layers.

\begin{lemma}\label{lem:1}
	Let $S$ be a connected subdivision of $n$ vertices.  The construction of $S_{1,1}$, $R_{1,1}$, $\tilde{\Delta}_{1,1}$, $\hat{\Delta}_{1,1}$,  $\Delta_{1,1}$, and $D_{1,1}$ takes $O(n)$ time.  
\end{lemma}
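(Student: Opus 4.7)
The plan is to construct each structure listed in the lemma in turn and verify that every step runs in time linear in its input size. First, I would compute $\conv(S)$ from the outer boundary of $S$ using Melkman's linear-time convex-hull algorithm for simple polygonal chains~\cite{melkman87}, and then pick the enclosing triangle $B_S$ in $O(1)$ additional time exactly as in Section~\ref{sec:first}. To form the overlay $S_{1,1}$, observe that the outer boundary of $S$ lies inside $\conv(S)$, which lies inside $B_S$, so the only regions the overlay adds on top of $S$ are the three fins between $B_S$ and $\conv(S)$ together with the ``pocket'' simple polygons between $\conv(S)$ and the outer boundary of $S$. The pockets can be read off by a single scan along the outer boundary of $S$ that tags which vertices of that boundary lie on $\conv(S)$, so $S_{1,1}$ has $O(n)$ complexity and is assembled in $O(n)$ time; $R_{1,1}$ is then the set of bounded regions of $S_{1,1}$.

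Second, for every region $r \in R_{1,1}$---the bounded regions of $S$, the pockets, and the three degenerate outer fins---I would first apply Chazelle's linear-time triangulation algorithm~\cite{chazelle91} and then Chazelle's linear-time construction of the balanced geodesic triangulation~\cite{chazelle94} to obtain $G_r$ together with its parent--child tree in $O(\mathit{size}(r))$ time. Summing over all $r \in R_{1,1}$ gives $\tilde{\Delta}_{1,1}$ in $\sum_r O(\mathit{size}(r)) = O(n)$ time.

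Third, for each non-degenerate geodesic triangle $\tau$ appearing in some $G_r$, I would compute $\st(\tau)$ by shooting two rays inward from each of the three vertices of $\tau$. Since the three sides of $\tau$ are reflex chains, these six rays can be traced in $O(\mathit{size}(\tau))$ time by simple walks along the chains, following the approach of Hershberger and Suri~\cite{paper:Hershberger1993}; the three leftover subregions of $\tau$ are then exactly $\fins(\tau)$. The interior of $\st(\tau)$ is triangulated by connecting its three reflex vertices and attaching the vertices that the fin triangulations place on $\partial \st(\tau)$ to an appropriate reflex vertex of $\st(\tau)$; each fin in $\fins(\tau)$ is triangulated by the procedures \textsf{SplitFin} and \textsf{TriFin} from Section~\ref{sec:first}, yielding the binary trees $T_\alpha$ as a by-product in time linear in the fin complexity. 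The three degenerate outer fins between $B_S$ and $\conv(S)$ are fed to \textsf{SplitFin}/\textsf{TriFin} directly, as remarked in the excerpt. Summing over all $\tau$ in all $G_r$, we obtain $\hat{\Delta}_{1,1}$ and $\Delta_{1,1}$ in $O(n)$ time.

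Finally, $D_{1,1}$ consists of $D_{1,1}'$, obtained by applying Theorem~\ref{thm:iacono} to $\Delta_{1,1}$ in $O(n)$ preprocessing time, and $D_{1,1}''$, obtained by applying Kirkpatrick's hierarchical point-location construction~\cite{paper:Kirkpatrick1981} to $\Delta_{1,1}$, also in $O(n)$ time. Summing these costs yields the claimed $O(n)$ total. The main obstacle is really the star/fin refinement step inside each geodesic triangle: one has to verify both that the six inward rays defining $\st(\tau)$ can be traced in time linear in $\mathit{size}(\tau)$ and that the vertices placed on $\partial \st(\tau)$ by the fin triangulations can be joined into a valid triangulation of $\st(\tau)$ without a hidden logarithmic factor. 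Both facts are implicit in the ray-shooting machinery of~\cite{paper:Hershberger1993} and carry over verbatim.
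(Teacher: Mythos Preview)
Your proposal is correct and follows essentially the same approach as the paper: the paper leaves this lemma without an explicit proof, relying on the preceding description of the first-layer construction (Melkman for $\conv(S)$, the balanced geodesic triangulation of~\cite{chazelle94} in $O(\mathit{size}(r))$ time per region, linear-time star/fin refinement, and Theorem~\ref{thm:iacono} plus Kirkpatrick for $D_{1,1}$). Your write-up simply spells out these steps in order; the only minor redundancy is invoking~\cite{chazelle91} before~\cite{chazelle94}, since the latter already gives the balanced geodesic triangulation directly in linear time.
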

\cancel{
\begin{proof}
	The $O(n)$ running time bound follows from previous discussion.  As before $D_{1,1}$ consists of two structures, one obtained by applying Theorem~\ref{thm:iacono} and an worst-case optimal planar point location structure.
	
	Let $t$ be any triangle that lies inside a region (bounded or exterior) of $S$.  If $t$ lies outside $\conv(S)$, then $t$ intersects only one geodesic triangle in $\tilde{\Delta}_{1,1}$ outside $\conv(S)$ and $O(\log^2 n)$ triangles in $\Delta_{1,1}$ as we argued in the proof of Lemma~\ref{lem:basic}(iii). If $t$ crosses an edge of $\conv(S)$, $t$ can be split into $O(1)$ triangles such that each lies outside $\conv(S)$ or within a region inside $\conv(S)$.  So it suffices to analyze the latter case.  Suppose that $t$ lies in a region $r$ inside $\conv(S)$.  Each edge of $t$ intersects $O(\log n)$ kites in the balanced geodesic triangulation of $r$ and hence $O(\log n)$ geodesic triangles in $\tilde{\Delta}_{1,1}$.  Let $\tau$ be a geodesic triangle intersected by $t$.  In the triangulation of $\tau$, there are only $O(\log |\tau|)$ triangles in $\st(\tau)$.  Consider a fin $\alpha \in \fins(\tau)$.  The triangle $t$ cannot contain any vertex of $\tail(\alpha)$.  Then, we can argue as in the analysis of Lemma~\ref{lem:basic}(iii) that $t$ intersects $O(\log^2 |\alpha|)$ triangles of $\Delta_{1,1}$ inside $\alpha$.  As a result, $t$ intersects $O(\log^2 n)$ triangles in $\tau$ and hence $O(\log^3 n)$ triangles in $\Delta_{1,1}$.
\end{proof}
}

\subsection{Layer construction}

We have the following settings:
\[
c_1 = 31, \quad c_2 = 25, \quad f(k) = (\log_2 k)^{31}, \quad g(n,i) = i\log^3 n_i.
\]

Inductively, for any level $i \geq 2$ and any version index $j \geq 1$, regions in $R_{i,j}$ is formed by merging some fins and stars.  Each star taken up by $R_{i,j}$ is a star in $\hat{\Delta}_{i-1,*}$.  Each fin taken up by $R_{i,j}$ is a subset of a fin in $\hat{\Delta}_{i-1,*}$.  However, $R_{i,j}$ may not use all stars and fins in $\hat{\Delta}_{i-1,*}$.
\begin{quote}
	\begin{tabular}{rp{4.5in}}
		$R_{i,j}$~~= & A set of bounded regions, each being a subset of some bounded region in $S_{i-1,*}$.  The boundaries of regions in $R_{i,j}$ are compatible, i.e., a vertex of a region does not appear in interior of an edge of an adjacent region. \\
		$S_{i,j}$~~= & The overlay of $B_S$, $R_{i,j}$, and dummy triangles introduced to fill the space between the boundary of $B_S$ and the regions in $R_{i,j}$.  No additional vertex is introduced for adding the dummy triangles. \\
		$\tilde{\Delta}_{i,j}$~~= & The balanced geodesic triangulations of regions in $R_{i,j}$. \\
		$\hat{\Delta}_{i,j}$~~= & 
		$\displaystyle \bigl\{ \st(\tau) : \tau \in G_r, \text{region $r \in R_{i,j}$} \bigr\} \cup \bigcup_{r \in R_{i,j}, \tau \in G_r} \fins(\tau)$. \\
		$\Delta_{i,j}$~~= & The triangulation of $S_{i,j}$ that consists of the triangulations of stars and fins in $\hat{\Delta}_{i,j}$ (as illustrated in Figure~\ref{fg:geo}(c)) and the dummy triangles in $S_{i,j} \setminus R_{i,j}$. \\
		$D_{i,j}$~~= & A point location structure for $\Delta_{i,j}$.
	\end{tabular}
\end{quote}



\begin{table}
\centering
\renewcommand{\arraystretch}{1.3}
\begin{tabular}{|c|p{5in}|}
		\hline
		Q1 & For all $k$ and $l$, every edge in the balanced geodesic triangulations of the regions in $R_{k,l}$ is incident to a vertex of $S_{1,1}$. \\
		\hline
		Q2 & For all $k$ and $l$, no two regions in $R_{k,l}$ that lie in the same region in $S_{1,1}$ share an edge although they may share a vertex. \\
		\hline
		Q3 & For all $k$ and $l$, every region in $R_{k,l}$ is a simple polygon. \\
		\hline
\end{tabular}
\caption{Invariants satisfied by $R_{k,l}$.}
\label{tb:Q}
\end{table}

We require invariants Q1, Q2 and Q3 in Table~\ref{tb:Q} to be satisfied inductively.  These invariants will help us to meet conditions C1--C5 in Table~\ref{tb:C}, which will be discussed later in Section~\ref{sec:conditions}.  In the base step, Q1--Q3 are satisfied by $R_{1,1}$ trivially.  

$R_{i,j}$ will be constructed from the most recent structures at the $(i-1)$-th layer.  In order to preserve Q1 inductively, we need to adjust the construction of the balanced geodesic triangulation of a simple polygon slightly.  Recall that we pick three vertices that divide the polygon boundary into three chains of roughly equal lengths.  When we are constructing structures at some $i$-th level for some $i \geq 2$, it is possible that one or more of the division vertices picked from a region $r$ in $R_{i-1,*}$ is not a vertex of $S_{1,1}$.  If we pick a vertex $v$ that is not a vertex of $S_{1,1}$, Q1 ensures that the two vertices of $r$ adjacent to $v$ belong to $S_{1,1}$.  In this case, we pick a neighboring vertex of $v$ to be the division vertex instead.  The same shifting strategy applies when we recursively build the rest of $G_r$.  

\begin{lemma}\label{lem:shift}
	Let $r$ be a simple polygon.  Assume that every edge of $r$ is incident to a vertex of $S_{1,1}$.  In constructing a balanced geodesic triangulation of $r$, the shifting strategy ensures that the boundary of every kite produced consists of shortest paths among three vertices of $S_{1,1}$.
\end{lemma}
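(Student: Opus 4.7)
The plan is to prove this by induction on the recursive construction of the balanced geodesic triangulation, with the key enabling observation being a simple combinatorial property of $\partial r$ implied by the hypothesis.

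First I would record the following observation: since every edge of $r$ is incident to a vertex of $S_{1,1}$, two consecutive vertices of $\partial r$ cannot both fail to lie in $S_{1,1}$ (otherwise the edge between them would contradict the hypothesis). Equivalently, any vertex $v$ of $r$ with $v \notin S_{1,1}$ has both of its neighbors along $\partial r$ in $S_{1,1}$. This is exactly what the shifting strategy needs: if the middle vertex of a boundary chain chosen by the recursion fails to be a vertex of $S_{1,1}$, then either adjacent vertex on $\partial r$ is a valid replacement that does lie in $S_{1,1}$.

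Next I would set up the induction on the recursion depth of the kite decomposition. For the base case, the three initial division vertices $v_1, v_{k/3}, v_{2k/3}$ on $\partial r$, after shifting if necessary, all lie in $S_{1,1}$ by the observation above; hence the three edges of the first kite are geodesic paths inside $r$ between three vertices of $S_{1,1}$, which is the desired conclusion. For the inductive step, a later kite is built from two endpoints inherited from a previously constructed kite, together with a new middle vertex of the boundary chain between them. By the induction hypothesis the two inherited endpoints already lie in $S_{1,1}$, and because the boundary chain in question is a contiguous piece of $\partial r$, the observation still applies to it, so shifting ensures the new middle vertex is in $S_{1,1}$ as well.

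The only point that requires any care is verifying that the shifting hypothesis genuinely propagates to every recursive sub-problem; that follows because each recursion is indexed by a sub-chain of the original $\partial r$ rather than by some newly created polygon whose edges might violate the hypothesis. Once all kite corners have been shown to lie in $S_{1,1}$, the three bounding arcs of each kite are by construction geodesic (i.e., shortest) paths inside $r$ between those three vertices, completing the proof. I do not expect any real obstacle here; the lemma is essentially a straightforward bookkeeping statement about where division vertices can land, with the only substantive ingredient being the "no two consecutive non-$S_{1,1}$ vertices" observation.
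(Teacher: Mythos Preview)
Your proposal is correct and matches the paper's reasoning; the paper in fact states this lemma without a written proof, treating it as immediate from the preceding description of the shifting strategy, and your induction on recursion depth together with the ``no two consecutive non-$S_{1,1}$ vertices'' observation is exactly the argument that makes it immediate.
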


Lemma~\ref{lem:shift} will be helpful in preserving Q1.  We will need some ``seed fins'' to start the construction of $R_{i,j}$.  We explain what they are in the next section.


\subsubsection{Seed fins}
\label{sec:seed-fins}

Suppose that we are to construct the $j$-th version of structures at the $i$-th layer for some $i \geq 2$.  We discuss how to identify some \emph{seed fins}.  They will be used to form $R_{i,j}$.

Mark the $f(n_{i-1})/(\log_2 n_{i-1})^{c_2} = (\log_2 n_{i-1})^{c_1-c_2}$ most frequently accessed triangles in $\Delta_{i-1,*}$.   Select the geodesic triangles in $\tilde{\Delta}_{i-1,*}$ that contain these marked triangles.  For each such geodesic triangle $\tau$, $\tau$ belongs to $G_r$ for some region $r$ in $R_{i-1,*}$, and we also select all ancestors of $\tau$ in $G_r$.  Let $\tilde{X}_0$ be the resulting set of selected geodesic triangles in $\tilde{\Delta}_{i-1,*}$.

For every $\tau \in \tilde{X}_0$ and every $\alpha \in \fins(\tau)$, we select the triangular regions in $T_\alpha$ that contain a marked triangle or are incident to a convex vertex of $\tau$. Let ${X}_1$ be the set of selected triangular regions.  Figure~\ref{fg:example}(a) shows an example.

We select some extra triangular regions to deal with complications caused by non-convexity.  
For every pair of distinct geodesic triangles $\tau$ and $\tau'$ in $\tilde{X}_0$, if they share an edge $e$, then $e$ is shared by the tails of two fins $\alpha$ and $\alpha'$, where $\alpha \in \fins(\tau)$ and $\alpha' \in \fins(\tau')$, and we select the two triangular regions in $T_\alpha$ and $T_{\alpha'}$ that are incident on $e$.  Figure~\ref{fg:example}(b) shows an example.

\begin{figure}
	\centering
	\begin{tabular}{cc}
		\includegraphics[scale=0.7]{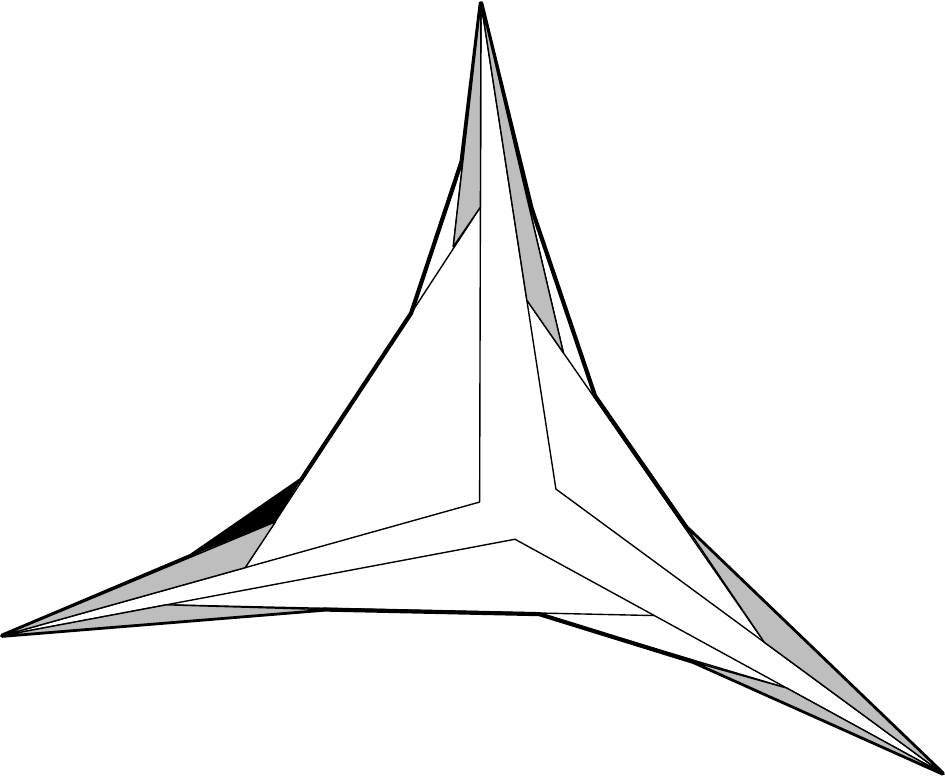} &
		\includegraphics[scale=0.7]{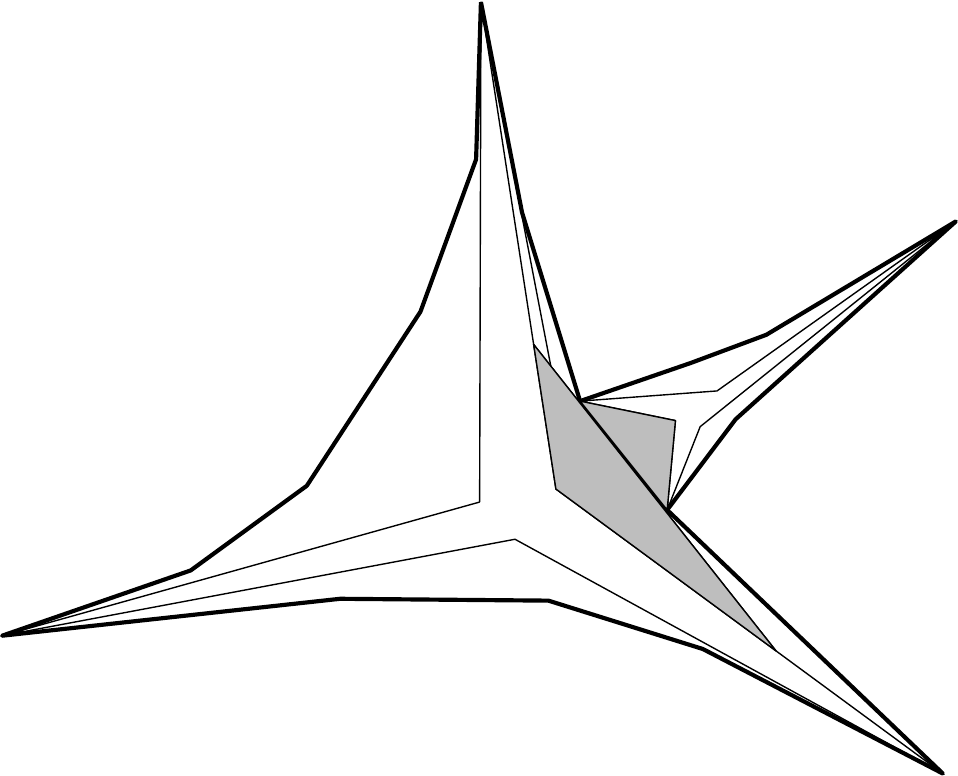} \\
		(a) &  (b) \\ \\
	\end{tabular}
	\begin{tabular}{c}
		\includegraphics[scale=0.7]{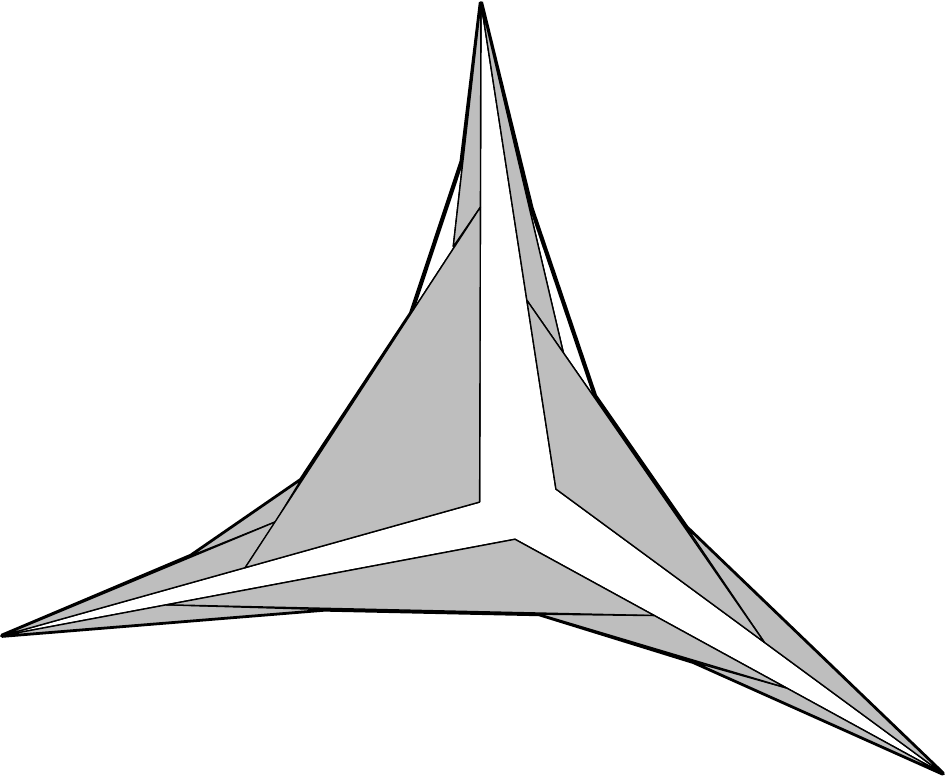} \\
		(c)
	\end{tabular}
	\caption{In (a), a geodesic triangle $\tau$ is partitioned into a star and three fins, and the fins are partitioned into triangular regions.  The black triangle is marked as it has been frequently accessed by queries.  The gray triangles are those incident to the convex vertices of $\tau$.  Both the black and gray triangles are selected and included into $X_1$.  In (b), two selected geodesic triangles are shown, and the two gray triangular regions incident on their common edge are selected and included in $X_2$.  In (c), the grey parts are the seed fins formed by the union of the selected triangular regions in (a) and their ancestors.}
	\label{fg:example}
\end{figure}

\begin{figure}
	\centering
	\begin{tabular}{cc}
		\includegraphics[scale=0.45]{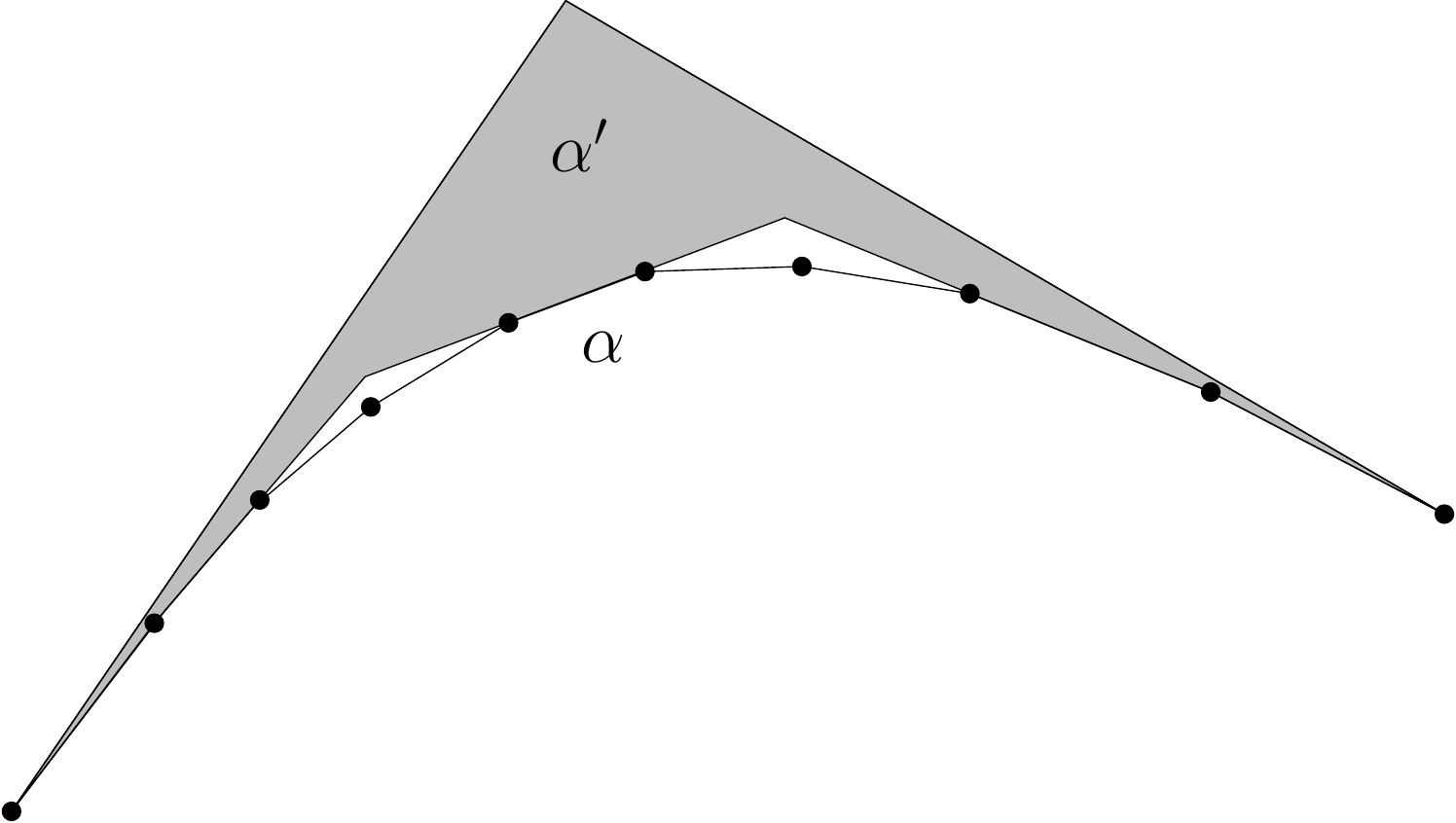} &
		\includegraphics[scale=0.45]{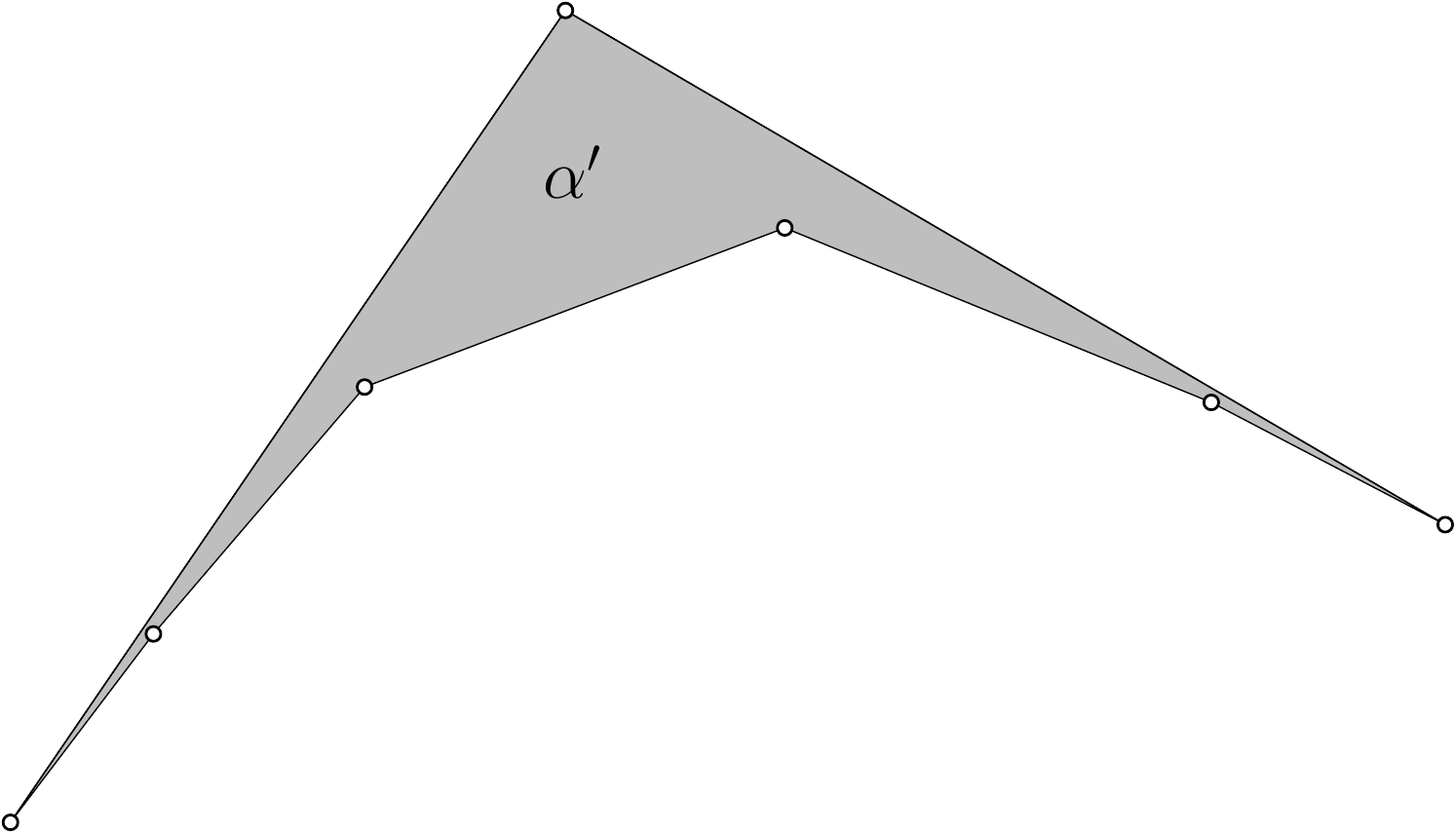} \\
		(a) & (b) \\ \\ \\
	\end{tabular}
	\caption{In (a), a seed fin $\alpha'$ is shown in gray inside a fin $\alpha \in \tilde{\Delta}_{i-1,*}$.  In (b), the vertices of $\alpha'$ are shown as white dots.  Some vertices of $\tail(\alpha)$ lie in the interior of edges in $\tail(\alpha')$.}
	\label{fg:seed-fin}
\end{figure}

Let ${X}_2$ be the union of ${X}_1$ and the set of extra triangular regions selected according to the description in the previous paragraph.  Every triangular region $t \in {X}_2$ corresponds to a node in $T_\alpha$ for some fin $\alpha \in \hat{\Delta}_{i-1,*}$.  We collect the ancestors of $t$ in $T_\alpha$.  Let ${X}_3$ be the set of triangular regions that consist of those in ${X}_2$ as well as their ancestors.  We are ready to define the set of seed fins.

\begin{quote}
	For every fin $\alpha \in \hat{\Delta}_{i-1,*}$, we merge all triangular regions in $X_3$ that lie inside $\alpha$.  The result is empty or a \emph{seed fin}.  There is at most one seed fin in $\alpha$.  Let $\mathit{seed\_fins}[i,j]$ denote the set of all seed fins obtained.    The parameters $i$ and $j$ signify that this is a structure of the $j$-th version at level $i$.  Figure~\ref{fg:example}(c) shows an example of the merging to form a seed fin.  Figure~\ref{fg:seed-fin} shows a seed fin as subset of a fin in $\hat{\Delta}_{i-1,*}$.  
 
\end{quote}

How long does it take to construct $\mathit{seed\_fins}[i,j]$?  

There are $O(\log n_{i-1} \cdot f(n_{i-1})/(\log_2 n_{i-1})^{c_2}) = O((\log n_{i-1})^{c_1-c_2+1})$ geodesic triangles in $\tilde{X}_0$.  The same bound applies to the number of triangular regions in ${X}_1$.  There are $O(|\tilde{X}_0|^2) = O((\log n_{i-1})^{2c_1-2c_2+2})$ pairs of geodesic triangles to check in identifying the extra triangular regions to be included in ${X}_2$.  Given a pair of geodesic triangles $\tau$ and $\tau'$ in $\tilde{X}_0$, 
we check if they share an edge in $O(\log n_{i-1})$ time as follows.  
There are nine pairs of reflex chains, one from $\tau$ and another from $\tau'$.  We check every pair of reflex chains, say $\{\gamma_1,\gamma_2\}$, to determine whether $\tau$ and $\tau'$ share a common edge.  If $\gamma_1$ or $\gamma_2$ is a single edge, say $\gamma_1$, then we can decide in $O(\log n_{i-1})$ time whether $\gamma_2$ contains $\gamma_1$ as an edge, assuming that the edges in $\gamma_2$ are stored in a dictionary structure.  If neither $\gamma_1$ nor $\gamma_2$ is a single edge, then any common edge must be the first or last edges of $\gamma_1$ and $\gamma_2$.  So this can be decided in $O(1)$ time.\footnote{In the degenerate case, two adjacent geodesic triangles $\tau_1$ and $\tau_2$ may share a collinear sequence $\xi$ of edges.  In this case, for $s \in [1,2]$, there is a single triangular region incident to the edges in $\xi$ in the refinement of fins in $\fins(\tau_s)$.  If $\xi$ is a whole reflex chain of $\tau_1$, it suffices to take any edge $e$ in $\xi$ and search for $e$ in a reflex chain of $\tau_2$ in logarithm time.  If $\xi$ is not a whole reflex chain of $\tau_1$ or $\tau_2$, then $\xi$ must be incident to convex vertices of $\tau_1$ and $\tau_2$ for $\xi$ to be shared by $\tau_1$ and $\tau_2$.  Therefore, it suffices to compare the edges incident to convex vertices of $\tau_1$ and $\tau_2$ in $O(1)$ time.}


In all, there are $O((\log n_{i-1})^{2c_1-2c_2+2})$ triangular regions in ${X}_2$, and $X_2$ can be constructed in $O((\log n_{i-1})^{2c_1-2c_2+3})$ time.  It follows that there are $O((\log n_{i-1})^{2c_1-2c_2+2})$ seed fins.

For every triangular region $t \in {X}_2$, it takes $O(\log n_{i-1})$ time to identify the ancestors of $t$ in $T_\alpha$, where $\alpha$ is the fin in $\hat{\Delta}_{i-1,*}$ that contains $t$.  The total complexity of $\mathit{seed\_fins}[i,j]$ is $O(|{X}_2|\log n_{i-1}) = O((\log n_{i-1})^{2c_1-2c_2+3})$, and the construction time of $\mathit{seed\_fins}[i,j]$ is $O((\log n_{i-1})^{2c_1-2c_2+3})$.

\newcounter{pagecount2}
\setcounter{pagecount2}{\thepage}

\begin{lemma}
	\label{lem:seed-fins}
	The cardinality of $\mathit{seed\_fins}[i,j]$ is $O((\log n_{i-1})^{2c_1-2c_2+2})$, its total complexity is $O((\log n_{i-1})^{2c_1-2c_2+3})$, and its construction time is $O((\log n_{i-1})^{2c_1-2c_2+3})$.  For every $\alpha \in \mathit{seed\_fins}[i,j]$, the endpoints of $\tail(\alpha)$ are convex vertices of the geodesic triangle in $\tilde{\Delta}_{i-1,*}$ that contains $\alpha$.
	\cancel{
	\begin{emromani}
		\item the endpoints of $\tail(\alpha)$ are convex vertices of the geodesic triangle in $\tilde{\Delta}_{i-1,*}$ that contains $\alpha$, and
		\item every edge in $\tail(\alpha)$ is incident to a vertex of $S_{1,1}$.
	\end{emromani}
}
\end{lemma}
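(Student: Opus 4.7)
The plan is to read off all four claims by tracking the sequence $\tilde X_0 \subseteq X_1 \subseteq X_2 \subseteq X_3$ that the construction builds, reusing the counting arguments the surrounding paragraphs have already set up, and then to handle the tail-endpoint property using the selection rule that includes, for every $\tau\in\tilde X_0$, the triangular regions of its fins that are incident to the convex vertices of $\tau$.

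First, I would bound the cardinalities. The $(\log_2 n_{i-1})^{c_1-c_2}$ marked triangles lie in at most that many geodesic triangles, and adding their $O(\log n_{i-1})$-depth ancestor chains in the relevant $G_r$'s gives $|\tilde X_0|=O((\log n_{i-1})^{c_1-c_2+1})$. The set $X_1$ adds only $O(1)$ triangular regions per $\tau\in\tilde X_0$ (for the marked triangles inside $\tau$ and for the three triangular regions incident to the convex vertices of $\tau$), so $|X_1|=O(|\tilde X_0|)$. The extras appended for $X_2$ come from pairs of geodesic triangles sharing an edge, so $|X_2|=O(|\tilde X_0|^2)=O((\log n_{i-1})^{2c_1-2c_2+2})$, and expanding each element of $X_2$ to its $O(\log n_{i-1})$ ancestors in $T_\alpha$ yields $|X_3|=O((\log n_{i-1})^{2c_1-2c_2+3})$. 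Since all ancestors of a triangular region stay inside the same fin, the number of seed fins is at most the number of distinct fins touched by $X_2$, which is $O(|X_2|)=O((\log n_{i-1})^{2c_1-2c_2+2})$. Each triangular region has constant complexity, so the total complexity of $\mathit{seed\_fins}[i,j]$ is $O(|X_3|)=O((\log n_{i-1})^{2c_1-2c_2+3})$.

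Next, I would verify the time bound. Identifying the marked triangles, their containing geodesic triangles, and the ancestor chains of the latter yields $\tilde X_0$ in $O(|\tilde X_0|)$ time, and forming $X_1$ by scanning $\fins(\tau)$ for each $\tau\in\tilde X_0$ takes $O(|X_1|)$ time. The bottleneck is building $X_2$: the $O(|\tilde X_0|^2)$ pairs of geodesic triangles each take $O(\log n_{i-1})$ time to test for a shared tail edge by the search procedure described above, for a total of $O((\log n_{i-1})^{2c_1-2c_2+3})$. Walking from each element of $X_2$ up to the root of its $T_\alpha$ costs $O(|X_2|\log n_{i-1})$, of the same order, and finally merging the triangular regions of $X_3$ inside each fin is linear in $|X_3|$.

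The last claim needs more thought and is the main obstacle. Let $\alpha\in\mathit{seed\_fins}[i,j]$ sit inside a fin $\alpha^*\in\fins(\tau)$ for some $\tau\in\tilde X_0$. By the definition of $\fins(\tau)$, the two endpoints of $\tail(\alpha^*)$ are precisely the two convex vertices of $\tau$ bounding the reflex chain of $\tau$ that supports $\tail(\alpha^*)$. The inclusion rule for $X_1$ explicitly selects the two triangular regions in $T_{\alpha^*}$ incident to these convex vertices, and $X_3$ then contains all their ancestors up to the root of $T_{\alpha^*}$. The remaining point to verify is that the union describing $\alpha$ realizes a fin whose tail terminates exactly at these two corner vertices rather than at some interior vertex of $\tail(\alpha^*)$. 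This follows from the structure of $T_{\alpha^*}$ as built by {\sf SplitFin}: the root triangle contains $\apex(\alpha^*)$, the head sides of any node's triangular region are subsegments of the head sides of its parent, and the ancestor path from either corner leaf back to the root is connected, so the union of the elements of $X_3$ inside $\alpha^*$ is a connected region sharing the apex with $\alpha^*$ and whose tail extends between the two convex vertices of $\tau$, as required.
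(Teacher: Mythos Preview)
Your proof is correct and follows essentially the same approach as the paper, which gives the counting and timing arguments in the running text immediately preceding the lemma and leaves the tail-endpoint claim as a direct consequence of including the corner triangular regions in $X_1$. A couple of minor slips that do not affect the argument: $\tilde X_0$ consists of geodesic triangles rather than triangular regions, so writing $\tilde X_0\subseteq X_1$ is a type mismatch; and your description of the head-side structure of the nodes of $T_{\alpha^*}$ is a bit loose (nodes are triangles, not fins), but the conclusion that including both corner leaves together with all their ancestors yields a fin whose tail endpoints are the two convex vertices of $\tau$ is exactly the intended reasoning.
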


\subsubsection{Augment the seed fins}
\label{sec:augment-seed-fins}

We cannot use the seed fins only to define $R_{i,j}$ while keeping a small $g(n,i)$.  A segment that lies inside a region of $S_{1,1}$ may generate many intersection points with the boundaries of the seed fins.  To remedy this, we augment (grow) the seed fins appropriately so that the number of such intersections decreases to an acceptable level.  

Every seed fin $\alpha$ is a subset of some fin $\alpha_0 \in \hat{\Delta}_{i-1,*}$.  The augmentation of $\alpha$ is done by merging $\alpha$ with suitable subsets of $\alpha_0 \setminus \alpha$.  We first introduce a notation for $\alpha_0 \setminus \alpha$.

\begin{figure}
	\centerline{\includegraphics[scale=0.5]{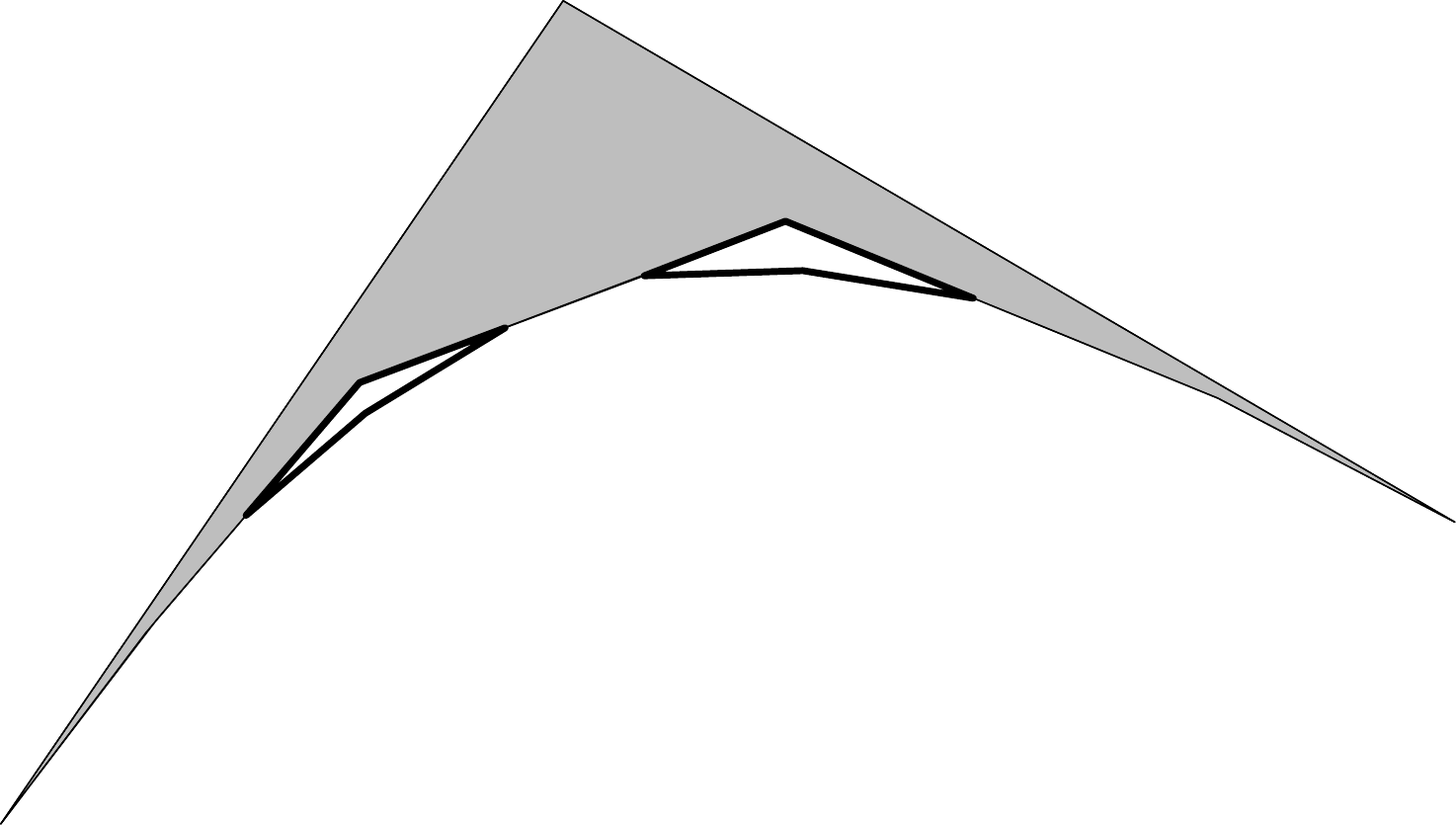}}
	\caption{The grey part is the seed fin $\alpha$ that lies inside the bigger fin shape $\alpha_0 \in \hat{\Delta}_{i-1,*}$.  The difference $\alpha_0 \setminus \alpha$ consists of two subfins shown with bold boundaries.}
	\label{fg:subfin}
\end{figure}

Observe that $\alpha_0 \setminus \alpha$ is a collection of smaller disjoint fin shapes, each having a vertex of $\tail(\alpha)$ as its apex. Figure~\ref{fg:subfin} shows an example.  We call each shape in $\alpha_0 \setminus \alpha$ a \emph{subfin}.  Let \emph{subfins}$[i,j]$ denote the set of all subfins.  That is,
\[
\mathit{subfins}[i,j] = \left\{\alpha_0 \setminus \alpha : \alpha \in \mathit{seed\_fins}[i,j], \alpha_0 \in \hat{\Delta}_{i-1,*}, \alpha \subseteq \alpha_0 \right\}.
\]
Note that a subfin tail endpoint may lie in the interior of a tail edge of a seed fin.

To extract appropriate subsets from $\mathit{subfins}[i,j]$ for the merge with $\mathit{seed\_fins}[i,j]$, we compute a set of pivots and tangent lines, denoted by $\mathit{pivots}[i,j]$ and \emph{tangents}$[i,j]$, respectively.  
\begin{itemize}
	
	\item Step~1: Let $\mathit{pivots}[i,j]$ be the set of tail vertices in $\mathit{seed\_fins}[i,j]$ and tail endpoints in $\mathit{subfins}[i,j]$.  For every subfin $\beta \in \mathit{subfins}[i,j]$ and every vertex $v \in \mathit{pivots}[i,j] \setminus \tail(\beta)$, we compute the tangents from $v$ to $\conv(\tail(\beta))$.  There are at most two such tangents and we insert them into \emph{tangents}$[i,j]$.  
	
	\item Step~2: For every pair $\beta, \gamma \in \mathit{subfins}[i,j]$, we compute all common tangents between $\conv(\tail(\beta))$ and $\conv(\tail(\gamma))$.  Let $s_\beta$ and $s_\gamma$ be the segments that connect the endpoints of $\tail(\beta)$ and $\tail(\gamma)$, respectively.  Adding $s_\beta$ to $\tail(\beta)$ and $s_\gamma$ to $\tail(\gamma)$ gives $\conv(\tail(\beta))$ and $\conv(\tail(\gamma))$.  Since $\tail(\beta)$ and $\tail(\gamma)$ do not cross, the boundaries of $\conv(\tail(\beta))$ and $\conv(\tail(\gamma))$ can intersect only at  $s_\beta \cap \tail(\gamma)$, $s_\gamma \cap \tail(\beta)$, and $s_\beta \cap s_\gamma$.  As $\tail(\gamma)$ is reflex, $s_\beta$ intersects $\tail(\gamma)$ at most twice, and if $s_\beta$ intersects $\tail(\gamma)$ twice, then $s_\beta \cap s_\gamma = \emptyset$ and $s_\gamma \cap \tail(\beta) = \emptyset$.  The same can be said about $\tail(\beta)$ and $s_\gamma$.  Therefore, the boundaries of $\conv(\tail(\beta))$ and $\conv(\tail(\gamma))$ intersect at most twice, implying that there are at most four common tangents to $\conv(\tail(\beta))$ and $\conv(\tail(\gamma))$.  We compute these common tangents and insert them into \emph{tangents}$[i,j]$.
	
\end{itemize}

We use the lines in $\mathit{tangents}[i,j]$ to identify subsets of $\mathit{subfins}[i,j]$ that will be used to augment $\mathit{seed\_fins}[i,j]$.
Specifically, for every $\beta \in \mathit{subfins}[i,j]$ and every $\ell \in \mathit{tangents}[i,j]$, if $\ell$ is tangent to a vertex or some collinear edges of $\tail(\beta)$, then: 
\begin{quote}
	Case 1: $\ell \cap \tail(\beta)$ is a single vertex $v$. 
	
	\begin{quote}
		
	Case 1.1: $v$ is not a vertex of $S_{1,1}$.  By Q1 in Table~\ref{tb:Q}, the two vertices of $\tail(\beta)$ adjacent to $v$, say $w_1$ and $w_2$, are vertices of $S_{1,1}$.  For $i \in [1,2]$, shoot a ray from $w_i$ in the opposite direction of $v$.  Figure~\ref{fg:trim}(a) gives an illustration. 

	\vspace{6pt}	
		
	Case 1.2: $v$ is a vertex of $S_{1,1}$.  We shoot two rays from $v$ through $w_1$ and $w_2$.  Then, we rotate $\overrightarrow{vw_i}$ around $v$ slightly so that the rotated ray shoots into the interior of $\beta$ in a direction that lies between $vw_i$ and $\ell$.
	Figure~\ref{fg:trim}(b) gives an illustration.  
	
	\end{quote}

	In Case~1.1, cutting along the two rays and the edges $vw_1$ and $vw_2$ extracts a wedge out of $\beta$ that contains $\apex(\beta)$.  Similarly, in case~1.2, a wedge that contains $\apex(\beta)$ is extracted by cutting along the two rays from $v$.  We include this wedge in the set $\mathit{wedges}(\beta)$.  The vertex $v$ and, if applicable, $w_1$ and $w_2$ as well, are retained as vertices on the boundary of the wedge.  If $v$ is an endpoint of $\tail(\beta)$, either $w_1$ or $w_2$ is undefined and the wedge is cut out using only one ray.
	
	\vspace{6pt}

	\newcounter{pagecount}
	\setcounter{pagecount}{\thepage}
		
	Case 2: $\ell \cap \tail(\beta)$ consist of some collinear edges. Let $v_1$ and $v_2$ be the endpoints of this collinear sequence of edges.  Let $w_i$ be the vertex of $\tail(\beta)$ that is adjacent to $v_i$ and outside $\ell \cap \tail(\beta)$.  Depending on whether $v_i$ is a vertex of $S_{1,1}$, we shoot a ray from $v_i$ or $w_i$ as in Cases~1.1 and~1.2.  
	Figures~\ref{fg:trim}(c) and (d) show the situation.  We extract a wedge out of $\beta$ that contains $\apex(\beta)$ by cutting along the edges $v_1 \cdots v_2$ and the two rays obtained.  We include this wedge in $\mathit{wedges}(\beta)$.  The vertices $v_1$, $v_2$, and if applicable, $w_1$ and $w_2$ as well, are retained as vertices on the boundary of the wedge.  The vertices in $v_1 \cdots v_2$ other than $v_1$ and $v_2$ do not appear as boundary vertices of the wedge.  If $v_i$ is an endpoint of $\tail(\beta)$, then $w_i$ is not defined, and neither is the ray from $v_i$ or $w_i$.

\end{quote}

\begin{figure}
	\centering
	\begin{tabular}{ccc}
		\includegraphics[scale=0.9]{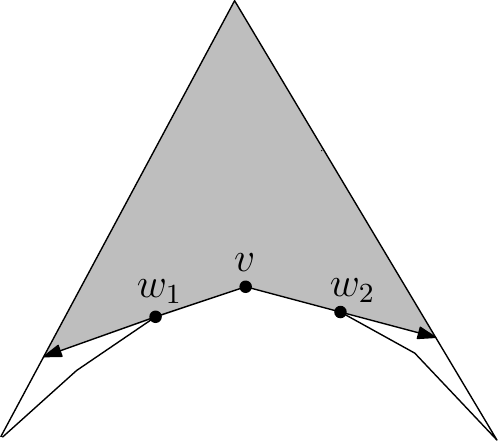} & &
		\includegraphics[scale=0.9]{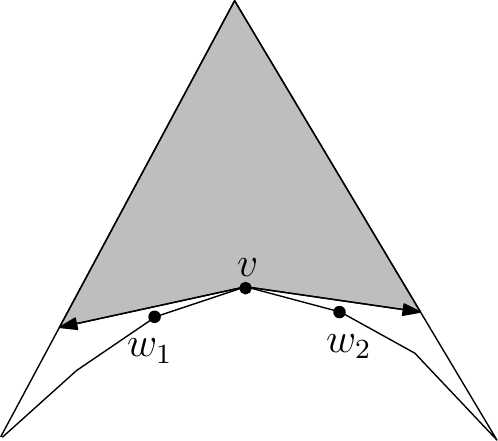} \\ \\
		(a) & \hspace*{.1in} & (b) \\ \\
		\includegraphics[scale=0.9]{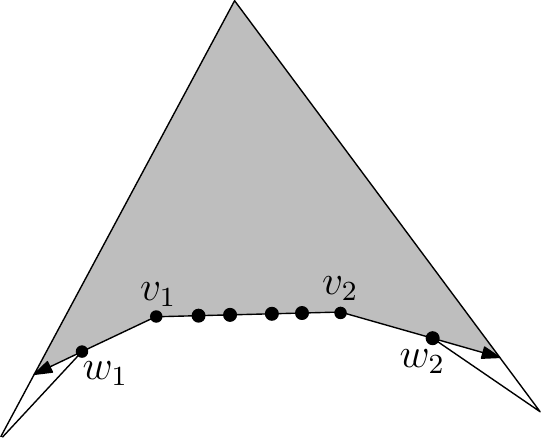} & &
		\includegraphics[scale=0.9]{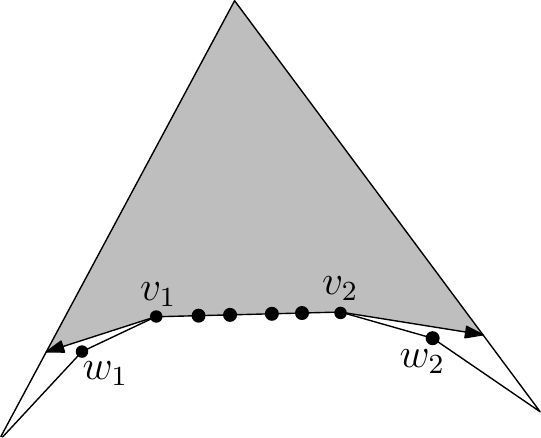} \\ \\
		(c) & \hspace*{.1in} & (d) 
	\end{tabular}
	\caption{The four figures show a subfin $\beta$.  The grey parts are wedges to be included in $\mathit{wedges}(\beta)$.}
	\label{fg:trim}
\end{figure}

Let $\bigcup \mathit{wedges}(\beta)$ denote the union of all wedges collected inside $\beta$.  We use this union of wedges to augment $\mathit{seed\_fins}[i,j]$ as defined below.
\begin{quote}
Every $\alpha \in \mathit{seed\_fins}[i,j]$ induces an augmented seed fin, $\mathit{augment}(\alpha)$, which is the union of $\alpha$ and $\bigcup \mathit{wedges}(\beta)$ for all $\beta \in \mathit{subfins}[i,j]$ such that $\apex(\beta) \in \tail(\alpha)$.
\end{quote}
We also define a set of smaller fin shapes that will be useful in the subsequent analysis.
\begin{quote}
	The set of \emph{trimmed subfins} of $\beta$, denoted by $\mathit{trimmed\_subfins}(\beta)$, is the set of fin shapes in $\beta \setminus \bigcup \mathit{wedges}(\beta)$.
\end{quote} 

For a subfin $\beta$, the support line $\ell$ of any head side of $\beta$ belongs to $\mathit{tangents}[i,j]$ because $\apex(\beta) \in \mathit{pivots}[i,j] \setminus \tail(\beta)$.  So $\ell$ triggers a trimming of $\beta$.  As a result, every head side of every trimmed subfin is generated as part of a ray in Case~1 or~2 above.  Therefore, no trimmed subfin head side lies on the boundary of any seed fin.

Figure~\ref{fg:pre-edge-1} shows an example of trimming a subfin to augment a seed fin.  The next result follows from our previous discussion.  It characterizes the tails of augmented seed fins.

\begin{lemma}
	\label{lem:augmented}
	Let $\alpha$ be a seed fin in $\mathit{seed\_fins}[i,j]$.  Let $\alpha_0$ be the fin in $\hat{\Delta}_{i-1,*}$ that contains $\alpha$.  Every tail edge $e$ of $\mathit{augment}(\alpha)$ satisfies one of the following conditions:
	\begin{emromani}
		\item $e$ is the union of some tail edges of $\alpha_0$, or
		\item $e$ is a head side of a trimmed subfin, 
		and the interior of $e$ lies strictly inside $\alpha_0 \setminus \alpha$.
	\end{emromani}
\end{lemma}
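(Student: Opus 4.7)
The plan is to classify each tail edge of $\mathit{augment}(\alpha)$ by tracing its boundary, exploiting the decomposition
\[
\alpha_0 \setminus \mathit{augment}(\alpha) \;=\; \bigcup_{\beta} \bigcup \mathit{trimmed\_subfins}(\beta),
\]
where $\beta$ ranges over all subfins in $\alpha_0 \setminus \alpha$ with $\apex(\beta) \in \tail(\alpha)$. First, I would verify that $\mathit{augment}(\alpha) \subseteq \alpha_0$ and that $\mathit{augment}(\alpha)$ shares its apex and two head sides with $\alpha_0$ (inherited from $\alpha$, since the seed fin construction already includes the triangular region containing $\apex(\alpha_0)$, and every wedge is contained in some $\beta \subseteq \alpha_0 \setminus \alpha$). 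Hence every boundary edge of $\mathit{augment}(\alpha)$ other than a head side of $\alpha_0$ is a tail edge.

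Once the decomposition is in place, each tail edge $e$ of $\mathit{augment}(\alpha)$ is shared with either the exterior of $\alpha_0$, in which case $e \subseteq \tail(\alpha_0)$, or a trimmed subfin $T \in \mathit{trimmed\_subfins}(\beta)$ for some $\beta$. In the first case, since the construction introduces no vertex in the interior of a tail edge of $\alpha_0$, the endpoints of $e$ are vertices of $\tail(\alpha_0)$, so $e$ is a union of consecutive tail edges of $\alpha_0$, giving (i). In the second case, $e$ lies on the interface between the wedges in $\beta$ and the trimmed subfin $T$; by the definition $T \subseteq \beta \setminus \bigcup \mathit{wedges}(\beta)$, this interface consists of the rays produced in Case~1 or~2 of the wedge construction, which are head sides of $T$ by definition, and whose interiors lie strictly inside $\beta \subseteq \alpha_0 \setminus \alpha$, giving (ii).

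The main obstacle I anticipate is confirming that every portion of $\tail(\alpha)$ bordering a subfin $\beta$, including the pieces of $\tail(\alpha)$ that are cutting lines introduced by the seed fin construction rather than tail edges of $\alpha_0$, is absorbed into the interior of $\mathit{augment}(\alpha)$; any such surviving piece would satisfy neither (i) nor (ii). The key observation to invoke is that each head side of $\beta$ has its support line in $\mathit{tangents}[i,j]$: since $\apex(\beta) \in \mathit{pivots}[i,j] \setminus \tail(\beta)$, the tangents from $\apex(\beta)$ to $\conv(\tail(\beta))$ are precisely the support lines of the two head sides of $\beta$, and each such tangent triggers the creation of a wedge in Case~1 that contains the corresponding head side of $\beta$. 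The two wedges together cover $\partial\beta \cap \tail(\alpha)$, so no portion of $\tail(\alpha)$ adjacent to $\beta$ can survive on the boundary of $\mathit{augment}(\alpha)$, which completes the dichotomy.
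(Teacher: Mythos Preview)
Your proposal is correct and follows essentially the same route as the paper. The paper does not give a formal proof of this lemma; it simply notes that the result ``follows from our previous discussion'' after making the key observation you isolate in your final paragraph: since $\apex(\beta)\in\mathit{pivots}[i,j]\setminus\tail(\beta)$, the support line of each head side of a subfin $\beta$ lies in $\mathit{tangents}[i,j]$, which forces a trimming of $\beta$ and guarantees that no head side of a trimmed subfin can coincide with a portion of $\tail(\alpha)$. Your write-up makes the underlying dichotomy (tail edge borders either the exterior of $\alpha_0$ or a trimmed subfin) and the verification that no new vertices are created in the interior of a tail edge of $\alpha_0$ explicit, which is a reasonable elaboration of what the paper leaves implicit.
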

\cancel{
\begin{proof}
	The correctness of (i) follows immediately from our construction of augmented seed fins.  Consider (ii). Let $e$ be any tail edge of $\mathit{augment}(\alpha)$.  If $e$ is not produced during the augmentation of $\alpha$, the endpoints of $e$ are tail vertices of $\alpha$, and therefore, they belong to $\mathit{pivots}[i,j]$.  Suppose that $e$ is produced during the augmentation of $\alpha$.  If $e$ is a head side of a trimmed subfin, its endpoints include the apex of that trimmed subfin and a vertex of $S_{1,1}$ (the source of the ray-shooting that makes $e$ a head side). In the remaining case, either $e$ is $vw_1$ or $vw_2$ in Case~1 on page~\arabic{pagecount}, or $e$ is $v_1w_1$, $v_2w_2$, or $v_1v_2$ in Case~2 on page~\arabic{pagecount}.  In the first subcase, $v$ is a contact vertex in $\mathit{pivots}[i,j]$.  In the second subcase, $v_1$ and $v_2$ are contact vertices in $\mathit{pivots}[i,j]$.  In both subcases, $w_1$ and $w_2$ are vertices of $S_{1,1}$.  
\end{proof}
}

We prove some properties of the trimmed subfins below.

\begin{lemma}
	\label{lem:suboutreg}
	For every $\beta \in \mathit{subfins}[i,j]$ and every $\gamma \in \mathit{trimmed\_subfins}(\beta)$, 
	\begin{emromani}
		\item $\tail(\gamma)$ is a contiguous subsequence of edges in $\tail(\beta)$; 
		\item for every head side $xu$ of $\gamma$, where $x = \apex(\gamma)$, $u$ is a vertex of $S_{1,1}$; and	
		\item for every head side of $\gamma$, its support line $\ell$ does not contain any vertex in $\mathit{pivots}[i,j]$ that lies outside $\tail(\beta)$, and $\ell$ does not contain any head side of another trimmed subfin in $\bigcup_{\beta' \in \mathit{subfins}[i,j]} \mathit{trimmed\_subfins}(\beta')$.
	\end{emromani}
\end{lemma}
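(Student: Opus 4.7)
My plan is to extract all three claims from a single structural observation about the trimmed subfins.  Since every wedge in $\mathit{wedges}(\beta)$ contains $\apex(\beta)$ by construction, $\bigcup \mathit{wedges}(\beta)$ is a connected region containing the apex, and its boundary inside $\beta$ is assembled from cutting rays together with (in Case~2) the collinear tail edges that are swept into the wedge.  Hence every trimmed subfin $\gamma$ has its boundary neatly partitioned into a \emph{tail portion} lying on $\tail(\beta)$ and \emph{head sides} lying on cutting rays.  Claim~(i) is then immediate: cutting rays are launched from vertices on $\tail(\beta)$ (or from the endpoints of a collinear subsequence) into the interior and exit through the head of $\beta$, so $\tail(\gamma)$ must be an uninterrupted sub-chain of $\tail(\beta)$ between two consecutive ray sources.

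For (ii) I would identify the endpoint $u$ of a head side $xu$ with the source of the ray on which the head side lies, and run through the sub-cases of the wedge construction.  In Case~1.2, and in Case~2 with $v_i \in S_{1,1}$, the ray source is $v$ or $v_i$, which is in $S_{1,1}$ by the case hypothesis.  In Case~1.1 (and in Case~2 with $v_i \notin S_{1,1}$), the source is $w_i$; since Q1 forces the tail edge $vw_i$ (respectively $v_iw_i$) to be incident to a vertex of $S_{1,1}$ and the other endpoint is, by the case hypothesis, not in $S_{1,1}$, $w_i$ must be.

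The main obstacle is (iii).  For the pivot-avoidance half, in the rotated-ray cases (1.2, and the analogous part of~2) the direction of the cutting ray lies strictly between $\overrightarrow{vw_i}$ and $\ell$; I would treat this rotation as a tie-breaking degree of freedom that can be chosen so that the support line misses the finitely many pivots in $\mathit{pivots}[i,j] \setminus \tail(\beta)$.  In Case~1.1 the support line instead coincides with the line through the tail edge $vw_i$; I would argue by contradiction that if a pivot $p \in \mathit{pivots}[i,j] \setminus \tail(\beta)$ lay on this line, then the tangent from $p$ to $\conv(\tail(\beta))$ computed in Step~1 would equal this line, touching $\conv(\tail(\beta))$ along the whole edge $vw_i$, and would therefore already trigger a Case~2 cut that sweeps $vw_i$ into the wedge and supplants the Case~1.1 cut in that neighborhood; the head side in question would then not survive in the final trimmed subfin, contradicting our setup.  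Finally, for the second half of (iii), if two head sides from distinct trimmed subfins shared a support line, the first half would pin down a unique source vertex on $\tail(\beta)$ and a unique direction, hence a unique cutting ray; the two head sides would then lie on the same ray and bound the same connected piece of $\beta \setminus \bigcup \mathit{wedges}(\beta)$, forcing the two trimmed subfins to coincide --- impossible.
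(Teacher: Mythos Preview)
Parts (i) and (ii) of your argument are correct and essentially match the paper's proof.

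Part (iii) has a genuine gap.  The paper's proof rests on one uniform observation you do not make: for \emph{every} head side $xu$ of a trimmed subfin $\gamma\subseteq\beta$, the support line $\ell$ is tangent to $\tail(\beta)$ at the ray source $u$.  (In Case~1.1 the ray lies on the line through the tail edge $vw_i$; in Case~1.2 the ray direction lies in the tangent cone at $v$.)  Once you have this, both halves of (iii) follow by the \emph{same} contradiction: if $\ell$ contains a point in $\mathit{pivots}[i,j]\setminus\tail(\beta)$, then $\ell$ itself is one of the tangents computed in Step~1, so $\ell\in\mathit{tangents}[i,j]$; and if $\ell$ also supports a head side of some $\gamma'\subseteq\beta'$ with $\beta'\neq\beta$, then $\ell$ is tangent to both $\tail(\beta)$ and $\tail(\beta')$, hence is one of the common tangents computed in Step~2, so again $\ell\in\mathit{tangents}[i,j]$.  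In either case the tangent $\ell$ triggers a Case~1.2 (or Case~2) cut at $u$ whose ray direction lies strictly between the tail edge $uw'$ (with $w'\in\tail(\gamma)$) and $\ell$, hence enters the interior of $\gamma$---so $\gamma$ would have been trimmed further, a contradiction.

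Your approach avoids this tangency observation and instead tries to push the work onto the freedom in the ``slight rotation'' and onto a uniqueness argument.  The genericity idea is plausible for the pivot half (the pivot set is fixed before any rotation is chosen), but it cannot rescue the second half: your claim that two head sides on a common support line must come from a single cutting ray with a ``unique source vertex on $\tail(\beta)$'' breaks down precisely in the cross-subfin case $\beta\neq\beta'$, where the two sources lie on different tails.  Nothing in your argument invokes the common-tangent collection of Step~2, and without it there is no mechanism preventing two head sides in distinct subfins from being collinear.  That is the missing idea.
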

\begin{proof}
	The correctness of (i) is immediate by our construction.  
	
	Consider (ii).  A head side $xu$ of $\gamma$, where $x = \apex(\gamma)$, is created as part of a ray generated in Case~1 or~2 on page~\arabic{pagecount}.  In this case, $u$ is the source of this ray which is guaranteed to a vertex of $S_{1,1}$.  
	
	Consider (iii).  Let $xu$ be a head side of $\gamma$, where $x = \apex(\gamma)$.  Let $\ell$ be the support line of $xu$.  Let $\beta$ be the subfin such that $\gamma \in \mathit{trimmed\_subfins}(\beta)$.  Since $xu$ is a head side of $\gamma$, $\ell$ is tangent to $\tail(\beta)$ at $u$. 
	
	If $\ell$ contains a vertex in $\mathit{pivots}[i,j] \setminus \tail(\beta)$, then $\ell \in \mathit{tangents}[i,j]$.  But then $\gamma$ should have been trimmed further by a ray produced in Case~1 or~2 on page~\arabic{pagecount} that shoots from $u$ or its neighboring vertex in $\tail(\gamma)$, a contradiction.
	
	Next, we assume for the sake of contradiction that $\ell$ contains the head side $yv$ of another trimmed subfin $\gamma'$ where $y = \apex(\gamma')$.  By our construction, no line can contain the head sides of two trimmed subfins that are contained in the same subfin.  Thus, $\gamma$ and $\gamma'$ are contained in two distinct subfins, say $\beta$ and $\beta'$, respectively.  Since $\ell$ contains $xu$ and $yv$, $\ell$ is tangent to $\tail(\beta)$ at $u$ and $\tail(\beta')$ at $v$. Thus, $\ell \in \mathit{tangents}[i,j]$.  But then $\gamma$ would have been trimmed further by a ray that shoots from $u$ or its neighboring vertex in $\tail(\gamma)$, a contradiction.
\end{proof}

We bound the total complexity of augmented seed fins and their construction time.

\begin{lemma}
	\label{lem:augment}
	The total complexity of $\bigl\{\mathit{augment}(\alpha) : \alpha \in \mathit{seed\_fins}[i,j]\bigr\}$ is $O((\log n_{i-1})^{4c_1-4c_2+6})$.  The augmented seed fins can be constructed in $O((\log n_{i-1})^{4c_1-4c_2+7})$ time.
\end{lemma}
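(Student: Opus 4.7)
The plan is to estimate each combinatorial ingredient---subfins, pivots, tangent lines, wedges---and then sum. By Lemma~\ref{lem:seed-fins}, the seed fins have total complexity $O((\log n_{i-1})^{2c_1-2c_2+3})$. Since every subfin's apex is a tail vertex of some seed fin, $|\mathit{subfins}[i,j]|$ is at most the same order; likewise $|\mathit{pivots}[i,j]|$, being the seed fin tail vertices plus the two tail endpoints of each subfin, is $O((\log n_{i-1})^{2c_1-2c_2+3})$. Step~1 contributes at most two tangents per $(\beta, v)$ pair with $v \notin \tail(\beta)$, and Step~2 at most four common tangents per pair of subfins, so
\[
|\mathit{tangents}[i,j]| = O\!\left(|\mathit{subfins}[i,j]| \cdot |\mathit{pivots}[i,j]| + |\mathit{subfins}[i,j]|^2\right) = O((\log n_{i-1})^{4c_1-4c_2+6}).
\]

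Next I would count the wedges. By construction, each tangent in $\mathit{tangents}[i,j]$ is tangent to a single specified subfin (Step~1) or to exactly two specified subfins (Step~2), so the total number of recorded $(\beta, \ell)$ tangency pairs is $O((\log n_{i-1})^{4c_1-4c_2+6})$, each generating at most one wedge.  I would then argue that each wedge has only $O(1)$ boundary vertices: in Case~1 the wedge boundary consists of $\apex(\beta)$, portions of the two head sides of $\beta$, two ray segments, and the edges $vw_1, vw_2$; Case~2 is analogous.  Because $\beta$ and its enclosing seed fin $\alpha$ share only the point $\apex(\beta) \in \tail(\alpha)$ inside the containing fin in $\hat{\Delta}_{i-1,*}$, each wedge attaches to $\alpha$ at a single point, and the boundary of $\mathit{augment}(\alpha)$ is obtained from $\alpha$'s boundary by replacing each such attachment point with the outer boundary of the union of wedges in that subfin.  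Summing over all seed fins,
\[
\sum_{\alpha} |\mathit{augment}(\alpha)| = O((\log n_{i-1})^{2c_1-2c_2+3}) + O((\log n_{i-1})^{4c_1-4c_2+6}) = O((\log n_{i-1})^{4c_1-4c_2+6}).
\]

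For the running time, each Step~1 or Step~2 tangent can be computed in $O(\log n_{i-1})$ time by binary search on the convex hull of the relevant reflex chain, and each wedge is built with an $O(\log n_{i-1})$-time ray shoot along $\tail(\beta)$.  Multiplying the $O((\log n_{i-1})^{4c_1-4c_2+6})$ tangency incidences by the $O(\log n_{i-1})$ per-incidence cost yields the claimed $O((\log n_{i-1})^{4c_1-4c_2+7})$ bound.  The main obstacle I anticipate is showing that within a single subfin the union of its multiple wedges has boundary complexity linear, rather than quadratic, in the number of wedges, so that contributions to $\mathit{augment}(\alpha)$ sum additively; this should follow because each wedge contributes only a constant number of straight cut segments inside $\beta$, and the union of $k$ such regions has $O(k)$ boundary arcs.
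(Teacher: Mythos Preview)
Your proposal is correct and follows essentially the same approach as the paper: bound the number of subfins and pivots by the seed-fin complexity $O((\log n_{i-1})^{2c_1-2c_2+3})$ from Lemma~\ref{lem:seed-fins}, count tangents as $O(|\mathit{subfins}|^2 + |\mathit{subfins}|\cdot|\mathit{pivots}|)$, observe that each tangent yields $O(1)$ wedges of $O(1)$ description size, and charge an extra $O(\log n_{i-1})$ factor for the binary-search tangent computations. The paper handles your flagged ``obstacle'' about wedge-union complexity with the one-line assertion that each wedge causes an $O(1)$ increase in the augmented-seed-fin complexity, so your caution there matches the paper's level of rigor rather than indicating a gap.
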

\begin{proof}
The complexity of $\mathit{seed\_fins}[i,j]$ is $O((\log n_{i-1})^{2c_1-2c_2+3})$ by Lemma~\ref{lem:seed-fins}, so the number of subfins is $O((\log n_{i-1})^{2c_1-2c_2+3})$.  Each pair of subfins may generate up to four common tangents and hence $O(1)$ corresponding wedges.  Similarly, each vertex in $\mathit{pivots}[i,j]$ (i.e., tail vertex in $\mathit{seed\_fins}[i,j]$ or tail endpoint in $\mathit{subfins}[i,j]$) may generate up to two tangents with a subfin.  Each wedge causes an $O(1)$ increase in complexity.  Therefore, augmenting $\mathit{seed\_fins}[i,j]$ with the wedges may increase the complexity to $O((\log n_{i-1})^{4c_1-4c_2+6})$.

A common tangent can be found in $O(\log n_{i-1})$ time, resulting in $O((\log n_{i-1})^{4c_1-4c_2+7})$ time overall.  
Gluing the wedges to the seed fins can be done in time linear in the sum of complexities of the seed fins and wedges.
\end{proof}

\begin{figure}
	\centerline{\includegraphics[scale=.45]{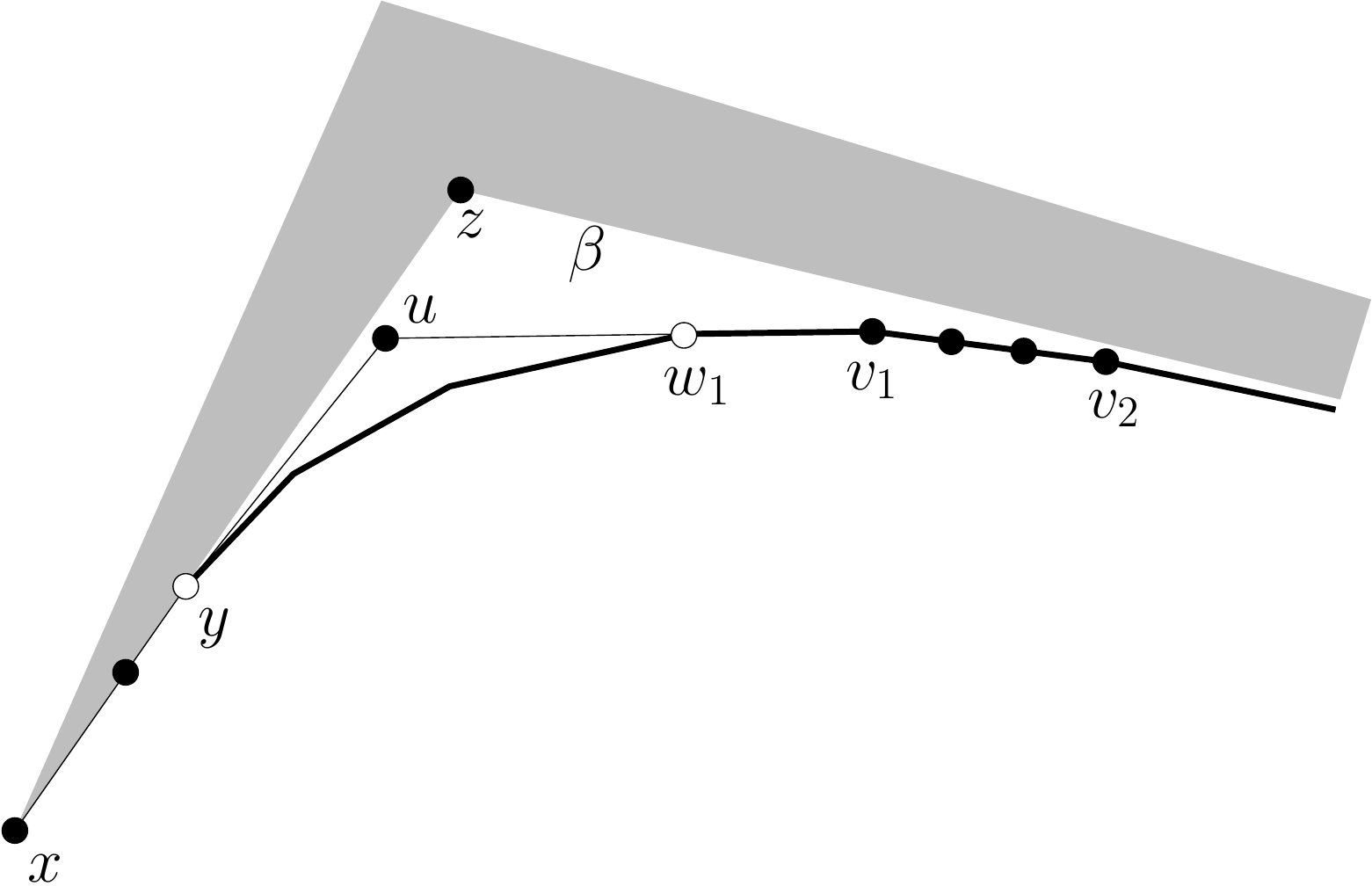}}
	\caption{The grey area shows a part of a seed fin.  A subfin $\beta$ is shown, including its apex $z$, one head size $yz$, part of the other head side, and part of its tail---the bold reflex chain.  The tail edge $xz$ of the seed fin contains the head side $yz$ of $\beta$. Suppose that $y$ is a vertex of $S_{1,1}$ and $v_1$ is not a vertex of $S_{1,1}$.  So the neighbor $w_1$ of $v_1$ is a vertex of $S_{1,1}$.   Suppose that a ray is shot from $y$ towards $u$ to prune $\beta$. Suppose that a line in $\mathit{tangents}[i,j]$ passes through the collinear edges $v_1 \cdots v_2$, which generates a ray from $w_1$ towards $u$ to prune $\beta$.  Assume that there is no more pruning of $\beta$.  
		A trimmed subfin in $\mathit{trimmed\_subfins}(\beta)$ has apex $u$ and tail endpoints $y$ and $w_1$.  The segments $xy$, $yu$, $uw_1$, $w_1v_1$, and $v_1v_2$ become some of the tail edges of the augmented seed fin.}
	\label{fg:pre-edge-1}
\end{figure}

\subsubsection{The $j$-th structures at the $i$-th layer}

We are ready to describe the construction of $R_{i,j}$, $S_{i,j}$, $\tilde{\Delta}_{i,j}$, $\hat{\Delta}_{i,j}$, and $\Delta_{i,j}$.  As before, $D_{i,j}$ consists of two data structures, one obtained by applying Theorem~\ref{thm:iacono} to $\Delta_{i,j}$ and a worst-case optimal planar point location structure.

Let $\cal T$ denote the subset of geodesic triangles in $\tilde{\Delta}_{i-1,*}$ that contain seed fins in \emph{seed\_fins}$[i,j]$.    For every geodesic triangle $\tau \in {\cal T}$, define:
\[
\mathit{shrink}(\tau) = \mathit{star}(\tau) \cup \bigcup_{\substack{\alpha \in \mathit{seed\_fins}[i,j] \\ \alpha \subseteq \tau}} \mathit{augment}(\alpha).
\]
Note that $\mathit{shrink}(\tau) \subseteq \tau$.  The set $R_{i,j}$ of regions is built as follows.

\begin{enumerate}
	\item For every region $r$ in $S_{1,1}$ that contains some geodesic triangles in $\cal T$, we use $U_r$ to denote $\bigcup_{\tau \in {\cal T},\,\tau \subseteq r } \mathit{shrink}(\tau)$.  The interior of $U_r$ consists of one or more connected components, i.e., $U_r$ may be pinched at one or more vertices.  Divide $U_r$ into pieces by cutting at the pinched vertices, if there is any, and let $C_r$ be the set of these pieces.  The set $C_r$ can be constructed by a plane sweep over $\mathit{shrink}(\tau)$ for all $\tau \in {\cal T}$ that lie inside $r$.  

	\item Let ${\cal C} := \bigcup_{\text{region $r \in S_{1,1}$}} C_r$, which is a set of regions.  
	
	\item For every pair of adjacent regions in $\cal C$, ensure that their boundaries are compatible, that is, if a vertex $v$ of a region appears in the interior of an edge $e$ of an adjacent region, then insert $v$ into $e$. 
	
	\item Afterwards, for every resulting region boundary edge $e'$, if neither endpoint of $e'$ is a vertex of $S_{1,1}$, we find a vertex $w$ of $S_{1,1}$ in the interior of $e'$ and insert $w$ into $e'$.  The final set of regions produced is $R_{i,j}$.
	
\end{enumerate}
The subdivision $S_{i,j}$ is defined as follows.
\begin{quote}
Fill the space among the regions in $R_{i,j}$ inside the enclosing triangle $B_S$ with dummy triangles.  No additional vertex is introduced.  This can be done using a plane sweep.  The resulting subdivision is $S_{i,j}$.
\end{quote}

We prove a property of $R_{i,j}$ below.  It shows that step~4 of the construction of $R_{i,j}$ always succeeds.

\begin{lemma}
	\label{lem:pre-edge}
	Every boundary edge of every region in $R_{i,j}$ is incident to a vertex of $S_{1,1}$.
\end{lemma}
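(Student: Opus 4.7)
The plan is to show that step~4 of the construction of $R_{i,j}$ always succeeds: whenever a boundary edge $e'$ of a region in $\mathcal{C}$ has neither endpoint in $S_{1,1}$, some vertex of $S_{1,1}$ lies strictly interior to $e'$, and step~4 then inserts it. First I would classify each such $e'$ according to which portion of $\partial\mathit{shrink}(\tau)$ it belongs to. Because the head sides of $\mathit{augment}(\alpha)$ coincide with those of the fin $\alpha_0\in\hat{\Delta}_{i-1,*}$ containing $\alpha$---the wedges in Section~\ref{sec:augment-seed-fins} are attached only to $\tail(\alpha)$---these head sides are internal to $\mathit{shrink}(\tau)$ and never appear on region boundaries. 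Hence $e'$ is a sub-segment of either (a)~an edge on $\partial r$ for some region $r\in S_{1,1}$, or (b)~a tail edge of some $\mathit{augment}(\alpha)$. In case~(a), both endpoints of $e'$ lie in $S_{1,1}$, and any vertex step~3 inserts on a shared $\partial r\cap\partial r'$ also lies in $S_{1,1}$, so step~4 has nothing to do.

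For case~(b), I would appeal to Lemma~\ref{lem:augmented}. If the underlying tail edge is a head side of a trimmed subfin (sub-case~(ii)), then Lemma~\ref{lem:suboutreg}(ii) gives an $S_{1,1}$ endpoint; since such an edge lies strictly interior to a single geodesic triangle $\tau$, no vertex of any other region can lie in its interior, so step~3 does not split it. Otherwise (sub-case~(i)), the underlying tail edge is a union $p_0p_1\cup p_1p_2\cup\cdots\cup p_{k-1}p_k$ of consecutive collinear tail edges of $\alpha_0$, each $p_{j-1}p_j$ being an edge of the balanced geodesic triangulation of $R_{i-1,*}$. I would argue that step~3 can only split this union at an intermediate $p_j$, because the only vertices contributed on the shared boundary $\partial\tau\cap\partial\tau'$ by the adjacent $\mathit{shrink}(\tau')$ are themselves vertices of the balanced geodesic triangulation (the tail of any augmented seed fin lying on $\partial\tau'$ uses only such vertices by Lemma~\ref{lem:augmented}(i)). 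Hence $e'$ is itself a sub-union $p_ap_{a+1}\cup\cdots\cup p_{b-1}p_b$ with $0\le a<b\le k$.

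The heart of the argument is this last case. Suppose $p_a,p_b\notin S_{1,1}$. If $b=a+1$, then $e'=p_ap_{a+1}$ is a single balanced geodesic triangulation edge, and the inductive invariant Q1 for layer $i-1$ forces one of its endpoints into $S_{1,1}$---a contradiction. Hence $b\ge a+2$, and Q1 applied to $p_ap_{a+1}$ (with $p_a\notin S_{1,1}$) forces $p_{a+1}\in S_{1,1}$. Since $a+1<b$, the vertex $p_{a+1}$ lies strictly interior to $e'$, so step~4 inserts it. I expect the main subtlety to be the tracking of vertices inserted by step~3: ensuring that no non-$S_{1,1}$ split point can appear elsewhere than at a balanced geodesic triangulation vertex along $\partial\tau\cap\partial\tau'$, which will rely on carefully following the provenance of each region's boundary vertices through Lemmas~\ref{lem:augmented} and~\ref{lem:suboutreg}.
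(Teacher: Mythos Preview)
Your approach is correct and essentially the same as the paper's: reduce to showing step~4 succeeds, observe that only tail edges of augmented seed fins contribute to the boundary of $R_{i,j}$ (your observation that the head sides are internal to $\mathit{shrink}(\tau)$ matches the paper's remark that $\mathit{star}(\tau)$ does not contribute), and then handle the two alternatives of Lemma~\ref{lem:augmented} exactly as you describe---head sides of trimmed subfins via Lemma~\ref{lem:suboutreg}(ii), and unions of tail edges of $\alpha_0$ via the inductive invariant~Q1 for $\tilde{\Delta}_{i-1,*}$. Your case~(a) is redundant (such edges are still tail edges of some $\alpha_0$ and fall under case~(b)(i)), and your more careful tracking of step~3 split points is a refinement the paper leaves implicit, but the argument is the same.
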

\begin{proof}
	This property is guaranteed by step~4 in the construction of $R_{i,j}$ above.  It suffices to show that, in step~4, if neither endpoint of an edge $e'$ is a vertex of $S_{1,1}$, we can find a vertex $w$ of $S_{1,1}$ in the interior of $e'$.  
	
	Let $\cal T$ denote the set of geodesic triangles in $\tilde{\Delta}_{i-1,*}$ that contain seed fins in $\mathit{seed\_fins}[i,j]$.  For every geodesic triangle $\tau \in {\cal T}$ and every $\alpha \in \mathit{fins}(\tau)$, we include the triangular regions incident to convex vertices of $\tau$ as well as their ancestors in $T_\alpha$ in forming the seed fins.  Therefore, $\mathit{star}(\tau)$ does not contribute to the boundary of $\mathit{shrink}(\tau)$.  As a result, only tail edges of augmented seed fins can contribute to the boundary of $R_{i,j}$.  
	
	If $e'$ is a head side of a trimmed subfin, then by Lemma~\ref{lem:augmented}(ii), $e'$ does not merge with any other edge in the union of augmented seed fins.  Therefore, $e'$ will become a boundary edge of $R_{i,j}$, and it is incident to a vertex of $S_{1,1}$.  
	
	If $e'$ is not a head side of any trimmed subfin, it may be the whole or some part of a tail edge of an augmented seed fin.  Lemma~\ref{lem:augmented}(i) implies that $e'$ is a union of tail edges of some fin in $\hat{\Delta}_{i-1,*}$.  As $R_{i-1,*}$ satisfies Q1 in Table~\ref{tb:Q} inductively, every edge in $\tilde{\Delta}_{i-1,*}$ is incident to a vertex of $S_{1,1}$.  If neither endpoints of $e'$ is a vertex of $S_{1,1}$, then $e'$ consists of at least two tail edges in $\hat{\Delta}_{i-1,*}$.  The tail vertices in $\hat{\Delta}_{i-1,*}$ that lie in $e'$ next to its two endpoints are vertices of $S_{1,1}$.
\end{proof}

We bound the total complexities and construction times of $R_{i,j}$, $S_{i,j}$, $\tilde{\Delta}_{i,j}$, $\hat{\Delta}_{i,j}$, and $\Delta_{i,j}$,

\begin{lemma}
	\label{lem:Delta}
	The complexities of $R_{i,j}$, $S_{i,j}$, $\tilde{\Delta}_{i,j}$, $\hat{\Delta}_{i,j}$, and $\Delta_{i,j}$ are $O((\log n_{i-1})^{4c_1-4c_2+6})$.  They can be constructed in $O((\log n_{i-1})^{4c_1-4c_2+7})$ time.
\end{lemma}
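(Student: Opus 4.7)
The plan is to observe that the complexity of all five structures is governed by the complexity of $R_{i,j}$, which in turn is dominated by $\sum_{\tau \in {\cal T}} |\mathit{shrink}(\tau)|$. Each $\mathit{shrink}(\tau)$ is the union of $\mathit{star}(\tau)$ (of constant complexity) with the augmented seed fins lying inside $\tau$. Summed over $\tau \in {\cal T}$, the stars contribute at most $O(|{\cal T}|) = O((\log n_{i-1})^{2c_1-2c_2+2})$ by Lemma~\ref{lem:seed-fins}, which is absorbed into the bound $O((\log n_{i-1})^{4c_1-4c_2+6})$ for the total complexity of the augmented seed fins provided by Lemma~\ref{lem:augment}. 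Hence $\sum_\tau |\mathit{shrink}(\tau)| = O((\log n_{i-1})^{4c_1-4c_2+6})$.

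Next, I would verify that each of the four construction steps of $R_{i,j}$ preserves this complexity bound and runs within the advertised time budget. Computing $U_r$ and splitting at pinched vertices to form $C_r$ is carried out by a plane sweep over the $\mathit{shrink}(\tau)$ lying in each $r$; since unions of shapes inside a common region only merge boundary pieces, $\sum_r |C_r|$ remains $O((\log n_{i-1})^{4c_1-4c_2+6})$, with sweep time $O(N \log N) = O((\log n_{i-1})^{4c_1-4c_2+7})$ for $N = O((\log n_{i-1})^{4c_1-4c_2+6})$. Enforcing boundary compatibility inserts at most one new vertex per incidence of a vertex of one region with an edge of an adjacent region, so the new vertex count is bounded by the current edge count. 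Step~4 adds at most one $S_{1,1}$-vertex per boundary edge, again preserving the asymptotic complexity. Finally, $S_{i,j}$ is produced by one more plane sweep filling in dummy triangles without introducing new vertices, still within the same time bound.

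For the triangulation layers, I would invoke~\cite{chazelle94}: the balanced geodesic triangulation $\tilde{\Delta}_{i,j}$ of every $r \in R_{i,j}$ is produced in $O(\mathit{size}(r))$ time and has $O(\mathit{size}(r))$ size, so $|\tilde{\Delta}_{i,j}| = O((\log n_{i-1})^{4c_1-4c_2+6})$. Refining each geodesic triangle $\tau$ into $\mathit{star}(\tau)$ and three fins is $O(1)$ work per geodesic triangle, so $\hat{\Delta}_{i,j}$ inherits the same asymptotic complexity. Triangulating each star (constant cost) and each fin by \textsf{SplitFin} and \textsf{TriFin} from Section~\ref{sec:first} (linear time per fin) then yields $\Delta_{i,j}$ of complexity $O((\log n_{i-1})^{4c_1-4c_2+6})$ in linear total time, all absorbed into the target $O((\log n_{i-1})^{4c_1-4c_2+7})$.

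The step I expect to be the main obstacle is the algorithmic realization of step~4 in the construction of $R_{i,j}$: for a boundary edge $e'$ whose endpoints both lie outside $S_{1,1}$, we must locate a vertex $w \in S_{1,1}$ in the interior of $e'$ fast enough. Lemma~\ref{lem:pre-edge} shows that such a $w$ exists and, in fact, can be chosen among the tail vertices of $\hat{\Delta}_{i-1,*}$ adjacent to an endpoint of $e'$. Provided each such edge carries a pointer back to the tail fragment of $\hat{\Delta}_{i-1,*}$ that produced it---a pointer we can record cheaply during the seed-fin and augmentation phases---a valid $w$ is found in $O(1)$ time per edge, so the cumulative overhead stays within $O((\log n_{i-1})^{4c_1-4c_2+7})$.
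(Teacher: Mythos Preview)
Your proposal is correct and follows essentially the same approach as the paper: bound $\sum_\tau |\mathit{shrink}(\tau)|$ via Lemma~\ref{lem:augment}, then argue that the plane sweeps, balanced geodesic triangulation, star/fin refinement, and fin triangulation each preserve this complexity and fit within the time budget. You are in fact a bit more explicit than the paper about the boundary-compatibility step and the implementation of step~4 (locating a vertex of $S_{1,1}$ on each offending edge), points the paper's proof passes over silently.
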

\begin{proof}
	Let $\cal T$ be the set of geodesic triangles in $\tilde{\Delta}_{i-1,*}$ that contain seed fins in $\mathit{seed\_fins}[i,j]$.  Since there are $O((\log n_{i-1})^{2c_1-2c_2+2})$ seed fins by Lemma~\ref{lem:seed-fins}, $|{\cal T}| = O((\log n_{i-1})^{2c_1-2c_2+2})$.  For every $\tau \in {\cal T}$ and every $\alpha \in \fins(\tau)$, since the triangular regions in $T_\alpha$ incident to convex vertices of $\tau$ as well as their ancestors in $T_\alpha$ are used in forming seed fins, $\st(\tau)$ does not contribute to the boundary of $\mathit{shrink}(\tau)$.  By Lemma~\ref{lem:augment}, the total complexity of all augmented seed fins is $O((\log n_{i-1})^{4c_1-4c_2+6})$.  Therefore, the total complexity of all $\mathit{shink}(\tau)$'s is $O((\log n_{i-1})^{4c_1-4c_2+6})$.
	
	The complexities of $R_{i,j}$ and $S_{i,j}$ are thus $O((\log n_{i-1})^{4c_1-4c_2+6})$.  The balanced geodesic triangulations of the regions in $R_{i,j}$ have a total complexity linear in the complexity of $R_{i,j}$.  So $\tilde{\Delta}_{i,j}$ has $O((\log n_{i-1})^{4c_1-4c_2+6})$ complexity.  Partitioning each geodesic triangle in $\tilde{\Delta}_{i,j}$ into fins and stars does not increase the asymptotic complexity.  Neither does the triangulation of $\hat{\Delta}_{i,j}$ to form $\Delta_{i,j}$.  So both $\hat{\Delta}_{i,j}$ and $\Delta_{i,j}$ have $O((\log n_{i-1})^{4c_1-4c_2+6})$ complexities.
	
	By Lemma~\ref{lem:augment}, the augmented seed fins can be constructed in $O((\log n_{i-1})^{4c_1-4c_2+7})$ time.  Since the construction of $R_{i,j}$ involves plane sweeps, it takes $O((\log n_{i-1})^{4c_1-4c_2+6}\log\log n_{i-1})$ time.  The plane sweep to form $S_{i,j}$ also takes $O((\log n_{i-1})^{4c_1-4c_2+6} \log\log n_{i-1})$ time.  The balanced geodesic triangulations of the regions in $R_{i,j}$ and hence $\tilde{\Delta}_{i,j}$ can be constructed in $O((\log n_{i-1})^{4c_1-4c_2+6})$ time.  It takes only $O(1)$ time per geodesic triangle to form $\hat{\Delta}_{i,j}$.  Finally, triangulating $\hat{\Delta}_{i,j}$ to form $\Delta_{i,j}$ takes $O(|\hat{\Delta}_{i,j}|) = O((\log n_{i-1})^{4c_1-4c_2+6})$ time.
\end{proof}

We have discussed  the construction of the structures, their complexities, and the processing time needed.  We now prove the invariants Q1--Q3 in Table~\ref{tb:Q}.  We first show that $R_{i,j}$ satisfies Q1 in Table~\ref{tb:Q}.

\begin{lemma}
	\label{lem:edgeTouchS}
	$R_{i,j}$ satisfies Q1 in Table~\ref{tb:Q}.
\end{lemma}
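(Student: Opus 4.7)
The plan is to induct on $(i,j)$. The base case $R_{1,1}$ is immediate since every vertex of $R_{1,1}$ is a vertex of $S_{1,1}$. For the inductive step I assume Q1 for every earlier $\tilde{\Delta}_{k,l}$, in particular for $\tilde{\Delta}_{i-1,*}$, and fix a region $r\in R_{i,j}$ together with an edge $e$ of its balanced geodesic triangulation.

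By Lemma~\ref{lem:pre-edge} every boundary edge of $r$ is incident to $S_{1,1}$, so the hypothesis of Lemma~\ref{lem:shift} is met. Consequently the shifting strategy produces kites whose three corners lie in $S_{1,1}$, and each kite boundary is a union of shortest paths between those corners. Hence $e$ is either a boundary edge of $r$---handled directly by Lemma~\ref{lem:pre-edge}---or a straight segment $uv$ on such a geodesic. If $u$ or $v$ is a division vertex, $e$ is already incident to $S_{1,1}$, so I may assume both $u,v$ are interior turning vertices of the geodesic, and therefore both are reflex vertices of $r$; the task reduces to showing $\{u,v\}\cap S_{1,1}\neq\emptyset$.

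I then classify the reflex vertices of $\partial r$ by their origin. Vertices inserted in step~4 of $R_{i,j}$'s construction sit in the interior of an existing straight boundary edge, so they are flat and never act as turning vertices, and may be ignored. Every remaining vertex of $\partial r$ comes from some $\shrink(\tau)$ with $\tau\in\tilde{\Delta}_{i-1,*}$, i.e.\ either from $\st(\tau)$ or from an augmented seed fin $\mathit{augment}(\alpha)$ with $\alpha\subseteq\tau$. Star vertices appearing on $\partial r$ are corners of $\tau$---division vertices at level~$i-1$---and are in $S_{1,1}$ by the inductive hypothesis applied to $\tilde{\Delta}_{i-1,*}$ together with Lemma~\ref{lem:shift} at that level. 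Using Lemma~\ref{lem:augmented}, the tail vertices of $\mathit{augment}(\alpha)$ split into three classes: (a) meeting points of consecutive tail edges of the fin $\alpha_0\in\hat{\Delta}_{i-1,*}$ containing $\alpha$; (b) tail endpoints of trimmed subfins; and (c) apexes of trimmed subfins. Class~(b) lies in $S_{1,1}$ directly by Lemma~\ref{lem:suboutreg}(ii). Class~(a) is a tail vertex of $\alpha_0$, i.e.\ a vertex of $\tilde{\Delta}_{i-1,*}$, and the same splitting argument used in step~4 and in the proof of Lemma~\ref{lem:pre-edge}, combined with the inductive hypothesis applied to the two edges of $\tilde{\Delta}_{i-1,*}$ incident to it, is meant to force such a vertex into $S_{1,1}$ whenever it is genuinely reflex in $r$ (as opposed to collinear with its neighbours, in which case it is flat and thus irrelevant).

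The main obstacle is class~(c): the apex $x$ of a trimmed subfin can be reflex in $r$ and is not guaranteed to be in $S_{1,1}$, so a geodesic chord $uv$ might in principle have both endpoints in class~(c). The plan to rule this out is to invoke Lemma~\ref{lem:suboutreg}(iii), which forbids two distinct trimmed subfins from sharing a head-side support line and forbids such a support line from containing a pivot vertex outside the owning subfin's tail. Combined with the local convexity of the wedges constructed in Section~\ref{sec:augment-seed-fins}, this should force any straight segment joining two class-(c) apexes inside $r$ to cross the boundary of an intermediate wedge or trimmed subfin, contradicting its being a single geodesic segment in $r$. Once this case is excluded, every geodesic segment incident to a class-(c) apex must terminate at a division vertex or at a class-(a)/(b) vertex, all of which lie in $S_{1,1}$, completing the induction.
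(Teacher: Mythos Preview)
Your reduction is sound up to the point where both endpoints $u,v$ of the geodesic edge are reflex vertices of $r$ not lying in $S_{1,1}$, and the paper argues in the same way. But from there your proposal has a genuine gap in both the class~(a) and class~(c) branches.

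For class~(a): a tail vertex of $\alpha_0\in\hat{\Delta}_{i-1,*}$ that is not in $S_{1,1}$ cannot be dismissed by the inductive hypothesis. Q1 at level $i-1$ only tells you that each \emph{edge} of $\tilde{\Delta}_{i-1,*}$ touches $S_{1,1}$, not that every reflex \emph{vertex} of $\tilde{\Delta}_{i-1,*}$ lies in $S_{1,1}$; indeed such a vertex is exactly the apex of a trimmed subfin created at some earlier level $i_1<i$. So class~(a) and class~(c) are really the same phenomenon at different levels, and you must handle the mixed case where $u$ is an apex created at level $i_1$ and $v$ at level $i_2>i_1$. This is the paper's Case~1, and it requires a non-trivial argument: one must track how $u$ survives as a boundary vertex down to level $i_2$ (either as a pivot, or as a tangent-contact vertex of some subfin $\beta'$ at level $i_2$), and in each subcase produce a line in $\mathit{tangents}[i_2,j_2]$ that would have trimmed $\gamma_2$ further.

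For class~(c): your plan to invoke Lemma~\ref{lem:suboutreg}(iii) does not close the argument, because that lemma speaks only about support lines of \emph{head sides} of trimmed subfins, whereas the geodesic chord $uv$ need not lie on any head side. The missing ingredient, central to the paper's proof, is the observation that since $uv$ is an interior edge of a shortest path, its support line $\ell$ is locally tangent to $\partial r$ at both $u$ and $v$, hence each trimmed subfin $\gamma_s$ lies entirely on one side of $\ell$. The paper then translates or rotates $\ell$ (keeping tangency to one tail) until it becomes a common tangent in $\mathit{tangents}[i_2,j_2]$, forcing a further trimming of $\gamma_2$---a contradiction. Your ``should force the segment to cross a wedge boundary'' does not supply this mechanism.
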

\begin{proof}
	The lemma is clearly true for $R_{1,1}$.  Assume inductively that the lemma is true for all level index $< i$ and for all version index $\geq 1$.  Lemma~\ref{lem:pre-edge} states that every boundary edge of $R_{i,j}$ is incident to a vertex of $S_{1,1}$.  The only worry is whether some edge in the balanced geodesic triangulations of regions in $R_{i,j}$ may connect two vertices that are not vertices of $S_{1,1}$.
	
	Assume to the contrary that such an edge $x_1x_2$ exists.  Let $r_i$ denote the region in $R_{i,j}$ that contains $x_1x_2$.    Let $\partial$ denote the boundary operator that returns the boundary of a region.
	Since $x_1$ and $x_2$ are not vertices of $S_{1,1}$, they must be created at level $i$ or earlier.  By our construction method, a newly created vertex must be the apex of a trimmed subfin.  Therefore, for $s \in [1,2]$, there exists $i_s \leq i$, $j_s \geq 1$, $\beta_s \in \mathit{subfins}[i_s,j_s]$, and $\gamma_s \in \mathit{trimmed\_subfins}(\beta_s)$ such that $x_s = \apex(\gamma_s)$.  (It is possible that a trimmed subfin is actually a subfin.)   
	
	The edge $x_1x_2$ lies on a shortest path between two of the three defining vertices of a kite.  Let $\rho$ denote this shortest path.  Let $\ell$ denote the support line of $x_1x_2$.  By Lemma~\ref{lem:shift}, the endpoints of $\rho$ are vertices of $S_{1,1}$, which means that $x_1$ and $x_2$ are not endpoints of $\rho$.  Then, as $\rho$ is a shortest path, $\ell$ is tangent to $\partial r_i$ locally at $x_1$ and $x_2$.  
	
	For $s \in [1,2]$, let $r_{i_s}$ be the region in $R_{i_s,j_s}$ that contains $r_i$.  Since $r_i \subseteq r_{i_s}$, $\ell$ is also tangent to $\partial r_{i_s}$ locally at $x_s$.  Therefore, $\gamma_s$ lies on one side of $\ell$.  Starting from $r_i$, we can trace a sequence of regions from level $i$ to level $\min(i_1,i_2)$ such that any region in the sequence is contained in the next region in the sequence.  For $s \in [1,2]$, $x_s$ appears as a region vertex from level $i_s$ to level $i$ in this sequence.  Also, $\ell$ is tangent to the region boundary locally at $x_s$ from level $i_s$ to level $i$ in this sequence.  Without loss of generality, assume that $i_1 \leq i_2$.  
	
	We prove that $x_1 \not\in \tail(\beta_2)$.  Assume to the contrary that $x_1$ is a vertex of $\tail(\beta_2)$.  The first possibility is that $x_1x_2$ does not contain any endpoint of $\tail(\gamma_2)$.  In this case, since $\tail(\beta_2)$ is reflex and the head sides of $\gamma_2$ are tangent to $\tail(\beta_2)$, the part of $x_1x_2$ near $x_2$ lies inside $\gamma_2$. Note that $\gamma_2$ lies outside $r_{i_2}$.  But this contradicts the fact that the line $\ell$ through $x_1x_2$ is tangent to $\partial r_{i_2}$ at $x_2$.  Figure~\ref{fg:precase}(a) gives an illustration.  The second possibility is that $x_1x_2$ contains an endpoint $v$ of $\tail(\gamma_2)$.  Refer to Figure~\ref{fg:precase}(b).  Thus, $x_2v$ is a head side of $\gamma_2$.  Recall that every head side of $\gamma_2$ is a boundary edge of $r_{i_2}$.  So $x_2v \subseteq \partial r_{i_2}$.  As $r_i \subseteq r_{i_2}$ and $x_1x_2 \subseteq r_i$, we have $x_2v \subseteq \partial r_i$.  If $x_1x_2 \subseteq \partial r_i$, then $x_1x_2$ is a boundary edge of $r_i$ as it is a single edge of the balanced geodesic triangulation of $r_i$.  But this contradicts Lemma~\ref{lem:pre-edge} because neither $x_1$ nor $x_2$ is a vertex of $S_{1,1}$.  If $x_1x_2 \not\subseteq \partial r_i$, then as we walk from $v$ towards $x_1$, we must exit $\partial r_i$ at a boundary vertex of $r_i$ before reaching $x_1$.  But this contradicts the fact that $x_1x_2$ is a single edge of the balanced geodesic triangulation of $r_i$.  This proves that $x_1 \not\in \tail(\beta_2)$.
	
	\begin{figure}
		\centering
		\begin{tabular}{ccc}
			\includegraphics[scale=0.55]{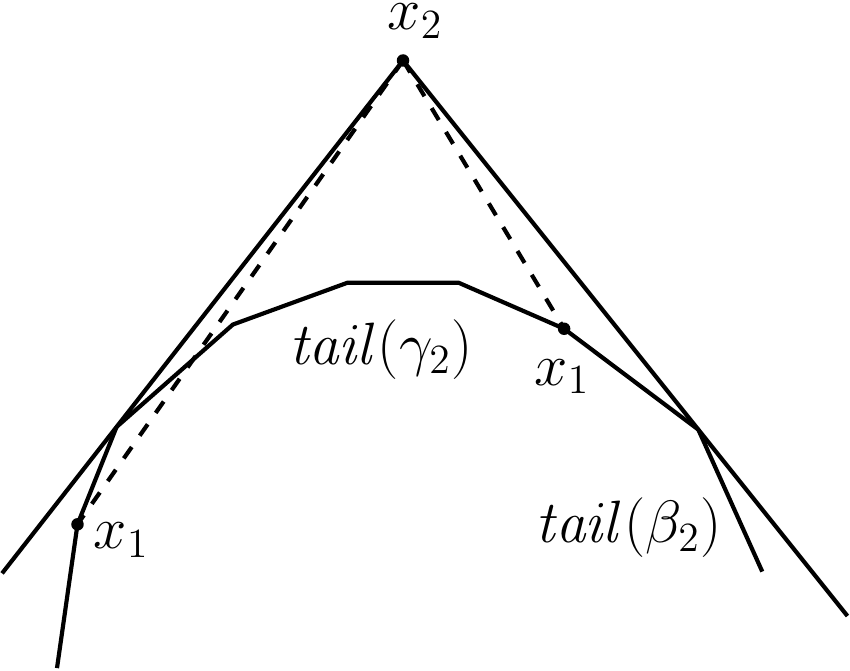} & & 
			\includegraphics[scale=0.55]{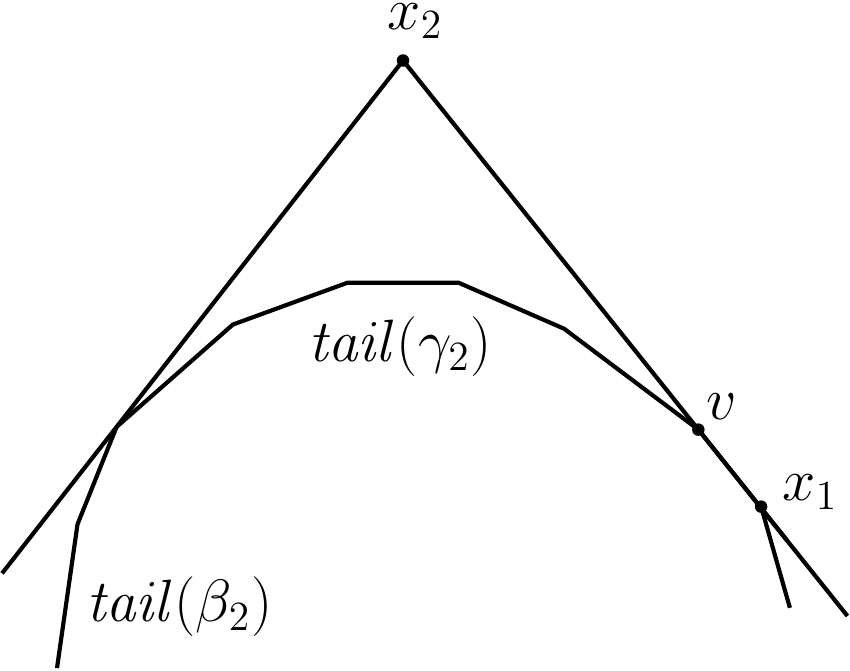} \\ \\
			(a) & \hspace*{.2in} & (b) \\
		\end{tabular}
		\caption{Impossible configurations of $x_1 \in \tail(\beta_2)$.}
		\label{fg:precase}
	\end{figure}
		  
	There are two cases to analyze depending on $i_1 < i_2$ or not.
	
	\vspace{4pt}
	
	Case 1: $i_1 < i_2$.  So $x_1$ is inherited from level $i_1$ to level $i_2$. 
	
	\vspace{4pt}
	
	\begin{figure}
		\centerline{\includegraphics[scale=0.6]{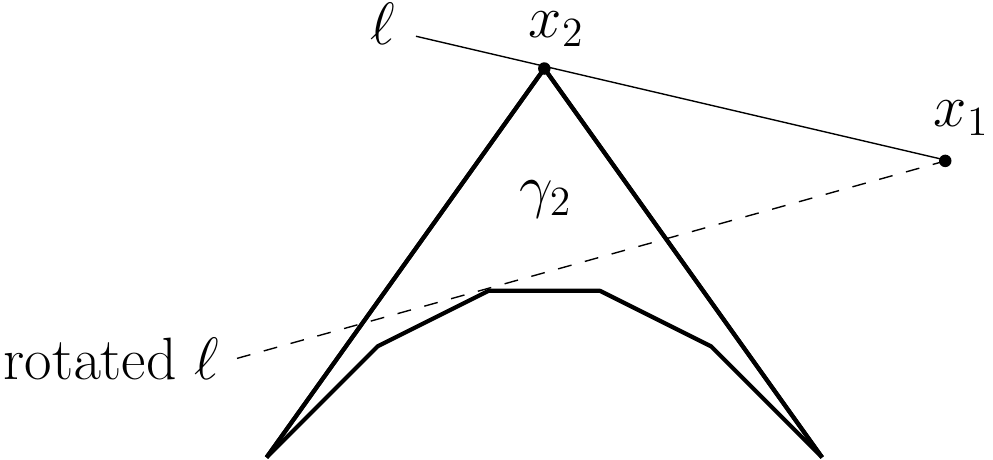}}
		\caption{Rotate $\ell$ to obtain a tangent to $\tail(\gamma_2)$ that intersects the interior of $\gamma_2$.}
		\label{fg:tangent1}
	\end{figure}
	
	Case 1.1: $x_1 \in \mathit{pivots}[i_2,j_2]$.  We have shown earlier that $x_1 \not\in \tail(\beta_2)$.  By Lemma~\ref{lem:suboutreg}(iii), $\ell$ cannot contain any head side of $\gamma_2$.  Figure~\ref{fg:tangent1} shows a possible configuration.  One can turn $\ell$ around $x_1$ in a direction to intersect the interior of $\gamma_2$.  Rotate $\ell$ around $x_1$ further in this direction to obtain a tangent from $x_1$ to $\tail(\gamma_2)$ that intersects the interior of $\gamma_2$.  This tangent belongs to $\mathit{tangents}[i_2,j_2]$ because $\tail(\gamma_2)$ is a contiguous subsequence of $\tail(\beta_2)$ by Lemma~\ref{lem:suboutreg}(i).  But then this tangent should have caused $\gamma_2$ to be trimmed further in Case~1 or~2 on page~\arabic{pagecount}, a contradiction.
	
	\vspace{4pt}
	
	Case 1.2: $x_1 \not\in \mathit{pivots}[i_2,j_2]$.  Then, there is a line $\ell' \in \mathit{tangents}[i_2,j_2]$ and a subfin $\beta' \in \mathit{subfins}[i_2,j_2]$ such that $x_1$ is an interior vertex of $\tail(\beta')$, $x_1 \in \ell' \cap \tail(\beta')$, and $x_1$ is added to some augmented seed fin boundary when we trim $\beta'$ with $\ell'$.\footnote{The trimming caused by $\ell'$ may add vertices that do not lie on $\ell'$, including $w_1$ and $w_2$ in Cases 1 and 2 on page \arabic{pagecount}.  The vertices $w_1$ and $w_2$ are vertices of $S_{1,1}$, so $x_1$ is not one of them.}  We have shown before that $x_1 \not\in \tail(\beta_2)$.  It follows that $\beta' \not= \beta_2$ and so $\beta'$ and $\beta_2$ are disjoint.  There are two subcases.
	
	\vspace{4pt}
	
	\begin{figure}
		\centering
		\begin{tabular}{ccc}
			\includegraphics[scale=0.55]{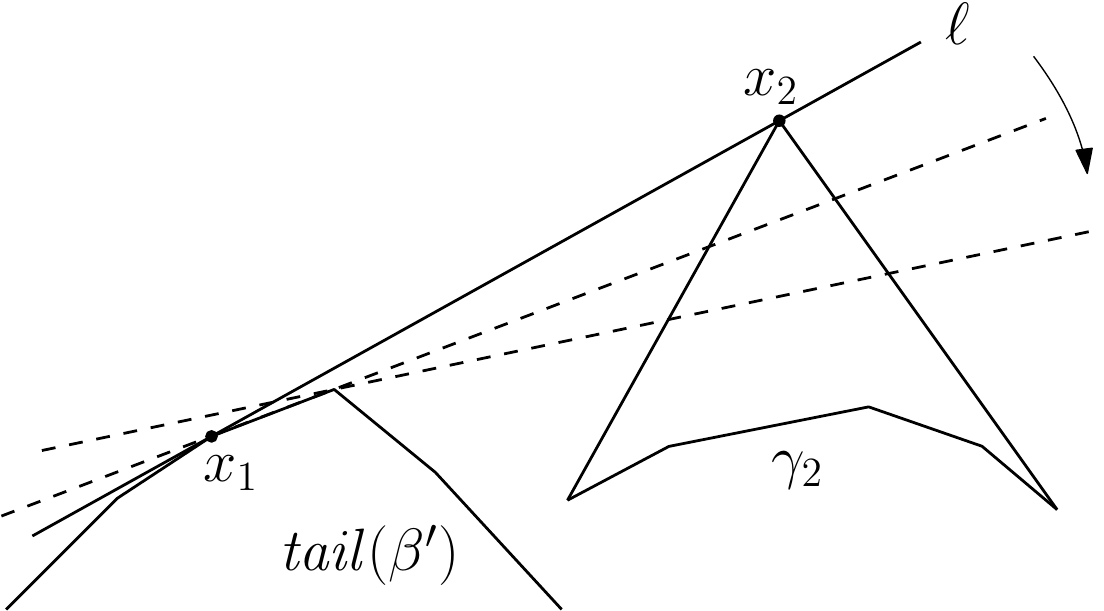} & & 
			\includegraphics[scale=0.55]{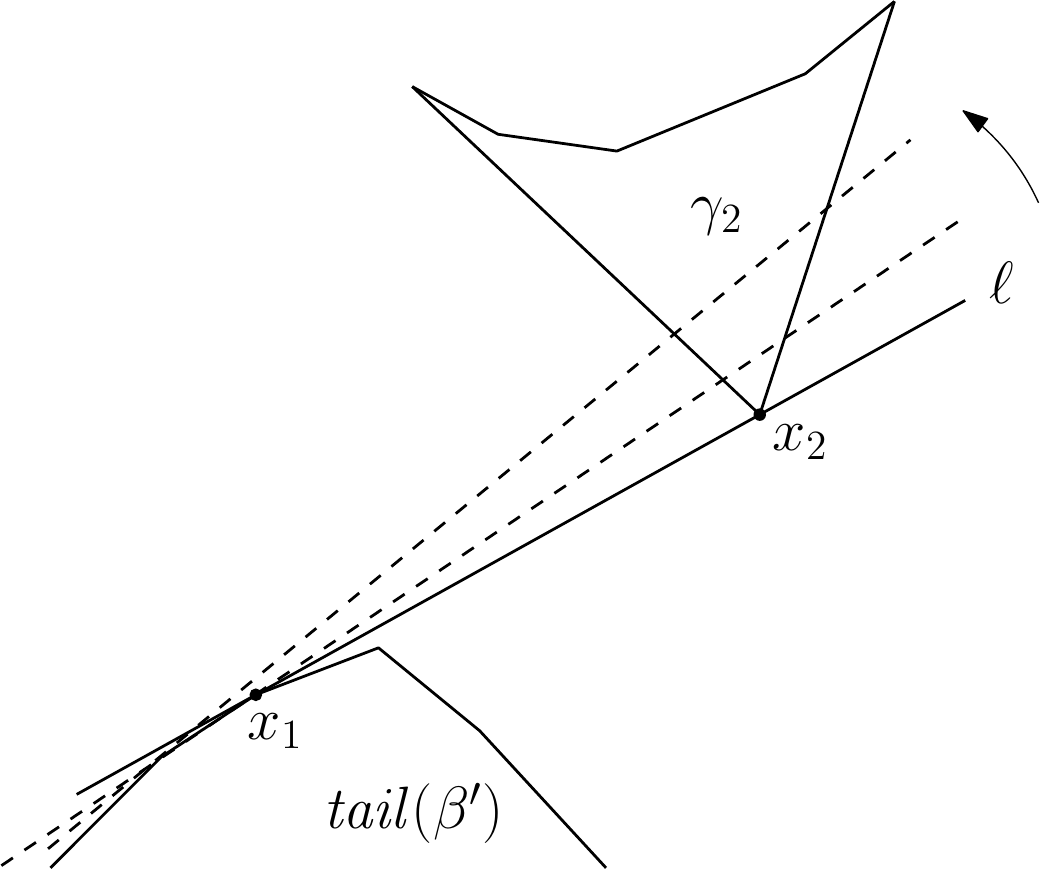} \\ \\
			(a) & \hspace{.15in} & (b) \\ \\
			\includegraphics[scale=0.55]{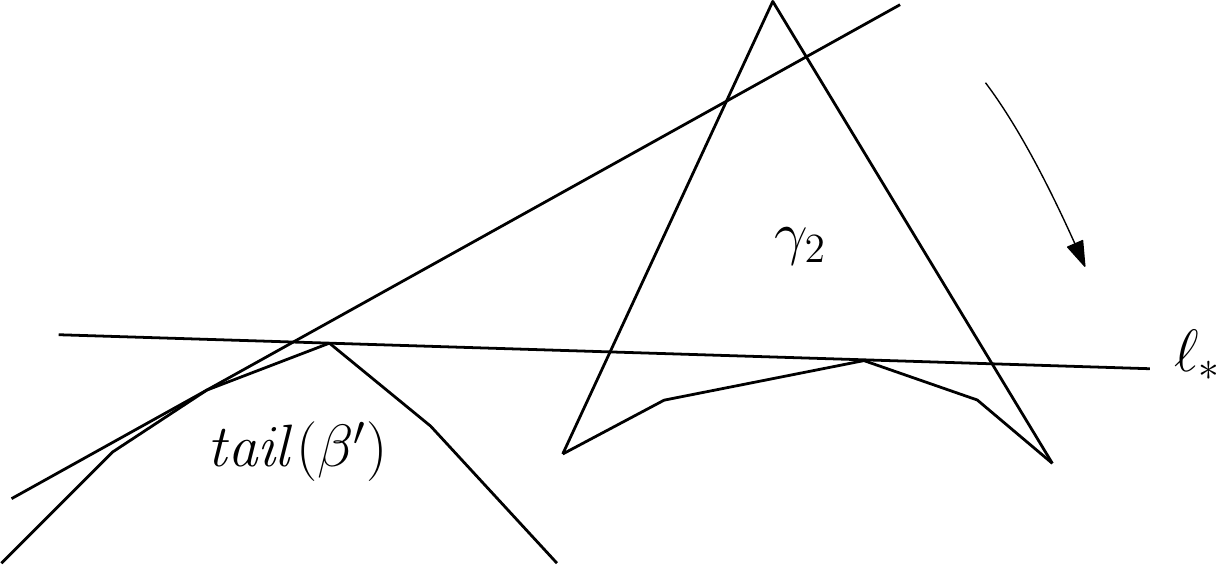} & & 
			\includegraphics[scale=0.55]{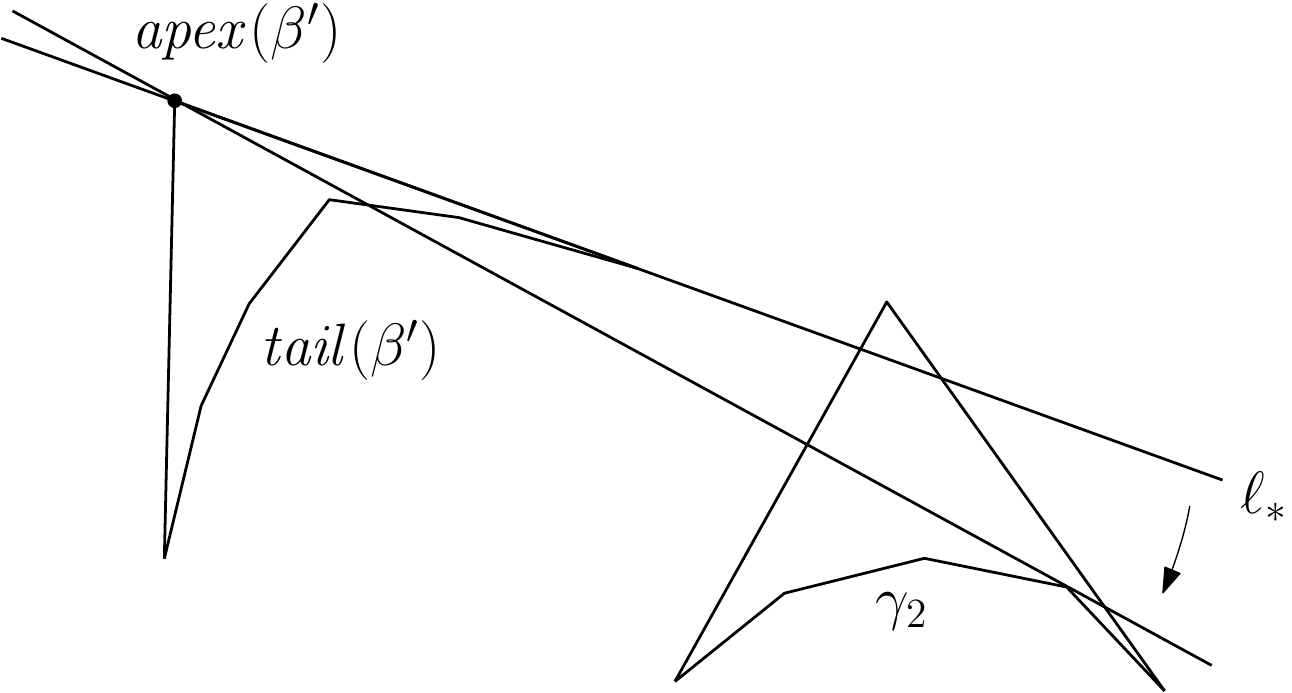} \\ \\
			(c) & \hspace{.15in} & (d) \\
		\end{tabular}
		\caption{(a) \& (b). Rotate $\ell$ around $\tail(\beta')$ so that $\ell$ moves closer to $\tail(\gamma_2)$.  (c) Rotation stops with a line $\ell_*$ tangent to $\tail(\gamma_2)$.  (d) Rotation stops with a line $\ell_*$ containing a head side of $\beta'$.  Then, rotate $\ell_*$ around $\apex(\beta')$ to obtain a tangent to $\tail(\gamma_2)$ that intersects the interior of $\gamma_2$ or contains a head side of $\gamma_2$.}
		\label{fg:case1-2}	
	\end{figure}
	
	Case~1.2.1: $\ell$ is tangent to $\tail(\beta')$ at $x_1$.   Figures~\ref{fg:case1-2}(a) and~(b) show two possible configurations.  We rotate $\ell$ around $\tail(\beta')$ so that $\ell$ moves closer to $\tail(\gamma_2)$ as the rotation proceeds.  Such a rotation is possible because $\beta'$ and $\beta_2 \supseteq \gamma_2$ are disjoint.  The rotated copy of $\ell$ remains tangent to $\tail(\beta')$ throughout the rotation.  We stop the rotation of $\ell$ around $\tail(\beta')$ when the rotated copy $\ell_*$ of $\ell$ is tangent to $\tail(\gamma_2)$ or contains a head side of $\beta'$, whichever happens first.  
	
	If $\ell_*$ is tangent to $\tail(\gamma_2)$, then $\ell_*$ is also tangent to $\tail(\beta_2)$ as $\ell_*$ either intersects the interior of $\gamma_2$ or contains a head side of $\gamma_2$.  Thus, $\ell_* \in \mathit{tangents}[i_2,j_2]$.  But then $\ell_*$ should have caused $\gamma_2$ to be trimmed further by an application of Case~1 or~2 on page~\arabic{pagecount}, a contradiction.  Figure~\ref{fg:case1-2}(c) gives an illustration.
	
	Suppose that $\ell_*$ contains a head side of $\beta'$.  Then, $\ell_*$ passes through $\apex(\beta')$ which is a vertex in $\mathit{pivots}[i_2,j_2] \setminus \tail(\beta_2)$.  We can turn $\ell_*$ around $\apex(\beta')$ to obtain a tangent to $\tail(\gamma_2)$ that intersects the interior of $\gamma_2$ or contains a head side of $\gamma_2$.  This tangent belongs to $\mathit{tangents}[i_2,j_2]$.  This leads to the contradiction that $\gamma_2$ should have been trimmed further.  Figure~\ref{fg:case1-2}(d) gives an illustration.
	
	\vspace{4pt}
	
	\newpage
	
	Case~1.2.2: $\ell$ is not tangent to $\tail(\beta')$ at $x_1$.  Figures~\ref{fg:case1-2-2}(a) and (b) show the two possible configurations depending on whether $x_1x_2$ lies on the same side of $\tail(\beta')$ locally at $x_1$ as $\beta'$.  This case distinction makes sense because the tail endpoints of $\beta'$ belong to $\mathit{pivots}[i_2,j_2]$ by definition, implying that $x_1$ is an interior vertex of $\tail(\beta')$.  Note that $x_1 \in \partial r_{i_2-1}$.  
	
	\begin{figure}
		\centering
		\begin{tabular}{ccc}
			\includegraphics[scale=0.55]{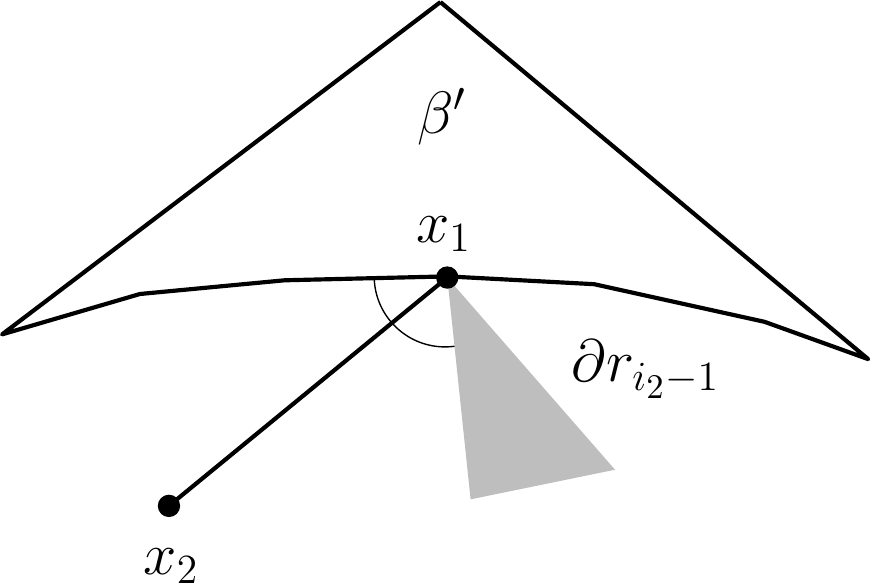} & & 
			\includegraphics[scale=0.55]{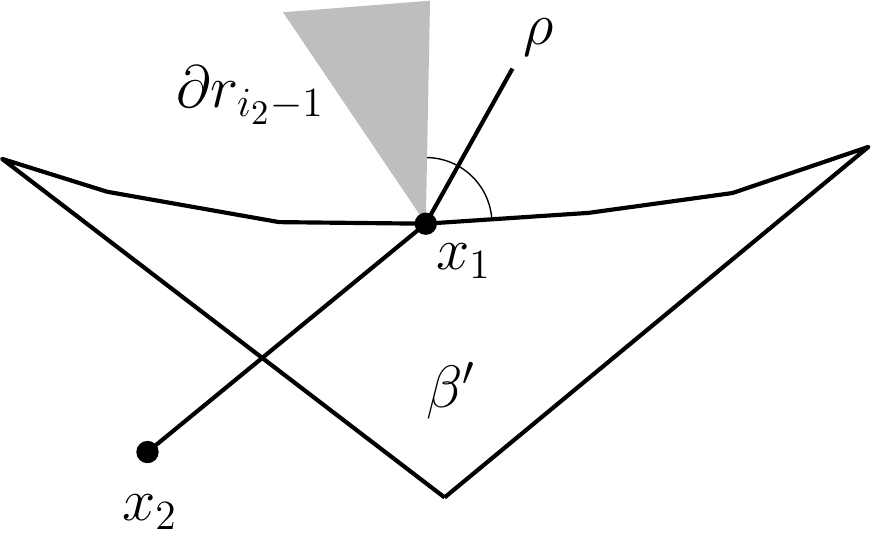} \\ \\
			(a) & \hspace{.15in} & (b) \\ \\
		\end{tabular}
		\caption{The shaded region is part of the exterior of $r_{i_2-1}$.  The angle at $x_1$ stabbed by $x_1x_2$ or $\rho$ is less than $\pi$.  Hence, $x_1$ is a convex vertex of a geodesic triangle selected in buidling $\mathit{seed\_fins}[i_2,j_2]$.  This makes $x_1$ a tail endpoint in $\mathit{seed\_fins}[i_2,j_2]$.}
		\label{fg:case1-2-2}	
	\end{figure}

	In Figure~\ref{fg:case1-2-2}(a), $x_1x_2$ and $\partial r_{i_2-1}$ lie locally at $x_1$ on the same side of $\tail(\beta')$ because $\beta'$ is a subset of $r_{i_2-1}$.  As $\tail(\beta')$ is reflex, the angle inside $\beta'$ at $x_1$ is at least $\pi$.  It follows that the angle at $x_1$ stabbed by $x_1x_2$ is less than $\pi$.  This angle at $x_1$ or a part of it must belong to some geodesic triangle $\tau$ selected in building $r_{i_2}$ because $x_1x_2 \subseteq r_i \subseteq r_{i_2}$.  Observe that $x_1$ is a convex vertex of $\tau$.  
	By Lemma~\ref{lem:seed-fins},
	the convex vertices of $\tau$ are tail endpoints of seed fins.  As a result, $x_1 \in \mathit{pivots}[i_2,j_2] \setminus \tail(\beta_2)$.  (We have shown that $x_1 \not\in \tail(\beta_2)$.)  But then we can turn $\ell$ around $x_1$ to obtain a tangent to $\tail(\gamma_2)$ that trims $\gamma_2$ as in Case~1.1, a contradiction.
	
	In Figure~\ref{fg:case1-2-2}(b), $x_1x_2$ and $\partial r_{i_2-1}$ lie locally at $x_1$ on opposite sides of $\tail(\beta')$.  Recall that $x_1x_2$ is an edge on the shortest path $\rho$ in $r_i$ that connects two vertices of $S_{1,1}$ in $\partial r_i$.  When $\rho$ goes from $x_2$ to $x_1$ and then onwards, $\rho$ cannot bounce back from $\tail(\beta')$ at $x_1$.  Otherwise, $\rho$ can be shortened locally at $x_1$, contradicting that $\rho$ is a shortest path.  Thus, $\rho$ must cross $\tail(\beta')$ and stab an angle at $x_1$ that is less than $\pi$.  This angle at $x_1$ or a part of it must belong to some geodesic triangle $\tau$ selected in building $r_{i_2}$ because $\rho \subseteq r_i \subseteq r_{i_2}$.  So $x_1$ is a convex vertex of $\tau$.  Then, we can analyze as in the previous paragraph to obtain a contradiction.

	\vspace{4pt}
	
	Case 2: $i_1 = i_2$.  It means that $\beta_1$, $\beta_2$, $\gamma_1$, and $\gamma_2$ are created at the same level $i_2$.  So $\gamma_1$ and $\gamma_2$ are disjoint.  Recall that $\gamma_s$ lies on one side of $\ell$ for $s \in [1,2]$.  
	
	\vspace{4pt}
	
	Case 2.1: $\gamma_1$ and $\gamma_2$ lie on the same side of $\ell$.  We have shown earlier that $x_1 \not\in \tail(\beta_2)$.  Then, by Lemma~\ref{lem:suboutreg}(iii), $\ell$ cannot contain a head side of $\gamma_1$ and another head side of $\gamma_2$.  Without loss of generality, assume that $\ell$ does not contain a head size of $\gamma_2$.  Figure~\ref{fg:case2-1} shows two possible configurations.  We translate $\ell$ towards $\tail(\gamma_1)$ and $\tail(\gamma_2)$ until the translated line is tangent to $\tail(\gamma_1)$ or $\tail(\gamma_2)$ whichever happens first.  Refer to Figure~\ref{fg:case2}(a).  Without loss of generality, assume that we obtain a translated copy $\ell_*$ of $\ell$ that is tangent to $\tail(\gamma_1)$.  Observe that $\ell_*$ is also tangent to $\tail(\beta_1)$.   Then, we can rotate $\ell_*$ around $\tail(\beta_1)$ as illustrated in Figures~\ref{fg:case1-2}(c)~and~(d).  (Substitute $\beta'$ with $\beta_1$ in the figures.)  This gives the contradiction that $\gamma_2$ should have been trimmed further.

	\begin{figure}
		\centering
		\begin{tabular}{ccc}
			\includegraphics[scale=0.5]{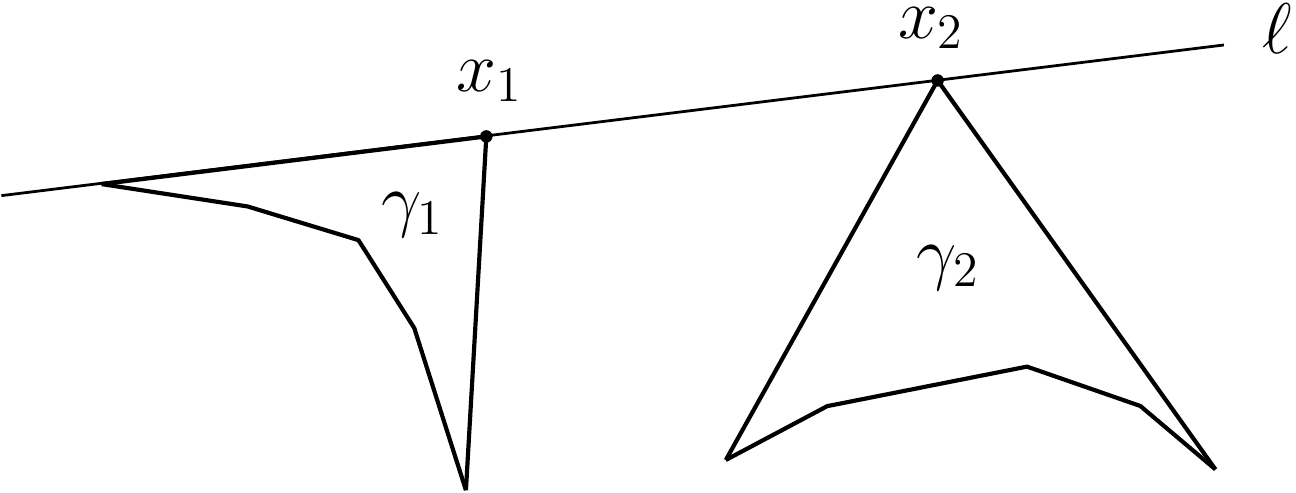} & & 
			\includegraphics[scale=0.5]{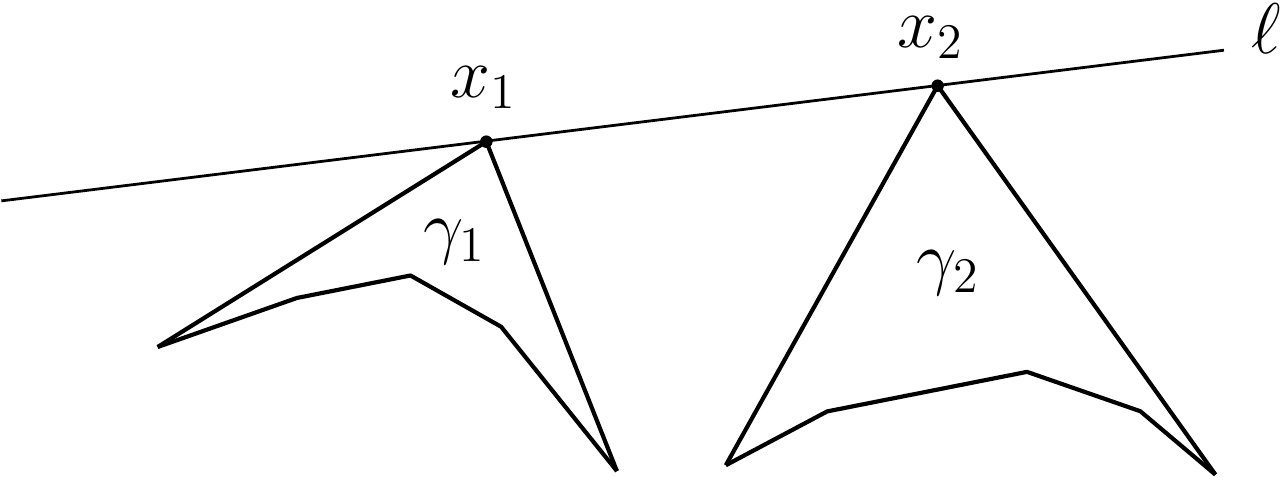} \\ \\
			(a) & \hspace{.2in} & (b) \\
		\end{tabular}
		\caption{The line $\ell$ through $x_1$ and $x_2$ does not contain any head side of $\gamma_2$, and $\gamma_1$ and $\gamma_2$ lie on the same side of $\ell$.}
		\label{fg:case2-1}	
	\end{figure}
	
	\begin{figure}
		\centering
		\begin{tabular}{ccc}
			\includegraphics[scale=0.55]{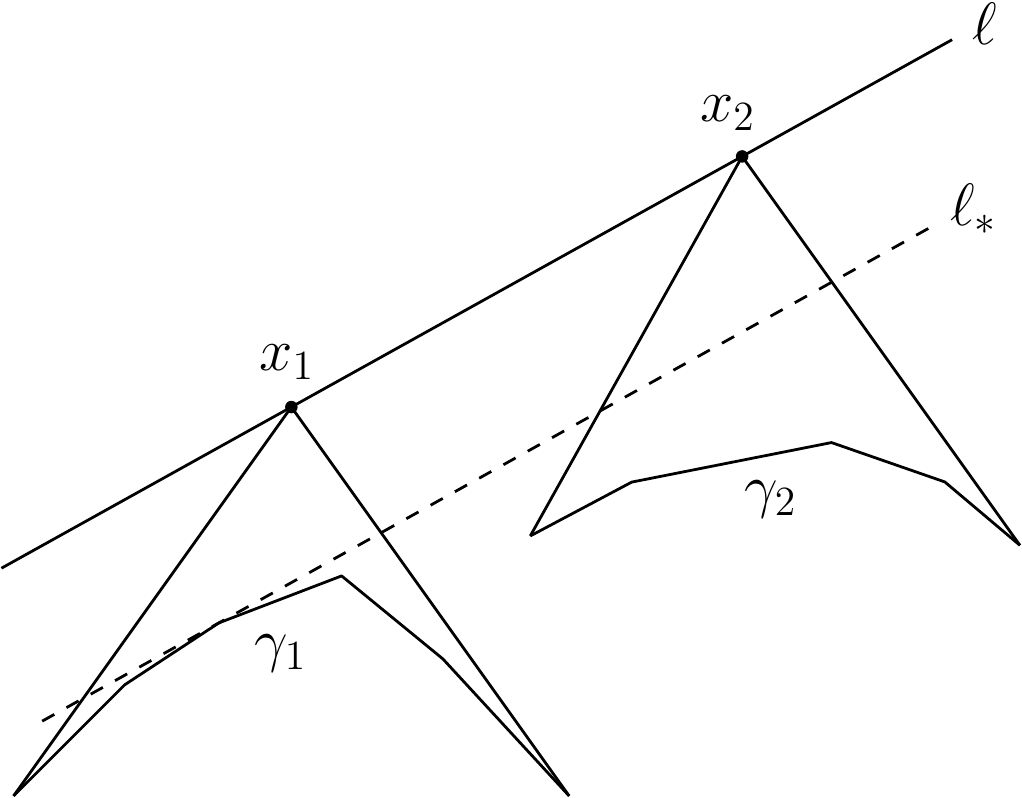} & & 
			\includegraphics[scale=0.6]{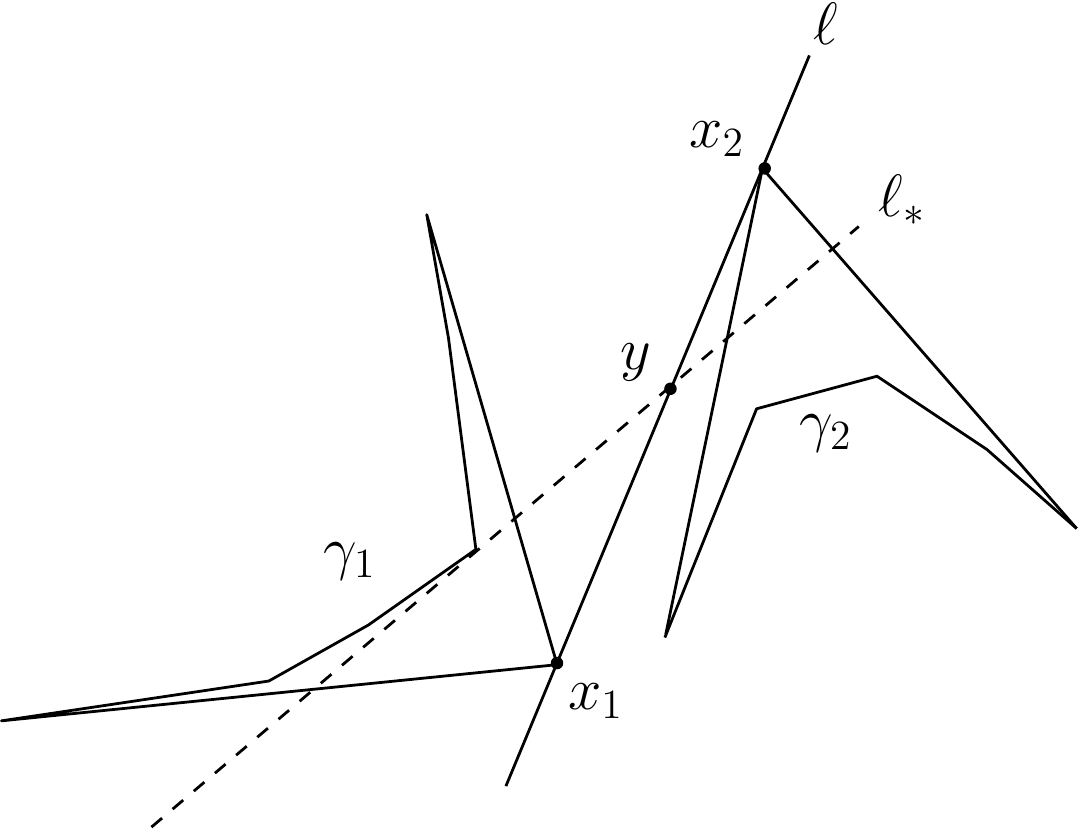} \\ \\
			(a) & \hspace{.15in} & (b) \\ \\
		\end{tabular}
		\caption{(a) Translate $\ell$ to obtain a tangent $\ell_*$ to $\tail(\gamma_1)$.  (b) Rotate $\ell$ around $y$ to obtain a tangent $\ell_*$ to $\tail(\gamma_1)$.}
		\label{fg:case2}	
	\end{figure}
	
	\vspace{4pt}
	
	Case 2.2: $\gamma_1$ and $\gamma_2$ lie on opposite sides of $\ell$.  Since $\gamma_1$ and $\gamma_2$ are disjoint, we can rotate $\ell$ around a point $y \in \ell$ outside $\gamma_1$ and $\gamma_2$ until the rotated copy $\ell_*$ of $\ell$ is tangent to $\tail(\gamma_1)$ and $\tail(\gamma_2)$, whichever happens first, say $\tail(\gamma_1)$.  Refer to Figure~\ref{fg:case2}(b).  Then, we can rotate $\ell_*$ around $\tail(\beta_1)$ in a way similar to those illustrated in Figures~\ref{fg:case1-2}(c)~and~(d).  (Substitute $\beta'$ with $\beta_1$ in the figures.)  This gives the contradiction that $\gamma_2$ should have been trimmed further.
\end{proof}

It is easier to show that Q2 and Q3 are preserved.

\begin{lemma}\label{lem:R1}
	$R_{i,j}$ satisfies Q2 and Q3 in Table~\ref{tb:Q}.
\end{lemma}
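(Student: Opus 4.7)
The plan is to induct on the layer index $i$, taking $R_{1,1}$ as the base case (where the regions are the bounded regions of the connected subdivision $S_{1,1}$, pairwise edge-disjoint simple polygons, so Q2 and Q3 hold trivially). The meat of the argument is the inductive step for $R_{i,j}$ with $i \geq 2$, reasoning about the construction via $U_r$ and the pieces $C_r$.

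For Q3, I would first argue that for each selected $\tau \in {\cal T}$, the region $\shrink(\tau) = \st(\tau) \cup \bigcup_\alpha \mathit{augment}(\alpha)$ is a simple polygon: $\st(\tau)$ is a simply connected subset of $\tau$; each augmented seed fin lies entirely in one of the three disjoint fins of $\tau$ and meets $\st(\tau)$ only along a portion of the two head sides of that fin at $\apex(\alpha)$, so the union cannot produce any enclosed hole inside $\tau$. Next I would show that for every $r \in S_{1,1}$ containing some $\tau \in {\cal T}$, the geodesic triangles in ${\cal T}$ that lie in $r$ form an upward-closed set in $G_r$, because $\tilde{X}_0$ is constructed by closing the set of marked-triangle-containing geodesic triangles under the ancestor relation. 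Such an upward-closed family has simply connected union inside $r$: each unselected geodesic triangle sits in a boundary pocket of $r$, so the complement within $r$ of the selected $\tau$'s touches $\partial r$ and is never enclosed. Thus $U_r$ is simply connected apart from possible pinch-point articulations, and Step~1 of the construction severs $U_r$ precisely at those pinched vertices, yielding pieces of $C_r$ that are simple polygons. Steps~3 and~4 only insert extra vertices on existing edges without changing the underlying point set of any region, so the final members of $R_{i,j}$ remain simple polygons.

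For Q2, suppose $r_1, r_2 \in R_{i,j}$ both lie in the same region $r \in S_{1,1}$. They must come from distinct pieces of $C_r$. By the cutting in Step~1, two distinct pieces of $C_r$ can meet only at the pinched vertices at which $U_r$ was severed; they share no edge. Steps~3 and~4 insert vertices on existing edges but cannot turn two edge-disjoint pieces into edge-sharing ones, so no two regions of $R_{i,j}$ sharing the same enclosing $S_{1,1}$-region share an edge.

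The main obstacle is the topological claim that $U_r$ has no genuine hole (only pinch-point articulations), since otherwise a connected component of $C_r$ could fail to be a simple polygon. The key step is to combine the upward-closedness of ${\cal T}$ within $G_r$ with the observation that every augmented seed fin spans from the apex of its enclosing $\tau$-fin down to tail edges incident to $\partial\tau$, so any unfilled complement of a $\shrink(\tau)$ opens to $\partial \tau$, which in turn either touches $\partial r$ or borders an unselected geodesic triangle lying in a boundary pocket of $r$. Working this out carefully (together with Lemma~\ref{lem:augmented} characterising tail edges of augmented seed fins) is the delicate part; the rest of the argument is largely bookkeeping about the cutting step.
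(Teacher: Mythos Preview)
Your argument for Q2 is essentially the paper's: two pieces of $C_r$ arise from cutting $U_r$ at pinched vertices, so distinct pieces can share only those vertices and never an edge (equivalently, had they shared an edge they would not have been separated in Step~1).

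For Q3, however, the paper takes a completely different and much shorter route that sidesteps the ``delicate part'' you single out. Instead of arguing directly that $U_r$ is simply connected --- which, as you note, requires tracking how each unfilled portion $\tau \setminus \shrink(\tau)$ opens to $\partial\tau$, how adjacent $\shrink(\tau)$'s glue, and how unselected geodesic triangles sit in boundary pockets of $r$ --- the paper argues by contradiction using Lemma~\ref{lem:pre-edge} (the already-established fact that every boundary edge of $R_{i,j}$ is incident to a vertex of $S_{1,1}$). If some region $r' \in R_{i,j}$ had a hole, the hole boundary would be a closed polygonal curve lying strictly inside the region $r \in S_{1,1}$ that contains $r'$; but each edge of that curve is a boundary edge of $R_{i,j}$ and hence incident to a vertex of $S_{1,1}$, forcing a vertex of $S_{1,1}$ into the interior of $r$ --- impossible since $r$ is a simple polygon. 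Your direct topological approach looks completable, but it is considerably more work and its hardest step is left as a sketch; the paper's trick of leveraging Q1 (via Lemma~\ref{lem:pre-edge}) turns Q3 into a two-line argument and explains why Q1 is established first.
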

\begin{proof}
	By construction, for every pair of regions in $R_{i,j}$ that lie in the same region in $S_{1,1}$, they cannot share any edge because they would have been merged otherwise.  It follows that Q2 is satisfied.  

	Consider Q3.  Assume to the contrary that some region $r'$ in $R_{i,j}$ contains a hole.  Let $r$ be the region in $S_{1,1}$ that contains $r'$.  So the hole in question lies inside $r$.  By Lemma~\ref{lem:pre-edge}, every edge of $R_{i,j}$ is incident to some vertex of $S_{1,1}$.  It means that the hole boundary of $r'$ contains a vertex of $S_{1,1}$ that lies strictly inside $r$, a contradiction to the fact that $r$ is a simple polygon.
\end{proof}

\cancel{

Before analyzing $g(n,i)$ in C4 in the next section, we prove two more properties of $R_{i,j}$ and $S{i,j}$ that will be useful later.

\begin{lemma}
\label{lem:tech1}
	Suppose that $R_{k,l}$ is constructed from $R_{k-1,l'}$.  Every region in $R_{k,l}$ is a subset of some region in $R_{k-1,l'}$.
\end{lemma}
\begin{proof}
	The correctness of (i) follows directly from our construction method.  Consider (ii).  For any $k$, every vertex of $S_{k,*}$ is inherited from $S_{k-1,*}$ or created during the construction of $S_{k,*}$.  Therefore, if $v$ is a vertex of $S_{k',*}$ but $v$ is not a vertex of $S_{k,*}$ for some $k > k'$, then $v$ cannnot be inherited to be a vertex of $S_{k'',*}$ for any $k'' > k$.
\end{proof}

}

\subsection{Meeting the conditions in Table~\ref{tb:C}}
\label{sec:conditions}

We show that C1--C5 in Table~\ref{tb:C} are satisfied by the newly constructed structures at level $i$.  Then, C1--C5 hold inductively, and we can apply Theorem~\ref{thm:iacono}.


Condition C1 is met by the definitions of $c_1 = 31$ and $f(k) = (\log_2 k)^{31}$.  Since $c_2 = 25$, the complexity of $R_{i,j}$ is $O((\log n_{i-1})^{30})$ by Lemma~\ref{lem:Delta}.  This is $O(n_{i})$, where $n_{i} = f(n_{i-1})/\log_2 n_{i-1}$.  So C2 is satisfied.  By Lemma~\ref{lem:Delta}, the processing time for constructing structures of any version at the $i$-th level is $O((\log n_{i-1})^{31}) = O(n_{i} \log n_{i-1})$.  So C3 is met.  Since $g(n,i) = i\log_2^3 n_i$, condition C5 is satisfied because $f(n) = (\log_2 n)^{31} > g(n,1)(\log_2 n)^{27} = (\log_2 n)^{30}$.  

The final task is to prove that C4 in Table~\ref{tb:C} holds for $g(n,i) =  i \log_2^3 n_i$.  We introduce the concept of a segment \emph{slashing} a trimmed subfin: a segment $e$ \emph{slashes} a trimmed subfin $\gamma$ if $e$ intersects both head sides of $\gamma$ and separates $\apex(\gamma)$ from $\tail(\gamma)$.  Figure~\ref{fg:slash} shows some positive and negative examples.  We show that a segment inside a region of $S_{1,1}$ slashes at most one trimmed subfin in any version of structures at any level.

\begin{figure}
	\centerline{\includegraphics[scale=0.6]{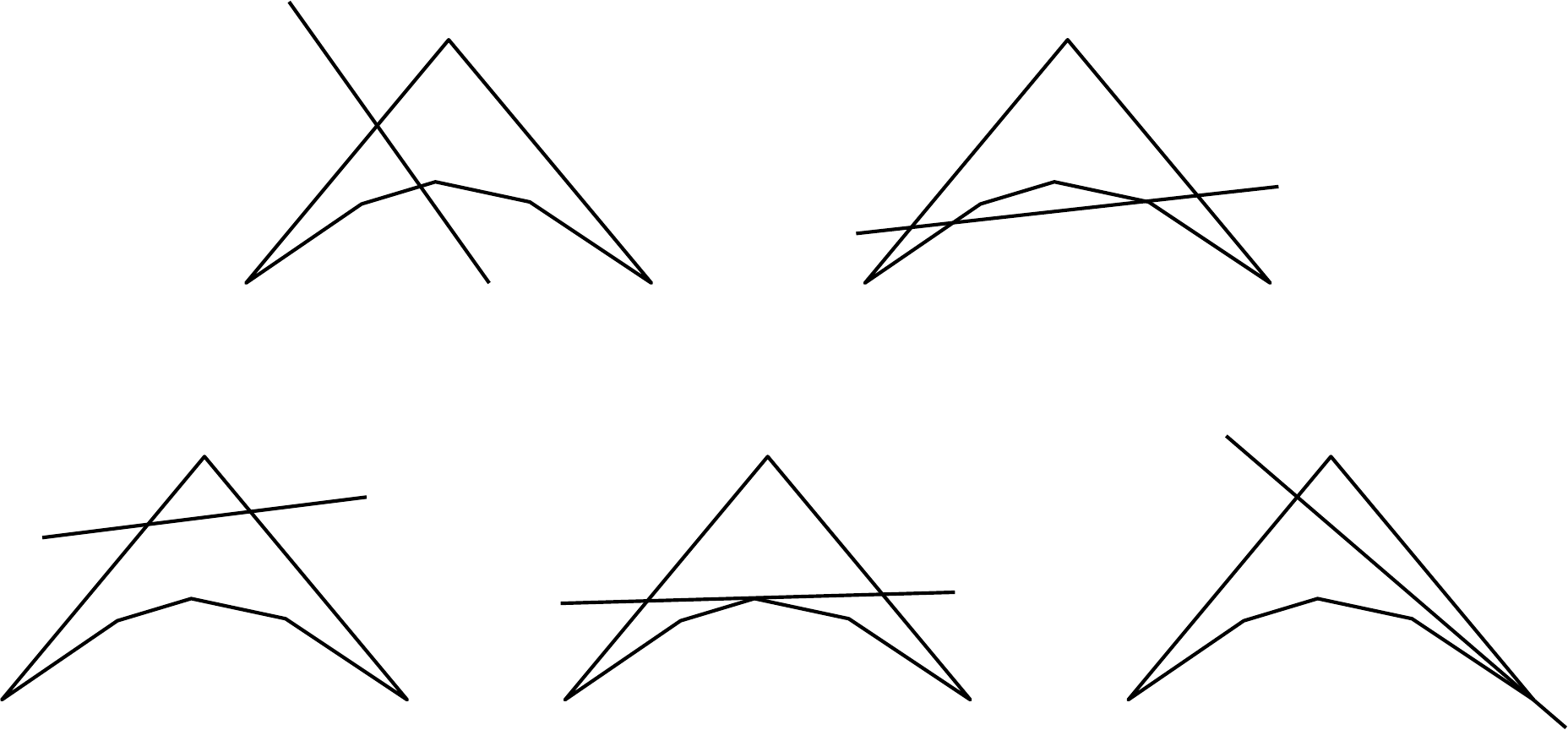}}
	\caption{The upper row shows some trimmed subfins that are not slashed by the segments shown.  The lower row shows some trimmed subfins that are slashed by the segments shown.  In the rightmost figure in the lower row, the segment does not cross the tail because the right head side is tangent to the tail.}
	\label{fg:slash}
\end{figure}

\cancel{

\begin{lemma}
	\label{lem:tech2}
	Take any $\beta \in \mathit{subfins}[i+1]$ and any $\gamma \in \mathit{trimmed\_subfins}(\beta)$.  For every line $\ell$, if $\ell$ bounds an open halfplane that contains the interior of $\tail(\gamma)$ but not $\apex(\gamma)$, then $\ell$ does not intersect $\tail(\beta)$ transversally.
\end{lemma}
\begin{proof}
	Let $x$ and $y$ be the endpoints of $\tail(\gamma)$.  By the construction of trimmed subfins, the rays from $v$ through $x$ and $y$ bound a cone $C$ that contains $\tail(\beta) \supseteq \tail(\gamma)$.
	
	Let $\ell$ be a line that satisfies the conditions of the lemma.  Assume to the contrary that $\ell$ intersects $\tail(\beta)$ transversally at a point $z$.  So $z$ is a point in the interior of $\tail(\beta)$.  Clearly, $z$ cannot be in the interior of $\tail(\gamma)$.  Also, $z$ cannot be an endpoint of $\tail(\gamma)$.  Otherwise, $\ell$ would be tangent to $\tail(\beta)$ at $z$ in order that $\ell$ bounds an open halfplane that contains the interior of $\tail(\gamma)$ but not $\apex(\gamma)$, a contradiction to the transversal intersection between $\ell$ and $\tail(\beta)$ at $z$.
	
	We conclude that that $z$ lies in the interior of the cone $C$.  It implies that the rays from $z$ through $x$ and $y$ bound a cone $C'$ that contains $\tail(\gamma)$ and $v$.  Therefore, any line through $z$ that avoids the interior of $\tail(\gamma)$ must avoid the interior of $C'$ as well.  So must the line $\ell$.  But then $\ell$ bounds an open halfplane that contains both $\apex(\gamma)$ and the interior of $\tail(\gamma)$, a contradiction to the conditions on $\ell$.
\end{proof}
}

\begin{lemma}
	\label{lem:slash}
	For any level $k$ and any version index $l$, a line segment that lies inside a region of $S_{1,1}$ slashes at most one trimmed subfin in $\bigcup_{\beta \in \mathit{subfins}[k,l]} \mathit{trimmed\_subfins}(\beta)$.
\end{lemma}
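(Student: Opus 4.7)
The plan is to argue by contradiction. Suppose a segment $e$ in a region $r$ of $S_{1,1}$ slashes two distinct trimmed subfins $\gamma_1, \gamma_2$ with $\gamma_s \in \mathit{trimmed\_subfins}(\beta_s)$ for subfins $\beta_1, \beta_2 \in \mathit{subfins}[k,l]$. Let $\ell$ be the support line of $e$. Because each $\beta_s$ is contained in a fin $\alpha_0^{(s)} \in \hat{\Delta}_{k-1,*}$ which lies in a single region of $S_{1,1}$, and because $e \subseteq r$ meets the interior of $\gamma_s$, both $\gamma_1$ and $\gamma_2$ lie inside $r$. The slashing condition places $\apex(\gamma_1), \apex(\gamma_2)$ in one open half-plane $H^+$ of $\ell$ and $\tail(\gamma_1), \tail(\gamma_2)$ in the closed complementary half-plane $H^-$, where the reflexivity of the tails prevents them from looping back into $H^+$.

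If $\beta_1 \neq \beta_2$, the plan is to construct a common tangent line $\ell^*$ to $\conv(\tail(\beta_1))$ and $\conv(\tail(\beta_2))$ that still separates both apexes from both tails, by a sweep argument analogous to those in the proof of Lemma~\ref{lem:edgeTouchS} (translating and rotating $\ell$ toward $H^-$ while maintaining the apex separation, as illustrated in Figures~\ref{fg:case1-2} and~\ref{fg:case2}). By Step~2 of the construction in Section~\ref{sec:augment-seed-fins}, this $\ell^*$ belongs to $\mathit{tangents}[k,l]$. I would then argue that the wedge extracted by applying Case~1 or~2 of the trimming procedure to $\ell^*$ (anchored at the tangency with $\tail(\beta_1)$) intrudes into the triangular cap of $\gamma_1$ cut off by $\ell$, which contradicts the fact that $\gamma_1$ is already a finished trimmed subfin.

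If $\beta_1 = \beta_2 = \beta$, then $\tail(\gamma_1)$ and $\tail(\gamma_2)$ are disjoint contiguous subchains of the reflex chain $\tail(\beta)$, separated by at least one extracted wedge containing $\apex(\beta)$. Starting from $\ell$, I would rotate/translate it toward $H^-$ while anchoring at a pivot in $\mathit{pivots}[k,l]$ (for instance a tail endpoint of $\beta$ or the apex of some other subfin) to produce a tangent $\ell^* \in \mathit{tangents}[k,l]$ to $\tail(\beta)$ at a point inside $\tail(\gamma_1) \cup \tail(\gamma_2)$; this $\ell^*$ would then force $\gamma_1$ or $\gamma_2$ to have been trimmed further, contradicting Lemma~\ref{lem:suboutreg}(iii). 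The main obstacle is precisely this same-subfin case: the line $\ell$ can cross the wedge between $\gamma_1$ and $\gamma_2$ without being tangent to anything useful, and the delicate part is choosing the rotation/translation of $\ell$ so that the resulting $\ell^*$ is witnessed as a tangent arising from a vertex in $\mathit{pivots}[k,l]$ lying outside $\tail(\beta)$ and therefore genuinely lies in $\mathit{tangents}[k,l]$.
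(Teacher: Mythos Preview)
Your plan for the case $\beta_1\neq\beta_2$ is essentially what the paper does: once the two subfins are distinct (hence disjoint), the paper splits on whether $\tail(\gamma_1)$ and $\tail(\gamma_2)$ lie on the same side of $\ell$ or on opposite sides, and then reuses verbatim the translation/rotation arguments of Cases~2.1 and~2.2 in the proof of Lemma~\ref{lem:edgeTouchS} (the ones illustrated in Figures~\ref{fg:case2} and~\ref{fg:slash-2}) to manufacture a line in $\mathit{tangents}[k,l]$ that would have trimmed $\gamma_2$ further. One correction to your setup: your claim that the slashing condition forces both apexes into a single half-plane $H^+$ is unjustified. Slashing $\gamma_s$ only puts $\apex(\gamma_s)$ and $\tail(\gamma_s)$ on opposite sides of $\ell$; there is no reason the two apexes must land on the \emph{same} side, and the paper explicitly treats both configurations.

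The real gap is that you have inverted the difficulty of the two cases. The paper dispatches $\beta_1=\beta_2$ in a single sentence (``our construction method guarantees that no segment can slash two trimmed subfins in $\mathit{trimmed\_subfins}(\beta)$''), while you flag it as the main obstacle and propose a tangent-finding argument that, as you yourself note, does not obviously produce a line in $\mathit{tangents}[k,l]$. The point you are missing is purely about the geometry of a single subfin: every wedge placed in $\beta$ is cut out by two rays whose supporting lines are supporting lines of the reflex chain $\tail(\beta)$ (the rays run along tail edges in Case~1.1/2 and are infinitesimal rotations of tail edges in Case~1.2), so the lower boundary of $\bigcup\mathit{wedges}(\beta)$ is itself a reflex chain. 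A straight segment can cross such a chain at most twice, hence it can enter and leave at most one trimmed subfin through head sides while keeping the tail on one side. No appeal to $\mathit{pivots}[k,l]$ or $\mathit{tangents}[k,l]$ is needed here; it is a direct consequence of how the wedges are carved.
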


\begin{proof}
	Suppose for the sake of contradiction that some segment $e$ slashes two trimmed subfins $\gamma_1$ and $\gamma_2$, where $\gamma_s \in \mathit{trimmed\_subfins}(\beta_s)$ for some $\beta_s \in \mathit{subfins}[k,l]$ for $s \in [1,2]$.  It must be the case that $\beta_1 \not= \beta_2$ because our construction method guarantees that no segment can slash two trimmed subfins in $\mathit{trimmed\_subfins}(\beta)$ for any subfin $\beta$. 
	
	Let $\ell$ denote the support line of $e$.  Since $e$ slashes $\gamma_1$ and $\gamma_2$, $\tail(\gamma_1)$ and $\tail(\gamma_2)$ lie on the same side or opposite sides of $\ell$.  The trimmed subfins $\gamma_1$ and $\gamma_2$ are disjoint because they are created at the same level.  The possible configurations are similar to those in the proof of Lemma~\ref{lem:edgeTouchS}.  
	
	Suppose that $\tail(\gamma_1)$ and $\tail(\gamma_2)$ lie on the same side of $\ell$.  We translate $\ell$ towards $\tail(\gamma_1)$ and $\tail(\gamma_2)$ until the translated line is tangent to $\tail(\gamma_1)$ or $\tail(\gamma_2)$ whichever happens first.  Refer to Figure~\ref{fg:slash-2}(a).  
	We can argue as in Case~2.1 in the proof of Lemma~\ref{lem:edgeTouchS}. This gives the contradiction that $\gamma_2$ should have been trimmed further.

	\begin{figure}
	\centering
	\begin{tabular}{ccc}
		\includegraphics[scale=0.6]{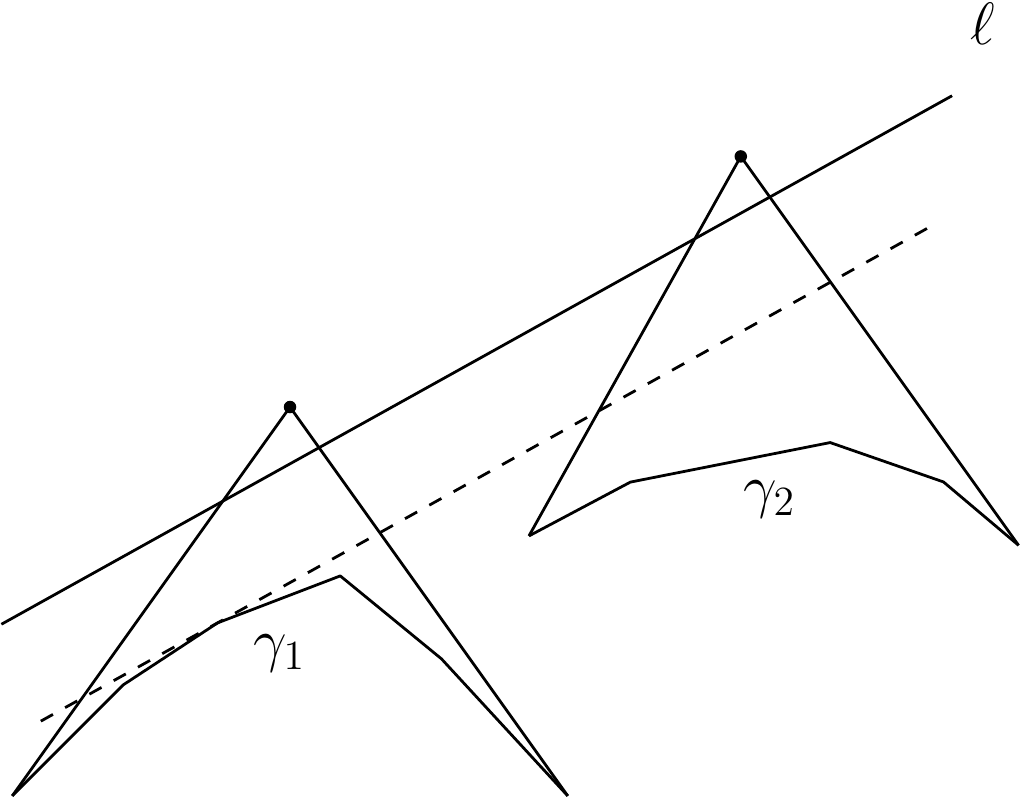} & & 
		\includegraphics[scale=0.6]{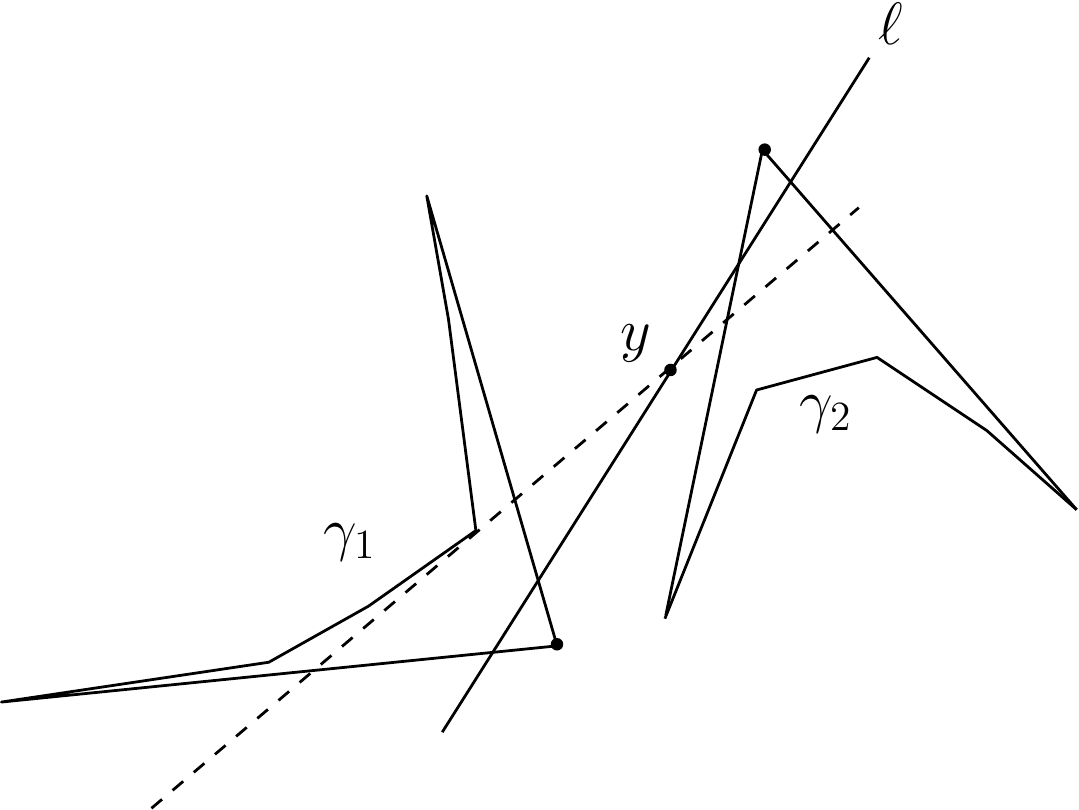} \\ \\
		(a) & \hspace{.15in} & (b) \\
	\end{tabular}
		\caption{Translate or rotate $\ell$ to obtain a tangent to $\tail(\gamma_1)$.}
		\label{fg:slash-2}	
	\end{figure}
	
	Suppose that $\tail(\gamma_1)$ and $\tail(\gamma_2)$ lie on opposite sides of $\ell$.  Since $\gamma_1$ and $\gamma_2$ are disjoint, we can rotate $\ell$ around a point $y \in \ell$ outside $\gamma_1$ and $\gamma_2$ until the rotated copy of $\ell$ is tangent to $\tail(\gamma_1)$ and $\tail(\gamma_2)$, whichever happens first.  Refer to Figure~\ref{fg:slash-2}(b).  We can argue as in Case~2.2 in the proof of Lemma~\ref{lem:edgeTouchS}.  Again, this gives the contradiction that $\gamma_2$ should have been trimmed further.
\end{proof}

Consider a region $r$ in $R_{k,l}$ for some $k$ and $l$.  If a segment $e$ lies inside $r$, we show that the geodesic triangles in $G_r$ intersected by $e$ observe an ancestor-descendant relation in $G_r$.
	
\begin{lemma}
	\label{lem:walk}
	Let $e$ be a segment inside a region $r$ of some $R_{k,l}$.
	Let $(\tau_1, \cdots, \tau_q)$ be the sequence of geodesic triangles in the balanced geodesic triangulation of $r$ that intersect $e$.  Let the sequence be ordered from left to right along $e$.\footnote{It is possible that $e$ intersects a geodesic triangle at one of its convex vertices only.  In this case, we only consider such geodesic triangles that lie above $e$.  If $e$ intersects a common vertex $v$ of a subsequence $\tau_i,\cdots,\tau_j$, the ordering $\tau_i,\cdots,\tau_j$ corresponds to their clockwise order around $v$.  There is no loss of generality in considering only such geodesic triangles that lie above $e$ because we can rotate the plane by $180^{\circ}$.}  There exists $p \in [0,q]$ such that for all $a \in [1,p-1]$, $\tau_a$ is a descendant of $\tau_{a+1}$ in $G_r$, and for all $a \in [p+1,q-1]$, $\tau_a$ is an ancestor of $\tau_{a+1}$ in $G_r$.
\end{lemma}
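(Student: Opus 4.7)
The plan is to induct on the height of the tree $G_r$. The base case, where $G_r$ consists of a single node, makes $q=1$, so taking $p=1$ trivially satisfies the (vacuous) conditions.

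For the inductive step, let $\tau_0$ be the root of $G_r$. Removing $\tau_0$ from $r$ produces the (up to three) child subregions $r_1, r_2, r_3$, each corresponding to a child $\tau_0^{(i)}$ of $\tau_0$ in $G_r$ and tiled by the subtree $G_{r_i}$. Two geometric facts drive the argument. First, the common boundary of $\tau_0$ and any $r_i$ is a single geodesic path which, seen from inside $r_i$, is a convex chain; a straight segment meets a convex chain in at most two points, so $e \cap r_i$ is connected and thus $e$ visits each $r_i$ in at most one contiguous piece. Second, the balanced geodesic triangulation of $r_i$ is built so that the two endpoints of the shared side with $\tau_0$ are taken as two of the three defining vertices of the root kite of $G_{r_i}$; consequently, the cell $\tau_0^{(i)}$ shares the whole side with $\tau_0$ and is the unique cell in the tiling of $r_i$ that touches $\tau_0$.

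From these facts, every crossing of $e$ between $\tau_0$ and $r_i$ happens through $\tau_0^{(i)}$. Hence, barring the case where an endpoint of $e$ lies strictly inside some $r_i$, the sub-sequence of cells along the $r_i$-piece of $e$ begins and ends at $\tau_0^{(i)}$. The inductive hypothesis applied to $G_{r_i}$, whose height is strictly less than that of $G_r$, gives the peak structure for this sub-sequence; since $\tau_0^{(i)}$ is the root of $G_{r_i}$ and thus the shallowest cell that the sub-sequence can contain, its inductive peak is at $\tau_0^{(i)}$. Hence, after using the ordering conventions of the footnote to fix the position of $\tau_0^{(i)}$ in the listing, each $r_i$ sub-sequence ascends from its first cell up to $\tau_0^{(i)}$ and then descends, and $\tau_0$ lies immediately above every $\tau_0^{(i)}$ in $G_r$.

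We then assemble the global sequence around $\tau_0$. If $e$ misses every $r_i$, then $e \subseteq \tau_0$ and the sequence is $(\tau_0)$. If $e$ visits exactly one $r_i$, the overall sequence is the $r_i$ sub-sequence, optionally flanked by $\tau_0$, and either $\tau_0$ or the inductive peak inside $r_i$ serves as the global peak $p$. If $e$ visits two or three $r_i$'s, then $\tau_0$ appears in the sequence between consecutive $r_i$-visits and supplies the global peak: the ascending prefix is the first $r_i$ sub-sequence together with the edge up to $\tau_0$, and the descending suffix starts at $\tau_0$ and threads through the remaining $r_i$ sub-sequences in order, each begun at its own $\tau_0^{(i)}$.

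The main obstacle is the bookkeeping when $e$ enters multiple $r_i$'s: since the boundary chains of $\tau_0$ are reflex from inside, a straight segment can in principle enter and leave $\tau_0$ several times and also revisit the corresponding $\tau_0^{(i)}$'s. The footnote's canonical ordering is needed to place each distinct geodesic triangle at exactly one position in the sequence so that the required ancestor/descendant relation holds for every consecutive pair on each side of the split $p$. Verifying that this collapse preserves the relation for every adjacent pair, using the fact that at every return visit to a cell $\tau_0^{(i)}$ the enclosed sub-walk is already resolved by the inductive hypothesis, is the technical heart of the proof.
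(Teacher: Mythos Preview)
Your inductive plan has a concrete gap at the claim that ``the cell $\tau_0^{(i)}$ shares the whole side with $\tau_0$ and is the unique cell in the tiling of $r_i$ that touches $\tau_0$.'' This is false in general. The geodesic triangle $\tau_0^{(i)}$ is only the non-degenerate part of its kite; its side along the shortest path $\xi$ separating $r_i$ from the rest of $r$ may be a strict subchain of $\xi$, because the other two boundary geodesics of the child kite (those going to the midpoint $v_{k/6}$) can share an initial prefix with $\xi$. Along such a fringe edge of $\xi$, the cell on the $r_i$ side is a strictly deeper descendant of $\tau_0^{(i)}$, not $\tau_0^{(i)}$ itself. Hence when $e$ crosses from $\tau_0$ into $r_i$, the first cell it meets need not be $\tau_0^{(i)}$, and you cannot conclude that the inductive peak inside $r_i$ sits at $\tau_0^{(i)}$. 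This breaks the assembly step: for a traversal $\ldots, [P_1\text{-subsequence}], \tau_0, [P_2\text{-subsequence}], \ldots$ you need the $P_1$-subsequence to be purely ascending and the $P_2$-subsequence purely descending, which does not follow once their peaks are at unknown depth. You also flag, but do not carry out, the bookkeeping for possible multiple visits to $\tau_0$; establishing that each geodesic triangle is met by $e$ in a single connected piece is itself part of what must be proved.

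The paper avoids these difficulties by dropping the induction and walking along $e$ directly. The peak is taken to be the geodesic triangle $\tau_p$ whose \emph{kite} is nearest the root among all kites meeting $e$ (this need not be the root of $G_r$). Walking right from $\tau_p$, each exit from the current cell crosses one of its reflex boundary chains into a polygonal part all of whose cells are descendants; one cannot re-cross the entering reflex chain because a chord with both endpoints on a reflex chain lies on the far side of it. Thus each successive cell is a proper descendant of the previous one, giving strict descent to the right of $\tau_p$ and, symmetrically, to the left. Fringes cause no trouble because the argument tracks which boundary shortest path of the current kite is crossed, rather than which particular neighbouring cell is entered.
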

\begin{proof}
	In the balanced geodesic triangulation of $r$, each geodesic triangle $\tau$ is contained in a kite which we denote by $\mathit{kite}(\tau)$.  Let $\mathit{kite}(\tau)$ be the kite nearest to the root of $G_r$ among all kites intersected by $e$.  The segment $e$ may or may not intersect $\tau$ because $e$ may intersect some fringe of $\mathit{kite}(\tau)$ only.  There are two cases depending on whether $\tau$ belongs to $(\tau_1,\cdots,\tau_q)$.
	
	Case 1: $\tau = \tau_p$ for some $p \in [1,q]$.  We claim that $(\tau_1,\cdots,\tau_p)$ and $(\tau_{p+1},\cdots,\tau_q)$ satisfy the lemma.  We actually prove that for all $a \in [p,q-1]$, $\tau_a$ is an ancestor of $\tau_{a+1}$ in $G_r$.  A symmetric arguemnt then shows that for all $a \in [1,p-1]$, $\tau_a$ is a descendent of $\tau_{a+1}$ in $G_r$.

	The three shortest paths on the boundary of $\mathit{kite}(\tau_p)$ divide $r$ into four parts, including $\tau$ and three other polygonal parts. Walk along $e$ from $\tau_p$ to the right.  If we never exit $\tau_p$, then $p = q$ and there is nothing to prove.  Otherwise, we must cross a reflex boundary chain of $\tau_p$, which is a subsequence of a boundary shortest path $\xi$ of $\mathit{kite}(\tau_p)$.   So we cross $\xi$ too.  Consider the geodesic triangle contained in the kite that shares $\xi$ with $\mathit{kite}(\tau_p)$.  It must be a descendant of $\tau_p$; otherwise, $\mathit{kite}(\tau_p)$ would not be the kite nearest to the root of $G_r$ among those intersected by $e$.  It further implies that the polygonal part separated from $\tau_p$ by $\xi$ bounds geodesic triangles in $G_r$ that are descendants of $\tau_p$.  So $\tau_{p+1}$ is a descendant of $\tau_p$.
	
	There are three shortest paths on the boundary of $\mathit{kite}(\tau_{p+1})$.  Let $\xi'$ be the one that overlaps with $\xi$.  Therefore, we cross $\xi'$ when entering $\tau_{p+1}$ from $\tau_p$.  We also cross a reflex boundary chain $\zeta$ of $\tau_{p+1}$ where $\zeta \subseteq \xi'$.  One boundary shortest path of $\mathit{kite}(\tau_{p+1})$ borders the parent of $\tau_{p+1}$ in $G_r$ (which may or may not be $\tau_p$).  The other two boundary shortest paths of $\mathit{kite}(\tau_{p+1})$ border the two children of $\tau_{p+1}$ and hence other descendants of $\tau_{p+1}$ too.  Since $\xi'$ also borders $\tau_p$ which is an ancestor of $\tau_{p+1}$, $\xi'$ must border the parent of $\tau_{p+1}$.  When we walk along $e$ to exit $\tau_{p+1}$, we cannot cross $\zeta$ again because $\zeta$ is reflex.  So we must exit $\tau_{p+1}$ at another boundary reflex chain, which must be contained in a boundary shortest path, say $\xi''$, of $\mathit{kite}(\tau_{p+1})$ different from $\xi'$.  Again, $\xi''$ separates $\tau_{p+1}$ from a polygonal part of $r$ such that all geodesic triangles in this part are descendants of $\tau_{p+1}$.  So $\tau_{p+2}$ is a descendant of $\tau_{p+1}$.  Repeating the argument shows that for all $a \in [p,q-1]$, $\tau_a$ is an ancestor of $\tau_{a+1}$.
	
	Case 2: $\tau \not\in (\tau_1,\cdots,\tau_q)$.  There exists $p \in [1,q-1]$ such that a boundary shortest path $\xi$ of $\mathit{kite}(\tau)$ contains the edge $\tau_p \cap \tau_{p+1}$, and we cross the edge $\tau_p \cap \tau_{p+1}$ when walking along $e$ from $\tau_p$ to $\tau_{p+1}$. 
	
	The three boundary shortest paths of $\mathit{kite}(\tau)$ divide $r$ into $\tau$ and three polygonal parts.  If $\tau$ is the root of $G_r$, then all three polygonal parts contain geodesic triangles that are descendants of $\tau$.  If $\tau$ is not the root of $G_r$, then two of the polygonal parts contain geodesic triangles that are descendants of $\tau$, and the remaining polygonal part contains the kite $\mathit{kite}(\tau')$ such that $\tau'$ is the parent of $\tau$.  Note that $\mathit{kite}(\tau)$ and $\mathit{kite}(\tau')$ share a common boundary shortest path.  When entering $\tau_{p+1}$ from $\tau_p$, $e$ must enter a polygonal part that contains geodesic triangles descending from $\tau$.  Otherwise, $e$ would intersect $\mathit{kite}(\tau')$, a contradiction to the assumption that $\mathit{kite}(\tau)$ is nearest to the root of $G_r$ among all kites intersected by $e$.  
	
	We walk from $\tau_{p+1}$ to $\tau_{p+2}$ and so on.  The same analysis in case~1 shows that for $a \in [p+1,q-1]$, $\tau_a$ is an ancestor of $\tau_{a+1}$.  A symmetric analysis shows that for $a \in [1,p-1]$, $\tau_a$ is a descendent of $\tau_{a+1}$.
\end{proof}

\cancel{

\begin{lemma}
	\label{lem:intersectregion}
	Suppose that $R_{k,l}$ is constructed from $R_{k-1,l'}$.  For any segment $e$, if $e$ lies inside a region in $R_{k-1,l'}$, then $e$ intersects at most one region in $R_{k,l}$.
\end{lemma}
\begin{proof}
	By construction, the regions in $R_{k,l}$ are contained in some selected geodesic triangles in $\tilde{\Delta}_{k-1,l'}$.  Let $\tau_1$ and $\tau_2$ be two selected geodesic triangles in $\tilde{\Delta}_{k-1,l'}$ that lie in the same region in $R_{k-1,l'}$, overlap with $R_{k,l}$,  and are intersected by $e$.  The shapes $\mathit{shrink}(\tau_1)$ and $\mathit{shrink}(\tau_2)$ are used in constructing $R_{k,l}$.  It suffices to prove that $\mathit{shrink}(\tau_1)$ and $\mathit{shrink}(\tau_2)$ are contained in the same region in $R_{k,l}$.
	
	Case 1: $\tau_1$ and $\tau_2$ share an edge.  Our construction method ensures that the common edge of $\tau_1$ and $\tau_2$ is also a common  edge of $\mathit{shrink}(\tau_1)$ and $\mathit{shrink}(\tau_2)$.  Therefore, $\mathit{shrink}(\tau_1) \cup \mathit{shrink}(\tau_2)$ is included in the same region in $R_{k,l}$.
	
	Case 2: $\tau_1$ and $\tau_2$ does not share any edge.  Let $r$ be the region in $R_{k-1,l'}$ that contains $\tau_1$ and $\tau_2$.  There are two cases depending on whether $\tau_1$ and $\tau_2$ have ancestor-descendant relation in the hierarchy $G_r$.  Recall that $G_r$ is the rooted tree that encodes the parent-child relation of geodesic triangles in the balanced geodesic triangulation of $r$.  
	
	Case 2.1: $\tau_1$ is the ancestor of $\tau_2$ in $G_r$.  By our method of selecting geodesic triangles from $\tilde{\Delta}_{k-1,l'}$ in constructing $R_{k,l}$, all geodesic triangles on the path $\rho$ between $\tau_1$ and $\tau_2$ in $G_r$ are also selected.  Every pair of adjacent geodesic triangles in $\rho$ share an edge, and so they are included in the same region in $R_{k,l}$ as argued in case 1.  Hence, $\mathit{shrink}(\tau_1)$ and $\mathit{shrink}(\tau_2)$ are included in the same region in $R_{k,l}$.
	
	Case 2.2: $\tau_1$ and $\tau_2$ have no ancestor-descendant relation in $G_r$.  Let $\tau$ be the lowest common ancesotr of $\tau_1$ and $\tau_2$ in $G_r$.  Our method of selecting geodesic triangles from $\tilde{\Delta}_{k-1,l'}$ guarantees that $\tau$ is also selected.  For $s \in [1,2]$, $\mathit{shrink}(\tau_s)$ and $\mathit{shrink}(\tau)$ are included in the same region in $R_{k,l}$.  We conclude that $\mathit{shrink}(\tau_1)$ and $\mathit{shrink}(\tau_2)$ are included in the same region in $R_{k,l}$.	
\end{proof}
}

We are ready to bound the number of intersections between $R_{k,l}$ and any segment that lies strictly inside a region of $S_{1,1}$.  As a shorthand, given a polyline $\xi$ and a set $R$ of regions, we use $\xi \sqcap R$ to denote the set of points in $\xi \cap r$ for all $r \in R$.  Note that $\xi \sqcap R$ may have several connected components, each being a polyline.

\begin{lemma}
	\label{lem:g}
	Let $e$ be a segment inside a region in $S_{1,1}$.
	For any $R_{k,l}$, there are at most $k$ subsegments in $e \sqcap R_{k,l}$.
\end{lemma}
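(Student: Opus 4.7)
The plan is to induct on the layer index $k$. The base case $k = 1$ is immediate because $R_{1,1}$ is the set of bounded regions of $S_{1,1}$ and $e$ lies inside a single such region, contributing at most one subsegment. For $k \ge 2$, I use the fact that every region of $R_{k,l}$ is a subset of some region of the version $R_{k-1,*}$ from which $R_{k,l}$ was built, so each component of $e \cap R_{k,l}$ is contained in exactly one component $c'$ of $e \sqcap R_{k-1,*}$. Writing $n_{c'}$ for the number of components of $c' \cap R_{k,l}$, the inductive hypothesis bounds the number of $c'$'s by $k-1$, and it is enough to prove the per-component estimate $n_{c'} \le 1 + s_{c'}$, where $s_{c'}$ is the number of trimmed subfins in $\bigcup_{\beta \in \mathit{subfins}[k,l]} \mathit{trimmed\_subfins}(\beta)$ that $c'$ slashes. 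Together with Lemma~\ref{lem:slash}, which forces $\sum_{c'} s_{c'} \le 1$, this yields $\sum_{c'} n_{c'} \le (k-1)+1 = k$.

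To prove the per-component estimate, let $r' \in R_{k-1,*}$ be the region containing $c'$ and let $\tau_1, \ldots, \tau_q$ be the geodesic triangles of $G_{r'}$ met by $c'$ in order. Lemma~\ref{lem:walk} gives this sequence an up-then-down structure in $G_{r'}$. The set $\mathcal{T}$ of geodesic triangles that donate shrinks to $R_{k,l}$ coincides with $\tilde{X}_0$ (an easy consequence of the seed-fin construction), which is upward-closed in $G_{r'}$ by definition, so the indices $i$ with $\tau_i \in \mathcal{T}$ form a contiguous subrange $[a,b]$ of the walk. For every consecutive pair $\tau_i, \tau_{i+1}$ inside $[a,b]$, the step that forms $X_2$ in the seed-fin construction selects the triangular regions of $T_{\alpha_i}$ and $T_{\alpha_{i+1}}$ incident on every shared edge of $\tau_i \cap \tau_{i+1}$; consequently the entire shared reflex chain lies on the tails of the augmented seed fins on both sides, so $\shrink(\tau_i)$ and $\shrink(\tau_{i+1})$ meet continuously across every crossing point of $c'$, and the entry and exit of $c'$ in every interior $\tau_i$ with $a < i < b$ lie in $\shrink(\tau_i)$. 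The same argument forces the seed fin to equal the whole original fin for each fin of $\tau_i$ whose tail is shared with $\tau_{i \pm 1} \in \mathcal{T}$, so these fins contain no trimmed subfins.

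Any additional component of $c' \cap R_{k,l}$ inside a single $\tau_i$ must therefore arise from $c'$ crossing the interior boundary of $\shrink(\tau_i)$, which in every remaining fin consists only of head sides of trimmed subfins; hence an interior exit-and-reentry must slash a trimmed subfin. Summing $1+(\text{slashes in }\tau_i)$ over $i \in [a,b]$ and subtracting the $b-a$ merges supplied by the shared-boundary crossings gives $n_{c'} \le 1 + s_{c'}$. The main obstacle is the careful handling of the endpoints $\tau_a$ and $\tau_b$ of the $\mathcal{T}$-range, where the shared reflex chain with the non-$\mathcal{T}$ neighbor is only partially covered by the augmented seed fin. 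In particular, when $c'$ enters $\tau_a$ outside the shrink and first reaches $\shrink(\tau_a)$ only after crossing a head side of a trimmed subfin, one must verify that the resulting ``initial'' component is the one counted by the $+1$ in the per-component estimate and not an additional one; a symmetric check is needed at $\tau_b$.
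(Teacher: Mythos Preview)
Your overall strategy matches the paper's: induct on $k$, restrict to each subsegment $c' = I_p$ of $e \sqcap R_{k-1,*}$, use Lemma~\ref{lem:walk} together with the upward-closedness of $\tilde X_0$ in each $G_r$ to show that the selected geodesic triangles along $c'$ form a contiguous range, invoke the $X_2$ step so that $c'$ stays inside $R_{k,l}$ at each crossing between two selected $\tau_i,\tau_{i+1}$, and charge any remaining break to a slash of a trimmed subfin via Lemma~\ref{lem:slash}.

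One intermediate statement you make is false, though it turns out to be inessential: the assertion that ``the seed fin equals the whole original fin for each fin of $\tau_i$ whose tail is shared with $\tau_{i\pm1}\in\mathcal T$.'' A single reflex chain of $\tau_i$ may border several geodesic triangles (the child across the shared shortest path, plus deeper descendants along the fringes of that child's kite), and only the edges bordering members of $\tilde X_0$ trigger the $X_2$ selection. So that fin can still contain trimmed subfins. What your argument actually requires is weaker and true: the particular edge that $c'$ crosses between $\tau_i$ and $\tau_{i+1}$ (both in $\mathcal T$) lies in the interior of $R_{k,l}$, and the boundary of $\shrink(\tau_i)$ \emph{interior to} $\tau_i$ consists only of trimmed-subfin head sides (immediate from Lemma~\ref{lem:augmented}). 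With those two facts your per-$\tau_i$ count and the merge across consecutive $\tau_i$ go through.

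The paper sidesteps your explicit per-$\tau_i$ accounting---and with it your endpoint worry at $\tau_a,\tau_b$---by isolating one claim: whenever $I_p \sqcap R_{k,l}$ crosses an edge of $\tilde\Delta_{k-1,l'}$ in the interior of $I_p$, that edge lies in the interior of $R_{k,l}$. This claim already packages Lemma~\ref{lem:walk}, upward-closedness, and the $X_2$ step. The walk along $I_p$ then becomes short: one can exit $R_{k,l}$ in the interior of $I_p$ only through a trimmed-subfin head side; exiting that trimmed subfin through its tail would be a $\tilde\Delta$-crossing not interior to $R_{k,l}$, contradicting the claim, so the only way to produce a new component is to slash. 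No special handling of the first or last selected geodesic triangle is needed.
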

\begin{proof}
	We prove the lemma by induction on the level.  The base case of level 1 is trivial because $R_{1,1}$ is the set of bounded regions in $S_{1,1}$ and $e$ lies inside a region of $S_{1,1}$.  Assume that the lemma is true for level $k-1$ and any version index.  
	
	Consider $R_{k,l}$.  Suppose that $R_{k,l}$ is constructed from $R_{k-1,l'}$.  By induction, $e \sqcap R_{k-1,l'}$ consists of at most $k-1$ subsegments.  Denote them by $I_1, \cdots I_q$, where $q \leq k-1$.  Since each region in $R_{k,l}$ is a subset of a region in $R_{k-1,l'}$, $e \sqcap R_{k,l}$ consists of $I_p \sqcap R_{k,l}$ for all $p \in [1,q]$.  
	
	Take any subsegment $I_p$ such that $I_p \sqcap R_{k,l} \not= \emptyset$.  We claim that if $I_p \sqcap R_{k,l}$ intersects any edge $e'$ of the balanced geodesic triangulations $\tilde{\Delta}_{k-1,l'}$ such that the intersection lies in the interior of $I_p$ and $I_p \sqcap R_{k,l}$ crosses the support line of $e'$, then $e'$ lies in the interior of $R_{k,l}$.  Since $e'$ is intersected in the interior of $I_p$, the intersection is in the interior of $R_{k-1,l'}$.  Therefore, $e'$ is the common edge $\tau_1 \cap \tau_2$ of two geodesic triangles $\tau_1$ and $\tau_2$ in $\tilde{\Delta}_{k-1,l'}$ that lie in the same region in $S_{1,1}$.  Let $I$ be the subsegment in $I_p \sqcap R_{k,l}$ that contains the intersection with $e'$.  Since both endpoints of $I$ are in $R_{k,l}$, they must be contained in two geodesic triangles in $\tilde{\Delta}_{k-1,l'}$ that are selected in forming $R_{k,l}$.  By applying Lemma~\ref{lem:walk} to $I$ and our method of selecting geodesic triangles in $\tilde{\Delta}_{k-1,l'}$ in forming $R_{k,l}$, we conclude that the two geodesic triangles $\tau_1$ and $\tau_2$ must be selected, and for $i \in [1,2]$, the triangular regions in $\tau_i$ incident to $\tau_1 \cap \tau_2$ is part of $R_{k,l}$.  Therefore, by Q2 in Table~\ref{tb:Q}, the edge $e' = \tau_1 \cap \tau_2$ lies in the interior of $R_{k,l}$.  This proves our claim.
	
	Again, take any subsegment $I_p$ such that $I_p \sqcap R_{k,l} \not= \emptyset$.  Recall that only the tail edges of augmented seed fins can contribute to the boundary of $R_{k,l}$.  Lemma~\ref{lem:augmented} implies that a boundary edge of $R_{k,l}$ is either a tail edge in $\hat{\Delta}_{i-1,*}$ (and hence an edge in $\tilde{\Delta}_{i-1,*}$) or a head side of a trimmed subfin. Start from the leftmost point in $I_p \sqcap R_{k,l}$ and walk to the right along $I_p$.  During the walk, if we exit $R_{k,l}$ in the interior of $I_p$, we must exit at an edge in $\tilde{\Delta}_{i-1,*}$ or a head side of a trimmed subfin.  The first case is impossible because such an edge in $\tilde{\Delta}_{i-1,*}$ lies in the interior of $R_{k,l}$ by our claim.  Therefore, we can only exit $R_{k,l}$ in the interior of $I_p$ by entering some trimmed subfin $\gamma$ through a head side of $\gamma$.  If we do not exit $\gamma$, then $I_p \sqcap R_{k,l}$ is a single subsegment.  Suppose that we exit $\gamma$ at a point $x$ in its boundary.  Consider the possibility of $x$ lying on $\tail(\gamma)$.  Recall that the edges of $\tail(\gamma)$ are edges in $\tilde{\Delta}_{i-1,*}$.  In this case, our claim implies that $x$ cannot be in the interior of $I_p$.  On the other hand, if $x$ is an endpoint of $I_p$, then $x$ lies on the boundary of $R_{k-1,l'}$.  Since $\gamma \subseteq R_{k-1,l'}$, we are also exiting $R_{k-1,l'}$ at $x$.   Then, $x \not\in R_{k,l}$ because $R_{k,l} \subseteq R_{k-1,l'}$ and we arrive at $x$ from the outside of $R_{k,l}$.  As a result, $I_p \sqcap R_{k,l}$ is a single subsegment.  
	
	The remaining possibility is that $x$ lies on the head side of $\gamma$ different from the one containing the entry point into $\gamma$.  By our claim, we do not cross $\tail(\gamma)$ before $x$, that is, $e$ slashes~$\gamma$.  By Lemma~\ref{lem:slash}, $e$ slashes at most one trimmed subfin in $\bigcup_{\beta \in \mathit{subfins}[k,l]} \mathit{trimmed\_subfins}(\beta)$.  It means that for every $p' \in [1,q] \setminus \{p\}$, $I_{p'} \sqcap R_{k,l}$ must be empty or a single subsegment.  It also implies that $I_p \sqcap R_{k,l}$ consists of only two subsegments.  As a result, the total number of subsegments in $e \sqcap R_{k,l}$ is at most $k$.
\end{proof}

\cancel{
	
\begin{lemma}
\label{lem:intersectSubregions}
For any line segment $l$ inside a region of $S$, at most $i$ regions in ${\cal S}_{i,w_i}$ are intersected by $l$ for any $i,w_i > 0$.
\end{lemma}

\begin{proof}
Let $w_{k-1}$ denotes the index such that ${\cal S}_{k, w_k}$ is based on $D_{k-1, w_{k-1}}$ for $0 < k \leq i$.

Consider a region $r_{i-1}$ of ${\cal S}_{i-1,w_{i-1}}$ and any two geodesic triangles $\tau_1, \tau_2$ in $r_{i-1} \cap \tilde{\Delta}_{i,w_{i-1}}$ where $\tau_1 \cap {\cal S}_{i,w_i} \neq \emptyset$ and $\tau_2 \cap {\cal S}_{i,w_i} \neq \emptyset$. We will first show that any line segment $l'$ inside $r_{i-1}$
intersects at most one region in ${\cal S}_{i,w_i}$. Suppose $l'$ intersects $\tau_1$ and $\tau_2$.

\begin{enumerate}
\item[Case 1.] $\tau_1$ and $\tau_2$ share an edge: In this case, $shrink(\tau_1)$ and $shrink(\tau_2)$ will be union into a single region when constructing ${\cal S}'_{i,w_i}$, and thus a single region in ${\cal S}_{i,w_i}$.

\item[Case 2.] $\tau_1$ and $\tau_2$ has no common edge:
    \ \\
    For any simple polygon $P$, recall that the root geodesic triangle has at most three children and any internal node has at most two children in the geodesic decomposition tree of $P$. For any geodesic triangle $\tau$, let $kite(\tau)$ denotes the corresponding kite of $\tau$ and let $r_1$, $r_2$ and $r_3$ be the at most three children of $\tau$. Let $E_{\tau}$ denotes the set of edges of $kite(\tau) \setminus \tau$.
    Recall that any two edges in the same shortest path of $kite(\tau)$ are invisible to each other except the common endpoint of two adjacent edges, otherwise there will be another shortest path. This also implies any two edges in $E_{\tau}$ are invisible to each other, because there must be one of the three shortest path of $kite(\tau)$ contains the two edges in $E_{\tau}$.
    As a result, this means any line segment intersects either at most one edge in $E_{\tau}$ or at most two edges of $\tau$. Therefore, any line segment intersects at most two children of $\tau$. For example, a line segment $l'' = (v_1, v_2)$ where $v_1$ inside $r_1$ and $v_2$ inside $r_2$. Imagine that a vertex $v$ is moving from $v_1$ to $v_2$ along $l''$. Either $v$ leaves $r_1$ and then enter $r_2$ by crossing an edge in $E_{\tau}$, or $v$ leaves $r_1$ and then enter $\tau$ by crossing an edge of $\tau$ and then leave $\tau$ and enter $r_2$ by crossing another edge of $\tau$. Because of the invisibility of the edges in $E_{\tau}$ and the invisibility of the edges in the same shortest path of $kite(\tau)$, $v$ cannot cross more edges of $kite(\tau)$, and thus $l''$ intersects at most two children of $\tau$. Recall that for any geodesic triangle $\tau$, each of the at most three children can be viewed as a union of a set of mutually invisible regions.
    \begin{enumerate}
    \item[(1):] Consider a line segment intersects an edge $e$ of $kite(\tau)$ for some geodesic triangle $\tau$. Suppose one of the endpoint of the line segment is inside a region $r'$ of a child of $\tau$. Let $\tau'$ denotes the geodesic triangle that contains $e$ and inside $r'$.
        Base on the construction of geodesic triangulation, $\tau'$ must be the ancestor of any geodesic triangles inside $r'$.
    \item[(2):] Consider a line segment $l'' = (v_1, v_2)$ where $v_1$ is inside $\tau$ and $v_2$ is inside a region $r'$ of a child of $\tau$, then $l''$ must intersect exactly one edge $e$ of $\tau$. Split $l''$ into two parts at the intersection point of $l''$ and $e$, there must be one part inside $\tau$ and the other part inside $r'$.
        Therefore, $l''$ cannot intersect any other regions of any child of $\tau$.
    \end{enumerate}

\begin{enumerate}
\item[Case 2.1.] $\tau_1$ and $\tau_2$ are ancestor/descendant relation: Without loss of generality, assume $\tau_1$ is ancestor of $\tau_2$. Let $G$ be the set of geodesic triangles intersected by the line segment $l'$. If $l'$ intersects both $\tau_1$ and $\tau_2$, then by (2), $l'$ intersects an edge $e$ of $\tau_1$ such that we can split $l'$ into $l'_1$ and $l'_2$ by $e$, where $l'_1$ inside $\tau_1$ and $l'_2$ inside a region $r'$ of a child of $\tau_1$ that contains $\tau_2$. Let $\tau'_2$ be the geodesic triangle inside $r'$ where $\tau'_2$ contains $e$. By (1), $\tau'_2$ is ancestor of $\tau_2$. $l'_2$ intersects $\tau'_2$ and $\tau_2$ where $\tau'_2$ is ancestor of $\tau_2$. Repeat the same argument above, all geodesic triangles in $G$ are in the same branch from $\tau_1$ to $\tau_2$ of the geodesic decomposition tree, i.e. any two geodesic triangles in $G$ are ancestor/descendant relation. Also, for every geodesic triangle $\tau' \in G$, there must be a geodesic triangle $\tau'' \neq \tau'$ that shares an edge with $\tau'$ by the definition of $G$ (otherwise $l'$ cannot intersect all geodesic triangles in $G$). Recall that the triangular regions associate with these share edges are marked for the $shrink$ process.
    Therefore, the $shrink$ regions of every geodesic triangle in $G$ are union into a region. So, $shrink(\tau_1)$ and $shrink(\tau_2)$ are inside the same region of ${\cal S}_{i,w_i}$.
\item[Case 2.2.] $\tau_1$ and $\tau_2$ are not ancestor/descendant relation: Let $\tau$ be the geodesic triangle such that $\tau_1$ inside a region $r'_1$ of a child of $\tau$ and $\tau_2$ inside another region $r'_2$ of a child of $\tau$. Also, $r'_1$ and $r'_2$ must belong to two different children of $\tau$, otherwise $\tau_1$ and $\tau_2$ are invisible to each other.

    If $l'$ intersects $\tau$, we can split $l'$ into $l'_1$ and $l'_2$ where the common endpoint of $l'_1$ and $l'_2$ is inside $\tau$. Same argument as case 2.1 for $l'_1$ and $l'_2$, we can conclude the $shrink$ regions of every geodesic triangles intersected by $l'_1$ are union into a single region and the $shrink$ regions of every geodesic triangles intersected by $l'_2$ are also union into a single region. $\tau$ is ancestor of every geodesic triangles (except $\tau$ itself) intersected by $l'_1$ and $l'_2$. Therefore, $shrink(\tau)$ are also union with $shrink(\tau_1)$ and $shrink(\tau_2)$. So, $shrink(\tau_1)$ and $shrink(\tau_2)$ are inside the same region of ${\cal S}'_{i,w_i}$.

    If $l'$ doesn't intersect $\tau$. $l'$ must intersect an edge $e$ in $kite(\tau) \setminus \tau$. Split $l'$ into $l'_1$ and $l'_2$ by $e$ such that $l'_1$ inside $r'_1$ and $l'_2$ inside $r'_2$. Let $\tau'_1$ be the geodesic triangle inside $r'_1$ and contains $e$, $\tau'_2$ be the geodesic triangle inside $r'_2$ and contains $e$. By (1), $\tau'_1$ must be ancestor of $\tau_1$, and $\tau'_2$ must be ancestor of $\tau_2$. For $b \in \{1,2\}$, by the same argument as case 2.1 for $l'_b$, the $shrink$ regions of every geodesic triangles intersected by $l'_b$ are union into a region. Recall that $\tau'_1$ and $\tau'_2$ shares the edge $e$. So, the triangular region associate with $e$ and inside $\tau'_1$ are marked, and the triangular region associate with $e$ and inside $\tau'_2$ are also marked. So $shrink(\tau'_1)$ and $shrink(\tau'_2)$ will be union into a region. Therefore, the $shrink$ regions of every geodesic triangles intersected by $l'$ are union into a region. So, $shrink(\tau_1)$ and $shrink(\tau_2)$ are inside the same region of ${\cal S}'_{i,w_i}$.
\end{enumerate}
\end{enumerate}

As a result, if $l'$ intersects $\tau_1$ and $\tau_2$, then $shrink(\tau_1)$ and $shrink(\tau_2)$ are union into a single region in ${\cal S}'_{i,w_i}$, and thus a single region in ${\cal S}_{i,w_i}$. Therefore, any line segment $l'$ inside a region of ${\cal S}_{i-1, w_{i-1}}$ intersects at most one region in ${\cal S}'_{i,w_i}$.

Recall that $l$ is a line segment inside a region of $S_{0,1}$. $l$ intersects at most one region $r_{1,1}$ in ${\cal S}_{1,w_1}$. $l \setminus r_{1,1}$ doesn't intersect any region of ${\cal S}_{1, w_1}$, and thus $l \setminus r_{1,1}$ doesn't intersect any region in ${\cal S}_{k, w_k}$ for $k > 0$. By lemma~\ref{lem:intersectAtMostOneTrimmedOutRegion}, $l\cap r_{1,1}$ can {\em inner-pass-through} at most one {\em trimmed-out-region} created at level 1. Suppose $l\cap r_{1,1}$ {\em inner-pass-through} a {\em trimmed-out-region} created at level 1, then $l\cap r_{1,1}$ are two line segments $l_{1,1}$ and $l_{1,2}$. $l_{1,1}$ and $l_{1,2}$ are completely inside $r_{1,1}$. Each of $l_{1,1}$ and $l_{1,2}$ can intersect at most one region in ${\cal S}_{2,w_2}$. So, $l$ intersects at most 2 regions of $S_{2, w_2}$. Let $r_{2,1}$ be the region in $S_{2, w_2}$ that is intersected by $l_{1,1}$, and let $r_{2,2}$ be the region in $S_{2, w_2}$ that is intersected by $l_{1,2}$. Recall that $l$ can {\em inner-pass-through} at most one {\em trimmed-out-region} created at level 2. So, either $l_{1,1}$ or $l_{1,2}$ or none of them {\em inner-pass-through} a {\em trimmed-out-region} created at level 2. Without loss of generality, assume $l_{1,2}$ {\em inner-pass-through} a {\em trimmed-out-region} created at level 2. $l_{1,1} \cap r_{2,1}$ is a line segment completely inside $r_{2,1}$, let $l_{2,1} = l_{1,1} \cap r_{2,1}$. $l_{1,2} \cap r_{2,2}$ are two line segments because $l_{1,2}$ intersects a {\em trimmed-out-region} created at level 2. Let $l_{2,2}$ and $l_{2,3}$ denotes these two line segments. Similarly, $l_{2,2}$ and $l_{2,3}$ are completely inside $r_{2,2}$. The complement of $l_{1,1} \cap r_{2,1}$ and the complement of $l_{1,2} \cap r_{2,2}$ will not intersect any regions in ${\cal S}_{k, w_k}$ for $k > 1$. Recall that any line segment completely inside a region of $S_{2,w_2}$ can intersect at most one region in $S_{3,w_3}$. So, $l$ intersects at most 3 regions in $S_{3, w_3}$. Repeat the above argument until level $i$, we have $l$ intersects at most $i$ regions of $S_{i, w_i}$.
\end{proof}

}

We can now show that C4 in Table~\ref{tb:C} is satisfied.

\begin{lemma}
	$\Delta_{i,j}$ satisfies C4 in Table~\ref{tb:C}.
\end{lemma}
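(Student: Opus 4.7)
The plan is to take any triangle $t$ lying inside a region of $S$ and bound the number of triangles in $\Delta_{i,j}$ that are subsets of regions in $R_{i,j}$ and intersect $t$. I would follow the edge-based analysis used for the first layer (Lemma~\ref{lem:1}) and lift it to the multi-level structure via Lemma~\ref{lem:g}.

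The core step is to bound, for each edge $e$ of $t$, the number of triangles of $\Delta_{i,j}$ that $e$ crosses. Since $e$ lies inside a region of $S_{1,1}$, Lemma~\ref{lem:g} gives that $e \sqcap R_{i,j}$ consists of at most $i$ subsegments, each contained in a single region $r \in R_{i,j}$. For a subsegment $I$ in $r$, Lemma~\ref{lem:walk} shows $I$ traverses at most $O(\log n_i)$ geodesic triangles of $G_r$ (arranged as two ancestor-to-descendant paths in a tree of height $O(\log n_i)$). Within each traversed geodesic triangle $\tau \in \tilde{\Delta}_{i,j}$, the star $\mathit{star}(\tau)$ contains only $O(\log n_i)$ triangles, and each fin $\alpha \in \mathit{fins}(\tau)$ that $I$ enters is traversed along an ancestor-to-descendant path in $T_\alpha$ of length $O(\log n_i)$, each triangular region being subdivided into $O(\log n_i)$ triangles; this parallels Lemma~\ref{lem:basic}(iii). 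Hence $I$ meets $O(\log^2 n_i)$ triangles inside $\tau$ and $O(\log^3 n_i)$ triangles inside $r$ in total, so each edge of $t$ meets $O(i \log_2^3 n_i)$ triangles and the three edges together meet $O(i \log_2^3 n_i) = O(g(n,i))$, matching $g(n,i) = i\log_2^3 n_i$.

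To finish, I would address triangles $\tau' \in \Delta_{i,j}$ (subsets of $R_{i,j}$) that intersect $t$ without being touched by $\partial t$: either $\tau'$ contains $t$ (only one such) or $\tau' \subset t$. For the latter case, I would invoke Q1 of Table~\ref{tb:Q} and Lemma~\ref{lem:pre-edge}: every region of $R_{i,j}$ has a boundary vertex that is a vertex of $S_{1,1}$. Since $t$ lies strictly inside a region of $S$, no vertex of $S_{1,1}$ can lie in the open interior of $t$, so no region of $R_{i,j}$ is entirely contained in the interior of $t$; any $\tau' \subset t$ must then lie in a region $r$ whose boundary has been crossed by $\partial t$, placing $r$ inside the scope of the per-subsegment analysis above.

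The main obstacle I anticipate is the last step: within a region $r$ crossed by $\partial t$, there can still be triangles of $\Delta_{i,j}$ fully inside $r \cap t$ and disjoint from $\partial t$, especially clustered around non-$S_{1,1}$ vertices such as trimmed subfin apexes that lie in the interior of $t$. Controlling these requires observing that each such apex is produced by a tangent or ray anchored at an $S_{1,1}$ vertex, and therefore sits on an ancestor-to-descendant chain in $G_r$ that is already charged to some subsegment $I$ crossed by an edge of $t$. Making this charging rigorous and ensuring that the resulting count remains within $O(i \log_2^3 n_i)$ is the technical heart of the proof.
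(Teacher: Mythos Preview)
Your overall strategy matches the paper's, and the edge-based count via Lemma~\ref{lem:g} and Lemma~\ref{lem:walk} is correct. The obstacle you anticipate in the last paragraph, however, dissolves once you apply Q1 at the right granularity, and the paper's argument is considerably simpler than the charging scheme you sketch.

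The key is to apply Q1 not at the level of regions of $R_{i,j}$ but at the level of geodesic triangles in $\tilde{\Delta}_{i,j}$. By Q1, every edge of the balanced geodesic triangulation of any region in $R_{i,j}$ is incident to a vertex of $S_{1,1}$; in particular every geodesic triangle $\tau \in \tilde{\Delta}_{i,j}$ has a vertex of $S_{1,1}$ on its boundary. Since no vertex of $S_{1,1}$ lies in the interior of $t$, the triangle $t$ cannot strictly enclose any geodesic triangle. Hence every geodesic triangle of $\tilde{\Delta}_{i,j}$ intersected by $t$ is also intersected by $\partial t$, and your $O(i\log n_i)$ count of boundary-crossed geodesic triangles already accounts for all of them.

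It then remains to count, within each such geodesic triangle $\tau$, \emph{all} triangles of $\Delta_{i,j}$ intersected by $t$ (not merely those crossed by $\partial t$). For $\mathit{star}(\tau)$ this is trivial: there are only $O(\log n_i)$ triangles in $\mathit{star}(\tau)$ in total. For each fin $\alpha \in \mathit{fins}(\tau)$, Q1 again gives that every tail edge of $\alpha$ is incident to a vertex of $S_{1,1}$, so $t$ can contain at most one vertex of $\tail(\alpha)$. Exactly as in the convex analysis (Lemma~\ref{lem:basic}(iii) and Section~\ref{sec:first}), this forces $t \cap \alpha$ to split into $O(1)$ triangles, each intersecting $O(\log^2 n_i)$ triangles of $\Delta_{i,j}$ inside $\alpha$. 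Thus $t$ meets $O(\log^2 n_i)$ triangles per geodesic triangle and $O(i\log^3 n_i) = O(g(n,i))$ overall, with no separate charging argument needed for triangles in the interior of $t$. Your worry about non-$S_{1,1}$ vertices such as trimmed-subfin apexes never arises, because the bound inside each fin is on the total intersection with $t$, not on a boundary count.
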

\begin{proof}
	Let $t$ be any triangle that lies inisde a region of $S$.  In the case that $t$ lies in the exterior region of $S$, if necessary, we can split $t$ into $O(1)$ triangles such that each lies in a region in $S_{1,1}$. Therefore, it suffices to prove that if $t$ lies inside a region of $S_{1,1}$, then $t$ intersects $O(g(n,i))$ triangles in $\Delta_{i,j}$ that are subsets of $R_{i,j}$.  
	
	There is no vertex of $S_{1,1}$ in the interior of $t$.  
	%
	By Lemma~\ref{lem:g}, there are $O(i)$ subsegments in $\partial t \sqcap R_{i,j}$, where $\partial t$ denotes the boundary of $t$.  The triangle $t$ cannot strictly enclose any geodesic triangle in $\tilde{\Delta}_{i,j}$; otherwise, a vertex of $S_{1,1}$ would lie in the interior of $t$ by Q1 in Table~\ref{tb:Q}.  It follows that $t$ intersects a geodesic triangle  $\tau$ in $\tilde{\Delta}_{i,j}$ if and only if $\tau$ intersects a subsegment in $\partial t \sqcap R_{i,j}$.  Each subsegment in $\partial t \sqcap R_{i,j}$ intersects $O(\log n_i)$ geodesic triangles.  
	It follows that $t$ intersects $O(i\log n_i)$ geodesic triangles.  In a geodesic triangle $\tau$, there are $O(\log n_i)$ triangles in $\Delta_{i,j}$ that lie inside $\mathit{star}(\tau)$.  By Q1 in Table~\ref{tb:Q}, every edge in $\tilde{\Delta}_{i,j}$ is incident to a vertex of $S_{1,1}$.  Therefore, for each $\alpha \in \mathit{fins}(\tau)$, $t$ cannot contain two vertices of $\tail(\alpha)$; otherwise, $t$ would not be contained in a region in $S_{1,1}$.  Then, we can analyze as in the convex case that $t$ intersects $O(\log^2 n_i)$ triangles in $\Delta_{i,j}$ that lie inside $\alpha$.  As a result, $t$ intersects $O(i\log^3 n_i)$ triangles in $\Delta_{i,j}$ that are subsets of regions in $R_{i,j}$.
\end{proof}

Since C1--C5 in Table~\ref{tb:C} are satisfied, we can apply Theorem~\ref{thm:generic-3} to obtain the following result for connected subdivision.

\begin{theorem}
Let $S$ be a planar connected subdivision with $n$ vertices.  There is a
point-line comparison based data structure that uses $O(n)$ space and processes
any online sequence $\sigma$ of point location queries in $S$ in
$O(\mathrm{OPT} + n + \vert \sigma \vert \log (\log^* n))$ time, where {\rm OPT} is the minimum
time needed by any point location linear decision tree for $S$ to process
$\sigma$.  The time bound includes the $O(n)$ preprocessing time.
\end{theorem}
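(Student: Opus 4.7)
The plan is to invoke Theorem~\ref{thm:generic-3} with the algorithm ${\cal A}(f,g)$ that Section~\ref{sec:connected} constructs, using the parameter settings $c_1 = 31$, $c_2 = 25$, $f(k) = (\log_2 k)^{31}$, and $g(n,i) = i\log_2^3 n_i$.  Conditions C1--C5 in Table~\ref{tb:C} have already been verified for this choice in Section~\ref{sec:conditions}, so Theorem~\ref{thm:generic-3} is immediately applicable; the space bound $O(n)$ and the $O(\mathrm{OPT}+n)$ term are inherited directly from it.  What remains is to translate the ``otherwise'' summand $|\sigma_m|\log g(n,m)$ supplied by Theorem~\ref{thm:generic-3} into the claimed $O(|\sigma|\log(\log^* n))$ bound, where $m$ denotes the index of the highest layer ever built.

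Since $|\sigma_m|\le|\sigma|$, it suffices to show $\log g(n,m) = O(\log\log^* n)$.  Expanding, $\log g(n,m) = \log m + 3\log\log n_m$.  For the first term, I apply Lemma~\ref{lem:claim}(i) with $i=1$ to get $m - 1 \leq \log_2^* n_1 = \log_2^* n$, so $\log m = O(\log\log^* n)$.  For the second term I split into the two subcases that together characterize the ``otherwise'' branch of Theorem~\ref{thm:generic-3}: either P2 is violated at layer $m$ or P1 is.

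If P2 fails at layer $m$, then by definition $n_m < (c_1^2+c_1)^{c_1-1}$, a constant, so $\log\log n_m = O(1)$.  If instead P1 fails at layer $m$, then $f(n_m) < g(n,m)(\log_2 n_m)^{c_2+2}$, i.e.
\[
(\log_2 n_m)^{c_1} \;<\; m\log_2^3 n_m \cdot (\log_2 n_m)^{c_2+2},
\]
which with $c_1=31$ and $c_2=25$ simplifies to $\log_2 n_m < m$.  Combined with $m = O(\log^* n)$, this gives $\log\log n_m = O(\log\log^* n)$ in this subcase as well.  Therefore $\log g(n,m) = O(\log\log^* n)$ unconditionally in the ``otherwise'' branch, and when the main branch of Theorem~\ref{thm:generic-3} applies this term is simply absent.

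Substituting into the bound of Theorem~\ref{thm:generic-3} gives processing time $O(\mathrm{OPT} + n + |\sigma|\log(\log^* n))$ including the $O(n)$ preprocessing, and the space remains $O(n)$.  The only nontrivial step is the arithmetic in the P1-violated subcase, where it is essential that $c_1 - c_2 - 5 \geq 1$; this is precisely why the parameters were chosen with $c_1 - c_2 = 6$, and it is the one place where a misstep could inflate the overhead from $\log\log^* n$ to $\log\log n$.  Everything else is a direct consequence of the generic framework established in Section~\ref{sec:hierarchy} together with the layer construction and invariant analysis carried out in Sections~\ref{sec:connectedfirstsol}--\ref{sec:conditions}.
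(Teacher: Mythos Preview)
Your proof is correct and follows essentially the same approach as the paper: invoke Theorem~\ref{thm:generic-3}, use Lemma~\ref{lem:claim}(i) to bound $m$ by $O(\log^* n)$, and then argue that $\log g(n,m) = O(\log\log^* n)$ in the ``otherwise'' branch. The only difference is organizational: the paper case-splits on whether $\log_2 n_m < \log_2^* n + 1$ (and then shows that when $\log_2 n_m$ is large, P1 necessarily holds), whereas you case-split directly on which of P1 or P2 fails---your arithmetic $(\log_2 n_m)^{c_1-c_2-5} < m$ in the P1-fails case is exactly the contrapositive of the paper's observation that $\log_2 n_m \geq \log_2^* n + 1 \geq m$ forces P1 to hold. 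Both routes arrive at the same bound with the same dependence on $c_1 - c_2 = 6$.
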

\begin{proof}
	By Lemma~\ref{lem:claim}(i), the index $m$ of the highest level in the hierarchy of point location structures is at most $\log_2^* n + 1$.
	
	Suppose that $\log_2 n_m < \log_2^* n + 1$, then $g(n,m) = m \log_2^3 n_m = O(\log^* n) \cdot \log_2^3 n_m = O((\log^* n)^{4})$ and so when we apply Theorem~\ref{thm:generic-3}, the total query time is $O(\mathrm{OPT} + n + |\sigma|\log g(n,m)) = O(\mathrm{OPT} + n+ |\sigma| \log (\log^* n))$.  
	
	Suppose that $\log_2 n_m \geq \log_2^* n + 1$.  Then, $f(n_m) = (\log_2 n_m)^{c_1}
	\geq (\log_2^* n + 1) \cdot (\log_2 n_m)^{c_1-1} \geq m (\log_2 n_m)^{c_2+5} = g(n,m)(\log_2 n_m)^{c_2+2}$ as $c_1 = 31$ and $c_2 = 25$.  By Theorem~\ref{thm:generic-3}, if $n_m \geq (c_1^2+c_1)^{c_1-1}$, the total query time is $O(\mathrm{OPT} + n)$; if $n_m < (c_1^2+c_1)^{c_1-1}$, then $n_m = O(1)$ and $g(n,m) = m\log_2^3 n_m = O(m) = O(\log^* n)$, which implies that the total query time is $O(\mathrm{OPT} + n + |\sigma| \log g(n,m)) = O(\mathrm{OPT} + n + |\sigma|\log(\log^* n))$.  
\end{proof}

\section{Conclusion}

We have developed self-adjusting point location structures for convex and connected subdivisions under the point-line comparision model.  The structure for convex subdivision is optimal within the class of linear decision trees.  For connected subdivisions, there is still a very slowly growing overhead of $O(\log(\log^* n))$ per query.  The first open problem is thus to design a structure for connected subdivisions that is optimal within the class of linear decision trees.  Another research direction is to allow the subdivision to be updated efficiently while still offering an amortized query time that is sensitive to the query distribution.

\cancel{
The performance of our data structure is asymptotically optimal when compared with static point location linear decision trees.  It is an open problem to
obtain optimal performance when compared with linear decision trees that may
reorganize themselves.  This open problem may be difficult as it is related to
the dynamic optimality conjecture by Sleator and Tarjan~\cite{sleator85}, which
conjectures that the performance of a splay tree is no more than $O(n)$ plus a
constant times the time required by any binary search tree algorithm.  The
dynamic optimality conjecture is still open after over thirty years.
}

\end{document}